\def\dOi{9(4:6)2013}
\def\eg{{\em e.g.}}
\def\ie{{\em i.e.}}
\newcommand{\Rouge}[1]{
{\color{black}#1}%
}
\newcommand{\Bleu}[1]{\protect}
\newcommand{\un}{{\mathbf 1}}
\newcommand{\zero}{{\mathbf 0}}
\newcommand{\daimon}{{\scriptstyle \maltese}}
\newcommand{\chronicle}[1]{{\mathfrak{#1}}}
\newcommand{\design}[1]{{\mathfrak{#1}}}
\newcommand{\designset}[1]{{\mathrm{#1}}}
\newcommand{\behaviour}[1]{{\mathbf{#1}}}
\newcommand{\Bincarnation}[1]{|{\designset{#1}}|}
\newcommand{\Dincarnation}[2]{|{\design{#1}}|_{\designset{#2}}}
\newcommand{\normalisation}[1]{[\![#1]\!]}
\newcommand{\psdes}[2]{[\![ #1 , #2 ]\!]}
\newcommand{\normalisationSeq}[2]{\left<#1\!\!\leftarrow\!\!#2\right>}
\newcommand{\normalisationDes}[2]{\fullview{\normalisationSeq{\design{#1}}{\design{#2}}}}
\DeclareFontFamily{OT1}{pzc}{}
\DeclareFontShape{OT1}{pzc}{m}{it}{<-> [1.1] pzcmi8t}{} 
\DeclareMathAlphabet{\mathpzc}{OT1}{pzc}{m}{it}
\newcommand{\pathLL}[1]{\mathpzc{#1}}
\newcommand{\view}[1]{\raisebox{.3ex}{$\ulcorner$}{#1}\raisebox{.3ex}{$\urcorner$}}
\newcommand{\fullview}[1]{\raisebox{.3ex}{$\ulcorner\mkern-6mu\ulcorner\mkern-2mu$}{#1}\raisebox{.3ex}{$\mkern-2mu\urcorner\mkern-6mu\urcorner$}}
\newcommand{\views}[1]{\view{#1}}
\newcommand{\fullviews}[1]{\fullview{#1}}
\newcommand{\PoD}[1]{{\mathcal{P}}_{#1}}
\newcommand{\dai}{\design{Dai}}
\newlength{\dualwidth}
\newlength{\dualheight}
\newcommand{\dual}[2][1]{
\settowidth{\dualwidth}{$#2$}%
\settoheight{\dualheight}{$#2$}%
\makebox[\dualwidth][c]{\mbox{\rule{0cm}{#1\dualheight}$\Widetilde[#1]{#2}$}}
}
\newcommand{\PosOrder}{\olessthan}
\def\notPosOrderdef#1{\hbox{\hbox to -2pt{$#1/$\hss}$#1\PosOrder$}}
\newlength{\Overlineheight}
\newlength{\Overlinewidth}
\newcommand{\Overline}[2][\Overlinestretch]{%
\settowidth{\Overlinewidth}{$#2$}%
\settoheight{\Overlineheight}{$#2$}%
\setlength{\Overlineheight}{#1\Overlineheight}%
\rlap{$
\overline{
	\makebox[\Overlinewidth][c]{
		\rule{0cm}{\Overlineheight}
		}
	}$
}
#2
}
\newlength{\Widetildeheight}
\newlength{\Widetildewidth}
\newcommand{\Widetilde}[2][\Widetildestretch]{%
\settowidth{\Widetildewidth}{$#2$\hspace{1pt}}
\settoheight{\Widetildeheight}{$#2$}%
\setlength{\Widetildeheight}{#1\Widetildeheight}%
\rlap{$
\widetilde{
	\makebox[\Widetildewidth][c]{
		\rule{0cm}{\Widetildeheight} 
		}
	}$
}
#2
}
\begin{document}

\title[Incarnation in Ludics]{Incarnation in Ludics and maximal cliques of paths\rsuper*}

\author[C.~Fouquer\'e]{Christophe Fouquer\'e\rsuper a}	
\address{{\lsuper a}Universit\'e Paris 13, Sorbonne Paris Cit\'e, LIPN, CNRS, (UMR 7030), F--93430, Villetaneuse, France}	
\email{christophe.fouquere@lipn.univ-paris13.fr}  

\author[M.~Quatrini]{Myriam Quatrini\rsuper b}	
\address{{\lsuper b}IML--FRE 3529 Aix-Marseille Universit\'e, CNRS, Campus de Luminy, case 907, F--13288 Marseille cedex 9, France}	
\email{quatrini@iml.univ-mrs.fr}

\keywords{Ludics, Linear Logic, Incarnation, Normalization, Game Semantics}
\subjclass{F.4.1}
\ACMCCS{[{\bf Theory of computation}]:  Logic---Linear logic}

\thanks{\lsuper{a,b}This work has been partially funded by the French ANR projet blanc ``Locativity and Geometry of Interaction'' LOGOI ANR-10-BLAN-0213 02.}

\begin{revision}
  This is a revised and corrected version of the article originally
  published on October 16, 2013.
\end{revision}

\begin{abstract}
Ludics is a reconstruction of logic with interaction as a primitive notion, in the sense
that the primary logical concepts are no more formulas and proofs but cut-elimination
interpreted as an interaction between objects called designs.
When the interaction between two designs goes well, such two designs are said to be orthogonal.
A behaviour is a set of designs closed under bi-orthogonality.
Logical formulas are then denoted by behaviours. Finally proofs are interpreted as designs satisfying particular properties.
In that way, designs are more general than proofs and we may notice in particular that they are not typed objects.
{\em Incarnation} is introduced by Girard in Ludics as a characterization of ``useful'' designs in a behaviour. 
The incarnation of a design is defined as its subdesign that is the smallest one in the behaviour ordered by inclusion. 
It is useful in particular because being ``incarnated'' is one of the conditions for a design to denote a proof of a formula.
The computation of incarnation is important also as it gives a minimal denotation for a formula, and more generally for a behaviour. 
We give here a {\em constructive} way to capture the incarnation of the behaviour of a set of designs, without computing the behaviour itself. 
The method we follow uses an alternative definition of designs: rather than defining them as sets of chronicles, we consider them as sets of {\em paths}, a concept very close to that of play in game semantics that allows an easier handling of the interaction: the unfolding of interaction is a path common to two interacting designs. 

\end{abstract}

\maketitle\vfill

\section{Introduction}
\Rouge{
\subsection{Ludics: a theory of Logic based on interaction} The cut elimination procedure is one of the main facts of Proof Theory, as it corresponds to a notion of computation in computer science, the one used in functional programming. In Ludics, the logical theory that J.-Y. Girard developed in~\cite{DBLP:journals/mscs/Girard01}, the cut elimination moved from a central position towards a primitive one: technically, the {\it cut} is no more a formal rule, \ie, one of the elementary constructors of formal proofs; the cut-rule disappears, instead of which a notion of {\em interaction} arises which happens between cut-free proof-like objects called {\it designs}. A design is a forest with {\em actions} as nodes, where an action is an abstract view of an application of a rule. An interaction is a travel through two such forests. Two designs are said {\it orthogonal} when interaction terminates.

Such an approach resting on interaction has strong relations with the Geometry of Interaction project~\cite{DBLP:conf/colog/Girard88,towards} which aims at studying cut-elimination as an interaction between proofs seen as operators. As in Geometry of Interaction, the process of cut-elimination happens in Ludics between objects that are more general than proofs.%

Interaction between designs may also be understood as normalization in $\lambda$-calculus. As described by Terui~\cite{DBLP:journals/tcs/Terui11}, a design corresponds to a generalized untyped $\lambda$-term in which the abstraction/application duality appears as several $n$-ary abstraction/applications dualities.

\subsection{Ludics and Game Semantics} The approach developed in Ludics is closely related to another approach of calculus developed around a notion of interaction: the one occurring between player strategies in a game. Since the 90's,  beginning with seminal works of Lafont and Streicher~\cite{DBLP:conf/lics/LafontS91}, Game Semantics has been extremely fruitful for studying various fragments of Linear Logic or Classical Logic  in order to obtain 
full completeness results, from the multiplicative fragment~\cite{HylandOng93, DBLP:journals/jsyml/AbramskyJ94, DBLP:conf/lics/Loader94, DBLP:conf/lics/DevarajanHPP99,DBLP:conf/lics/AbramskyM99,DBLP:conf/lics/Lamarche95}, to Linear and Classical logics~\cite{Laurent04, Laurent05a,DalLagoLaurent08a,Laurent10}.  Considering event structures as a basis for Game Semantics, Mellies~\cite{DBLP:conf/lics/Mellies05, DBLP:conf/lics/MelliesT07} was able to prove full completeness for Linear Logic using {\em asynchronous} games: a formula is interpreted as an asynchronous game, and there is a full correspondence between proofs and strategies satisfying particular conditions.
It has 
been noticed that the basic concepts of Ludics may be expressed in terms of Game Semantics~\cite{DBLPconf/csl/Faggian02, DBLP:conf/csl/CurienF05,  DBLP:journals/corr/abs-cs-0501039} (see also~\cite{DBLP:journals/corr/abs-1104-0504} for a thorough presentation).
\begin{itemize}
\item an action is a {\em move}, the  abstraction of the application of a rule,
\item the sequence of actions used during interaction is a {\em play}, the cut-elimination steps,
\item a design is an {\em innocent strategy}, it is also a frame of a sequent calculus derivation.
\end{itemize}
However there is a fundamental difference between Game Semantics as it is generally used and Ludics.
Strategies are typed, while designs are {\it a priori} untyped.
More concretely, a game comes with a set of {\em plays}, \ie, sequences of {\em moves} that satisfy particular conditions, a {\em strategy} is nothing else but such a set of plays. In Ludics, a play is what results from the interaction between two designs, and 
 a game, what denotes a formula,
 is interpreted as a behaviour, \ie, a set of designs which is closed under bi-orthogonality. 
Notice that, considered as an element of a behaviour, only part of a design may be travelled during interactions with designs in the orthogonal of the behaviour, and this part has to be considered as a strategy in terms of Game Semantics.
In simple words, a behaviour constrains the use of a design, the design being viewed as a set of {\em potential} plays, hence only a {\em potential} strategy.

\subsection{Ludics: main results, new concepts} 

Starting from the sole notion of interaction, Ludics  defines new objects: 
designs which are completely defined by their interactions, as affirmed by the separation theorem~\cite{DBLP:journals/mscs/Girard01}. 
Then, once given objects of Ludics, it is possible to rebuild the logic. A formula is denoted by a {\em behaviour}; a proof of a formula is denoted by designs satisfying specific properties.
With this in mind, Ludics can be seen as a semantics of Linear Logic: (i) the meaning of a formula is a set of designs, (ii) a design may be viewed as abstracting a concrete, polarized and focalized proof (and taking into account infinity and failures). 
Besides the usual theorems (associativity, stability, $\dots$), two important properties are available within Ludics: internal and full completeness. Internal completeness results from the fact that the space of behaviours/formulas is fully describable in terms of multiplicative and additive connectives over behaviours/formulas.
Full completeness is obtained when one characterizes the correspondence between all proofs (of formulas) and specific designs (of behaviours\footnote{In fact behaviours together with a partial equivalence relation.}).
Ludics introduces also original additional concepts among which {\em locativity}, \ie, the replacement of a formula by its address. Making explicit locativity  plays an important role in recent versions of Geometry of Interaction~\cite{DBLP:journals/tcs/Girard11}, \cite{DBLP:journals/apal/Seiller12}. Another original new concept of Ludics is {\em incarnation}, a central concept in this paper.
Incarnation is a characterization of ``useful'' designs in a behaviour $\behaviour{E}$, \ie, designs necessary for defining $\behaviour{E}$ with respect to orthogonality. To be ``incarnated'', or ``material'', is one of the conditions for a design to denote a proof of a formula. The incarnation $\Bincarnation{\behaviour{E}}$ of a behaviour $\behaviour{E}$ is its set of incarnated designs, \ie, designs $\design{D}$ such that $\design{D} = \Dincarnation{D}{\behaviour{E}}$, the incarnation $\Dincarnation{D}{\behaviour{E}}$ of a design $\design{D}$ being defined as the smallest design included in $\design{D}$ and belonging to $\behaviour{E}$.
The computation of incarnation is important as it gives a kind of minimal presentation of behaviours, and in particular of the denotation of a formula or a type. 
A behaviour being defined as the closure of a set of designs with respect to orthogonality, it could be useful to have a {\em constructive} way to capture its incarnation, without computing the behaviour itself, particularly when considering not yet closed models (of programs, processes, ...).

\subsection{Our contribution}

Given a set $E$ of designs of the same base, not necessarily closed under bi-orthogonality, our aim is to compute the incarnation of the behaviour generated by this set, without having to fully determine this behaviour. 
Such a computation is difficult in general:
some pieces may be found in designs of the behaviour that are not already present in $E$, and the computation of the ones that are relevant with respect to incarnation does not seem simpler than computing the behaviour itself. 
The method we follow uses an alternative definition of designs: rather than using their original definition as sets of {\it chronicles}~\cite{DBLP:journals/mscs/Girard01}, 
we consider them as sets of {\em paths}, a concept very close to that of play in game semantics, that allows an easier handling of the interaction: the unfolding of an interaction is a path common to two interacting designs. 
Among the paths of a set of designs $E$, we characterize those that are really visited during the interaction with $E^\perp$, the orthogonal of $E$. 
Hence a two-step process may be followed:
designs in the incarnation of $E^\perp$ are built from particular maximal cliques of visitable paths and, these designs being computed, it remains to repeat the same operation for getting the incarnation of the behaviour generated by the set $E$.

The rest of this article is organized as follows:
\begin{itemize}
\item In section~\ref{sec:Ludics}, we recall the original definitions of Ludics objects, in particular designs as cliques of chronicles and the interaction process.
\item In section~\ref{sec:Paths}, we define a notion of path together with a definition of coherence on paths. We present an equivalent definition of a design as a clique of paths. 
\item We characterize in section~\ref{section:normalization_visitable} those paths that are visitable by interaction.
\item In section~\ref{sec:Incarnation}, we focus on incarnation. We show that incarnation of the dual is obtained from particular maximal cliques of visitable paths, from which we obtain a double-step process for computing the incarnation of the behaviour generated by a set of designs. 
\end{itemize}
}
\section{Basic facts about Ludics}\label{sec:Ludics}
\subsection{Ludics: an informal introduction from a proof theoretical point of view} 
Ludics~\cite{DBLP:journals/mscs/Girard01} is a reconstruction of logic with {\em interaction} as a primitive notion, in the sense that the primary logical concepts  are no more formulas and proofs but cut-elimination interpreted as an interaction between objects called {\em designs}. 
The structure of a design may be introduced by considering the following questions concerning proof structure and cut-elimination: what is necessary to know about a proof when we are interested in its behaviour during the process of cut-elimination? Can such a proof-object be completely defined by this behaviour? 
Notice that the cut elimination process consists in travelling
along cut-proofs of formulas by passing through each of their rules from the conclusion of the rule towards
their premisses. Therefore, determining each possible cut elimination process for a given formula is closely related to the search of all possible proofs of this formula. Hence,
 the starting question may be formulated as follows: what are the relevant parts of a proof-object with respect to proof search?

\newcommand{\xlong}{2.5}
\newcommand{\ylong}{.8}
\newcommand{\along}{.2}

\begin{figure}
\begin{center}
\begin{tikzpicture}
\filldraw[red!20] (0,0) rectangle (\xlong,\ylong);
\filldraw[blue!20] (\xlong,0) rectangle (2*\xlong,\ylong);

\filldraw[blue!20] (0,\ylong) rectangle (\xlong,2*\ylong);
\filldraw[red!20] (\xlong,\ylong) rectangle (2*\xlong,2*\ylong);

\draw (0,2*\ylong) rectangle (\xlong,3*\ylong);
\draw (\xlong,2*\ylong) rectangle (2*\xlong,3*\ylong);

\node at (\xlong/2,\ylong/2) {$\parr$};
\node at (3*\xlong/2,\ylong/2) {$\oplus$};
\node at (\xlong/2,3*\ylong/2) {$\otimes$};
\node at (3*\xlong/2,3*\ylong/2) {$\with$};
\node at (\xlong/2,5*\ylong/2) {multiplicative};
\node at (3*\xlong/2,5*\ylong/2) {additive};

\draw (2*\xlong,0) rectangle (3*\xlong,\ylong);
\draw (2*\xlong,\ylong) rectangle (3*\xlong,2*\ylong);
\node at (5*\xlong/2,\ylong/2) {{\em or}};
\node at (5*\xlong/2,3*\ylong/2) {{\em and}};

\draw (-\xlong,0) rectangle (0,\ylong);
\draw (-\xlong,\ylong) rectangle (0,2*\ylong);

\filldraw[blue!20] (-\xlong,2*\ylong) rectangle (0,3*\ylong);
\filldraw[red!20] (2*\xlong,2*\ylong) rectangle (3*\xlong,3*\ylong);
\node at (-\xlong/2,5*\ylong/2) {{\bf positive}};
\node at (5*\xlong/2,5*\ylong/2) {{\bf negative}};

\draw[line width=1pt,->] (\xlong/2,5*\ylong/2-\along) -- (\xlong/2,5*\ylong/2-2*\along);
\draw[line width=1pt,->] (3*\xlong/2,5*\ylong/2-\along) -- (3*\xlong/2,5*\ylong/2-2*\along);

\draw[line width=1pt,->] (5*\xlong/2-2*\along,3*\ylong/2) -- (5*\xlong/2-3*\along,3*\ylong/2);
\draw[line width=1pt,->] (5*\xlong/2-2*\along,\ylong/2) -- (5*\xlong/2-3*\along,\ylong/2);

\draw[line width=1pt,->] (0-\along,5*\ylong/2-\along) -- (0,5*\ylong/2-2*\along);
\draw[line width=1pt,->] (2*\xlong+\along,5*\ylong/2-\along) -- (2*\xlong,5*\ylong/2-2*\along);

\end{tikzpicture}
\end{center}
\caption{Polarity of connectives}
\label{fig:Polarities}
\end{figure}

As studied by Andreoli~\cite{DBLP:journals/logcom/Andreoli92}, Linear Logic sequent calculus admits a focalized presentation:
the proof search may be viewed as a recursive process alternating reversible (or asynchronous, negative) and non-reversible (or synchronous, positive) steps (see Fig.~\ref{fig:Polarities}). 
In such a bottom-up approach, as soon as there is a negative formula, \ie, a formula having as principal connective $\parr$ or $\with$ in the conclusion, such a negative formula is decomposed until its positive subformulas, \ie, formulas with 
$\otimes$ or $\oplus$ as principal connective. Otherwise, if there are only positive formulas in the conclusion, one of these positive formulas is decomposed until its negative subformulas. As proved by Andreoli~\cite{DBLP:journals/logcom/Andreoli92}, this focusing discipline is complete with respect to provability.
Going one step further, a {\em hypersequentialized} version of multiplicative-additive Linear Logic sequent calculus may be defined~\cite{DBLP:journals/logcom/Andreoli92,girard00,DBLP:journals/iandc/CurienF12} with only  two kinds of logical rules (the positive and the negative) beside the axioms, one strictly alternating with the other.
Formulas  and rules of the hypersequentialized sequent calculus are given in Fig.~\ref{fig:HScalculus}.

\begin{figure}
\renewcommand{\tabularxcolumn}[1]{>{\centering\arraybackslash}m{#1}}
\begin{tabularx}{\textwidth}{lXXX}
{\bf Formulas} & 
\multicolumn{3}{c}{$
P ~|~ \un ~|~ \zero ~|~ (F^\perp\otimes\dots\otimes F^\perp)\oplus\dots\oplus (F^\perp\otimes\dots\otimes F^\perp)
$}
 \\
~\\
{\bf Sequents} & 
\multicolumn{3}{c}{
either $\vdash \Delta$ or $F \vdash \Delta$ where $F$ is a formula, $\Delta$ is a sequence of formulas
}
 \\
~
\\
{\bf Axiom rules}
&
$
\infer
	{
	P \vdash P,\Delta
	}
	{
	}
$
&
$
\infer {\vdash \un}{}
$
&
$
\infer {\zero\vdash \Delta}{}
$
\\
~
\\
{\bf Cut rule}
&
\multicolumn{3}{c}{$
\infer
	{
	\Gamma \vdash \Delta_1,\Delta_2
	}
	{
	\Gamma \vdash A,\Delta_1
	&
	A \vdash \Delta_2
	}
$}
\\
~
\\
{\bf Negative rule}
&
\multicolumn{3}{c}{
$
\infer
	{
	(A_{11}^\perp\otimes \dots \otimes A_{1n_1}^\perp)\oplus\dots\oplus(A_{p1}^\perp\otimes\dots\otimes A_{pn_p}^\perp)\vdash\Delta
	}
	{
	\vdash A_{11},\dots,A_{1n_1},\Delta
	&
	\dots
	&
	\vdash A_{p1},\dots,A_{pn_p}, \Delta
	}
$
}
\\
~
\\
{\bf Positive rule}
&
\multicolumn{3}{c}{
$
\infer
	{
	\vdash(A_{11}^\perp\otimes \dots \otimes A_{1n_1}^\perp)\oplus\dots\oplus(A_{p1}^\perp\otimes\dots\otimes A_{pn_p}^\perp), \Delta
	}
	{
	A_{i 1}\vdash\Delta_1
	&
	\dots
	&
	A_{in_i}\vdash \Delta_{n_i}
	}
$
}
\\
&
\multicolumn{3}{c}{
where $\cup \Delta_k\subset\Delta$ and for $k,l\in\{1,\dots n_i\}$, $\Delta_k\cap\Delta_l=\emptyset$.
}
\end{tabularx}
\caption{Hypersequentialized calculus}
\label{fig:HScalculus}
\end{figure}

 We give below examples of part of proofs in the hypersequentialized calculus. 

\begin{exa}
Let $A$, $B$, $C$, $D$, $E$ be negative formulas and $F$ be a positive one, formulas $A\otimes B\otimes C$ and $A\otimes ((D \otimes E)\parr F)$ are positive. Their proofs in the hypersequentialized calculus begin in the following way:
$$
\infer
	{
	\vdash A\otimes B\otimes C
	}
	{
	\deduce{A^\perp \vdash}{\vdots}
	&
	\deduce{B^\perp \vdash}{\vdots}
	&
	\deduce{C^\perp \vdash}{\vdots}
	}
\hspace{2cm}
\infer
	{
	\vdash A\otimes ((D \otimes E) \parr F)
	}
	{
	\deduce{A^\perp \vdash}{\vdots}
	&
	\infer
		{
		((D \otimes E) \parr F)^\perp \vdash
		}
			{\infer{\vdash D \otimes E, F}{
				\deduce{D^\perp \vdash F}{\vdots}
				&
				\deduce{E^\perp \vdash}{\vdots}
			}
			}
	}
$$
The proof trees may also be represented by means of derivation trees where nodes are the rules involved in the proofs, noting $+$ when it is a positive rule, $-$  otherwise:
\begin{center}
\begin{tikzpicture}
\node at (0,0) {$(+,A\otimes B\otimes C,\{A^\perp,B^\perp,C^\perp\})$}; 
\node at (0,.6) {$\vdots$};
\end{tikzpicture}
\hspace{.1cm}
\begin{tikzpicture}
\node at (0,0) {$(+,A\otimes  ((D \otimes E) \parr F),\{A^\perp,((D\otimes E) \parr F)^\perp\})$};
\node at (-2,1.2) {$\vdots$};
\node at (2,1) {$(-, ((D\otimes E) \parr F)^\perp,\{D \otimes E, F\})$};
\node at (2,2) {$(+,D \otimes E , \{D,E\})$};
\node at (1,3.2) {$\vdots$};
\node at (3,3.2) {$\vdots$};
\draw (0,.3) -- (-2,.7);
\draw (0,.3) -- (2,.7);
\draw (2,1.3) -- (2,1.7);
\draw (2,2.3) -- (1,2.7);
\draw (2,2.3) -- (3,2.7);
\end{tikzpicture}
\end{center}
\label{exa:proofs}
\end{exa}

\noindent Let us outline that such a hypersequentialized presentation of proof objects is relevant for  our starting question. It enables one to get rid of irrelevant  parts of proofs with respect to proof-search or equivalently  cut elimination. For example, suppose that $A$, $B$ and $C$ are negative linear formulas and that we are looking for a proof of $A \otimes (B \oplus C)$ or for a proof of $(A \otimes B) \oplus (A \otimes C)$. We just need to find either some proofs of $A$ and $B$ or some proofs of $A$ and $C$. Equivalently in terms of cut-elimination, starting from a cut on $A \otimes (B \oplus C)$  or on  $(A \otimes B) \oplus (A \otimes C)$, the process continues either with cuts on $A$ and on $B$, or with cuts on $A$ and on $C$. Hence the distinction between formulas $A \otimes (B \oplus C)$ and  $(A \otimes B) \oplus (A \otimes C)$ is not relevant. Furthermore, it is useless to consider subformulas  $B \oplus C$, $A \otimes B$, $A \otimes C$. What is relevant is to consider subformulas with opposite polarity, here $A$, $B$ and $C$. This is realized in hypersequentialized discipline, where we keep only one of the two equivalent formulas, \eg\ $(A \otimes B) \oplus (A \otimes C)$, and where the logical rule enables to go directly to subformulas $A$, $B$ and $C$.

A still more radical choice is then made in Ludics: instead of (canonical) formulas, only addresses are kept. A subformula has an address relatively to the occurrence of a formula in which it appears. Such an address is the {\em locus} on which an {\em interaction}, \ie, a cut, may take place. In terms of Ludics, a rule that decomposes a formula into subformulas is replaced by a rule applied to a locus that generates a finite number of subloci. 
A locus is presented as a finite sequence of integers. Subloci generated by applying a rule to a locus $\xi$ are built by increasing the sequence $\xi$ with arbitrary distinct integers. 

\begin{exa}
The two pieces of proofs in (classic) Linear Logic sequent calculus:\\
\begin{center}
$\quad$  \shortstack{
 \shortstack{ 
 $\vdots$\\
                     $\vdash A$}
                     \hspace{1em}
  \shortstack{
                       $\vdots$\\
                     $\vdash B$} \\
                     $\hrulefill$\\
                     $\vdash A\otimes B$\\
                     $\hrulefill$\\
                     $\vdash (A\otimes B)\oplus(A\otimes C)$}      
                     $\quad$ and $\quad$
      \shortstack{
 \shortstack{
 $\vdots$\\
                     $\vdash A$}
                     \hspace{1em}
  \shortstack{
                       $\vdots$\\
                     $\vdash B$\\
                     $\hrulefill$\\
                     $\vdash B\oplus C$} \\
                     $\hrulefill$\\
                     $\vdash A\otimes(B\oplus C)$}
\end{center}

\noindent are both replaced in the hypersequentialized calculus by: $\quad$
                              \shortstack{
                     \shortstack{
                     $\vdots$\\
                     $A^\perp\vdash $}\hspace{2em}
                     \shortstack{
                     $\vdots$\\
                     $B^\perp\vdash $}\\
                     $\hrulefill$\\
                     $\vdash (A\otimes B)\oplus(A\otimes C)$}    \\
and are both replaced in Ludics by: $\quad$
                              \shortstack{
                     \shortstack{
                     $\vdots$\\
                     $\xi.1\vdash $}\hspace{2em}
                     \shortstack{
                     $\vdots$\\
                     $\xi.2\vdash $}\\
                     $\hrulefill$\\
                     $\vdash\xi$}

\label{loci}
\end{exa}
In the hypersequentialized calculus, one might expect that a proof search stops once it  arrives at an axiom. However in Ludics there are no more formulas, thus no more axioms. Obviously, there should exist a way to stop an interaction. Still using the proof search intuition, an interaction ends when one of the interacting designs gives up: this is expressed by means of a special rule in Ludics, called {\it daimon}.

Once the above generalization is performed, presenting designs as abstract proofs may remain ambiguous.
In fact, as it is the case in example~\ref{dessin}, two proof-like presentations of designs that only differ on the distribution of weakened contexts cannot be distinguished by interaction: it is not possible to find a counter-proof with which the result of the cut-elimination procedure would be distinct. 
In terms of Ludics, a rule is presented as an {\em action} of the form $(+/-,\xi,\{i_1,\dots,i_n\})$ where $i_j$ are integers concatenated as suffixes to $\xi$ in order to create subaddresses,
and designs are presented as trees (or more likely forests) where nodes are actions.

\begin{exa}
The two (complete) proof-like trees on the left are more likely represented by the tree of  actions on the right:
\begin{center}
$
\infer{\xi\vdash}
	{
	\infer{\vdash \xi.1, \xi.2, \xi.3}
		{
		\xi.1.4 \vdash \xi.2
		&
		\xi.1.7 \vdash \xi.3
		}
	}
$
and
$
\infer{\xi\vdash}
	{
	\infer{\vdash \xi.1, \xi.2, \xi.3}
		{
		\xi.1.4 \vdash \xi.2, \xi.3
		&
		\xi.1.7 \vdash
		}
	}
$
\hspace{.3cm}by\hspace{.3cm}
\raisebox{-.6em}{
\begin{tikzpicture}
\node at (0,0) {$(-,\xi,\{1,2,3\})$};
\node at (0,1) {$(+,\xi.1,\{4,7\})$}; 
\draw (0,.3) -- (0,.7);
\end{tikzpicture}
}
\end{center}
The object in Ludics is the following set of chronicles:
$$
\{(-,\xi,\{1,2,3\}) ~;
$$
$$
(-,\xi,\{1,2,3\})(+,\xi.1,\{4,7\})\}
$$
\label{dessin}
\end{exa}

\subsection{Designs as cliques of chronicles}
 The main object of Ludics, the {\bf design}, is defined by Girard~\cite{DBLP:journals/mscs/Girard01} as a set of pairwise coherent {\bf chronicles}, \ie\ a clique of chronicles, where chronicles are alternate sequences of {\bf actions}, the coherence relation being defined below. The actions themselves are built using a notion of {\bf locus}.
 
  \begin{defi}[Locus]
 A locus is a finite (maybe empty) sequence of integers.
  \end{defi}
\noindent{\sc Notation:} The loci  will be denoted by Greek letters: $\xi$, $\sigma$, \dots If $I$ is a finite set of integers, we denote $\xi.I$ the set $\{ \xi.i\;;\;i\in I\}$. 

 \begin{defi}[Action]
An {\bf action} $\kappa$ is
\begin{itemize}
\item either a positive proper action $(+,\xi,I)$ or a negative proper action $(-,\xi,I)$ where the locus $\xi$ is said the {\em focus} of the action, and the finite set of integers $I$ is said its {\em ramification},

\item or the positive action daimon denoted by $\daimon$.
\end{itemize}
 A locus $\xi.i$ is {\em justified} by an action $(+,\xi,I)$ when $i \in I$. By extension an action $(\epsilon,\xi.i,J)$ is {\em justified} by an action $(\overline{\epsilon},\xi,I)$ when $i \in I$, $\epsilon \in \{+,-\}$, $\overline{+} = -$ and $\overline{-} = +$. 
\end{defi}

In~\cite{DBLP:journals/mscs/Girard01} chronicles are defined as follows:

\begin{defi}[Chronicle]
A {\bf chronicle} $\chronicle{c}$ is a non-empty and finite alternate sequence of actions such that
\begin{itemize}
\item {\em Positive proper action:} A  positive proper action is either justified, \ie, its focus is built by one of the previous actions in the sequence, or it is called initial. 
\item {\em Negative action:} A negative action may be initial, in such a case it is the first action of the chronicle. Otherwise it is justified by the immediate previous positive action.
\item {\em Linearity:} Actions have distinct focuses.
\item {\em Daimon:} If present, a daimon ends the chronicle. 
\end{itemize}
\end{defi}

\begin{defi}[Coherence on Chronicles]
Two chronicles $\chronicle{c_1}$ and $\chronicle{c_2}$ are {\bf coherent}, noted $\chronicle{c_1} \coh_c \chronicle{c_2}$, when the two following conditions are satisfied:
\begin{itemize}
\item {\em Comparability:}  Either one extends the other or they first differ on negative actions, \ie, if $w\kappa_1 \coh_c w\kappa_2$ then either $\kappa_1 = \kappa_2$ or $\kappa_1$ and $\kappa_2$ are negative actions.
\item {\em Propagation:} When they first differ on negative actions and these negative actions have distinct focuses then the focuses of following actions in $\chronicle{c_1}$ and $\chronicle{c_2}$ are pairwise distinct, \ie, if $w(-,\xi_1,I_1)w_1\kappa_1 \coh_c w(-,\xi_2,I_2)w_2\kappa_2$ with $\xi_1 \neq \xi_2$ then $\kappa_1$ and $\kappa_2$ have distinct focuses.
\end{itemize}
\end{defi}

\noindent{\sc Notation:} An action is denoted $\alpha$, $\kappa$ or even $\kappa^{\epsilon}$ where $\epsilon$ is equal to $+$ or $-$, when we want to precise its polarity. $w$ denotes a sequence of actions. A chronicle is denoted by $\chronicle{c}$, $\chronicle{d}$, $\dots$  or by $\kappa_0\dots\kappa_n$ when we need to precise the actions occurring in the chronicle.\\

\begin{exa}~\\
\begin{minipage}{.66\textwidth}
To summarize, a chronicle is an alternate sequence of actions. A set of coherent chronicles may be presented as a tree (in fact, a forest when the first actions of chronicles are negative). Such a tree branches at negative nodes only and a branch is a chronicle. The tree on the right is a set of two coherent chronicles.
\end{minipage}
\begin{minipage}{.3\textwidth}
\raisebox{-.6em}{
\begin{tikzpicture}
\node at (0,0) {$(+,\xi,\{0,1\})$};
\node at (-1.2,1) {$(-,\xi.0,\{3,8\})$};
\node at (-1.2,2) {$(+,\xi.0.3,\{3,5\})$};
\node at (1.2,1) {$(-,\xi.1,\emptyset)$};
\node at (1.2,2) {$\daimon$};

\draw (0,.3) -- (-1.2,.7);
\draw (-1.2,1.3) -- (-1.2,1.7);
\draw (0,.3) -- (1.2,.7);
\draw (1.2,1.3) -- (1.2,1.7);
\end{tikzpicture}
}
\end{minipage}
\end{exa}

In order to be able to manipulate several chronicles together, it is convenient to consider chronicles with a given {\bf base}, \ie, a sequent of loci written $\Gamma\vdash\Delta$ such that $\Delta$ is a finite set of  loci and $\Gamma$ contains at most one locus and the loci belonging to $\Gamma\cup\Delta$ are pairwise disjoint, \ie, no locus is a sublocus of another one. When $\Gamma$ is empty, the base is called positive, otherwise it is called negative.
A chronicle $\chronicle{c}$ is said {\bf based on} $\Gamma\vdash\Delta$ provided that $\Gamma$ contains the focus of the initial negative action of $\chronicle c$ if it exists (otherwise $\Gamma$ is empty) and  $\Delta$ contains the focuses of initial positive actions (so a finite number).
Note that coherence of chronicles does not rely on their bases. In fact,  the definition of the base of a chronicle is liberal: there is not a unique base for each chronicle but several ones. Suppose $\Gamma\vdash\Delta$ is a base for a chronicle $\chronicle c$, and that $\Delta\subset\Delta'$, then $\Gamma\vdash\Delta'$ is also a base of $\chronicle c$.
We can observe that $\Gamma\cup\Delta$ must contain at least all the focuses of initial actions of $\chronicle c$ for $\Gamma\vdash\Delta$ to be a base of $\chronicle c$. 
Nevertheless one may consider the coherence of two chronicles either when the base of the first one is included in the base of the second one, or when the two bases are disjoint. In this latter case, the only possibility of coherence is when the bases are negative ones.

\begin{defi}[Designs, Slices, Nets]~
  \begin{itemize}
\item[$\bullet$] A {\bf design} $\design{D}$, based on
  $\Gamma\vdash\Delta$, is a set of chronicles based on
  $\Gamma\vdash\Delta$, such that the following conditions are
  satisfied:
\begin{itemize}[label=$-$]
\item  {\em Forest:} The set of chronicles is prefix closed.
\item  {\em Coherence:} The set is a clique of chronicles with respect to $\coh_c$. 
\item {\em Positivity:} A chronicle without extension in $\design{D}$ ends with a positive action.
\item {\em Totality:} $\design{D}$ is non-empty when the base is positive, in that case all the chronicles begin with a (unique) positive action.
\end{itemize}
\item[$\bullet$]  A {\bf slice} is a design $\design{S}$ such that if $w(-,\xi,I_1), w(-,\xi,I_2) \in \design{S}$ then $I_1 = I_2$. 
\item[$\bullet$]  A {\bf net} is a finite set of designs on disjoint bases. 
\end{itemize}
\end{defi}

A design may contain several chronicles ending with a negative action of same focus: this allows for representing the `with' `$\with$' additive connective of Linear Logic. A design may be split with respect to such situations to give rise to {\em slices}: a slice has a multiplicative structure.

\begin{exa}
The following example of design (design $\design{D}$) is based on $\vdash \xi$ and gives rise to two slices $\design{E}$ and $\design{F}$ (figures below).
 
\begin{center}
Design $\design{D} =$ \hspace{-4cm}
\raisebox{-.6em}{
\begin{tikzpicture}
\node at (0,0) {$(+,\xi,\{0,1\})$};
\node at (0,1) {$(-,\xi.0,\{4\})$};
\node at (0,2) {$(+,\xi.0.4,\{4\})$};
\node at (-4,1) {$(-,\xi.0,\{3,8\})$};
\node at (-4,2) {$(+,\xi.0.3,\{3,5\})$};
\node at (3,1) {$(-,\xi.1,\emptyset)$};
\node at (3,2) {$\daimon$};

\draw (0,.3) -- (0,.7);
\draw (0,1.3) -- (0,1.7);
\draw (0,.3) -- (-4,.7);
\draw (-4,1.3) -- (-4,1.7);
\draw (0,.3) -- (3,.7);
\draw (3,1.3) -- (3,1.7);
\end{tikzpicture}
}
\end{center}

\vspace{.3cm}

\begin{center}
Slice $\design{E} =$ \hspace{-.9cm}
\raisebox{-.6em}{
\begin{tikzpicture}
\node at (0,0) {$(+,\xi,\{0,1\})$};
\node at (-1.2,1) {$(-,\xi.0,\{3,8\})$};
\node at (-1.2,2) {$(+,\xi.0.3,\{3,5\})$};
\node at (1.2,1) {$(-,\xi.1,\emptyset)$};
\node at (1.2,2) {$\daimon$};

\draw (0,.3) -- (-1.2,.7);
\draw (-1.2,1.3) -- (-1.2,1.7);
\draw (0,.3) -- (1.2,.7);
\draw (1.2,1.3) -- (1.2,1.7);
\end{tikzpicture}
}
\quad\quad
Slice $\design{F} =$ \hspace{0cm}
\raisebox{-.6em}{
\begin{tikzpicture}
\node at (0,0) {$(+,\xi,\{0,1\})$};
\node at (0,1) {$(-,\xi.0,\{4\})$};
\node at (0,2) {$(+,\xi.0.4,\{4\})$};
\node at (2,1) {$(-,\xi.1,\emptyset)$};
\node at (2,2) {$\daimon$};

\draw (0,.3) -- (0,.7);
\draw (0,1.3) -- (0,1.7);
\draw (0,.3) -- (2,.7);
\draw (2,1.3) -- (2,1.7);
\end{tikzpicture}
}
\end{center}
 
\end{exa}

\noindent A proof-like presentation of a design is given by the following notion of {\bf design as a dessin}:\medskip

\begin{defi}[Design as dessin]~
A design as dessin based on $\Gamma\vdash\Delta$ is a tree of bases with root $\Gamma \vdash \Delta$ and built by means of the following rules: 
\begin{itemize}[label=$-$]
\item {\sc Daimon}
$$
\infer[\daimon]{\vdash\Delta}{}
$$
\item {\sc Positive rule}
$$
\infer[{(+,\xi,I)}]{\vdash \Delta, \xi}{\dots & \xi.i \vdash \Delta_i & \dots}
$$
for $ i\in I$, the $\Delta_i$ are pairwise disjoint and included in     $\Delta$.
\item {\sc Negative rule} 
$$
\infer[(-,\xi, \mathfrak N)]{\xi \vdash \Delta}{\dots & \vdash \xi.I, \Delta_I & \dots}
$$
${\mathfrak  N}$ is a set  (maybe empty or infinite) of ramifications. For all  $I\in {\mathfrak  N}$,  the $\Delta_I$'s are not necessarily disjoint and are included in  $\Delta$. 
\end{itemize} 

\end{defi}

\noindent With a design based on $\Gamma\vdash\Delta$, may be associated one or several proof-like presentations: a design as dessin of same base, with a positive rule for each positive action and a negative rule containing all the negative actions with the same focus. Nevertheless, such a presentation is not unique, as it is illustrated in the following example. 
\begin{exa} {\bf Designs and their proof-like presentation as dessins}\label{design-as-dessin}
\begin{itemize}
\item The design based on  $\vdash\Gamma$ which contains a unique chronicle, itself reduced to the action  $\daimon$ is denoted $\dai_+$. It has a unique proof-like presentation: 
$$
\infer[\daimon]{\vdash~ \Gamma}{}
$$
\item The empty design based on  $\xi\vdash$ is denoted by ${\design S}k_\xi$. It has also a unique proof-like presentation: the tree reduced to the root  $\xi\vdash$.
\item  The following design based on $\xi\vdash$:
 $$
 \begin{array}{cl}
 \design D=\{ &  (-,\xi,\{1,2,3\});\\
              & (-,\xi,\{1,2,3\})(+,\xi.1,\{4,7\})\}
                        \end{array}
                        $$
                        has several proof-like presentations, for example the two following ones:
\begin{center}
   \shortstack{
 \shortstack{  $\xi.1.4\vdash \xi.2$}
                     \hspace{1em}
  \shortstack{  $\xi.1.7\vdash \xi.3$} \\
                     $\hrulefill$\\
                     $\vdash \xi.1,\xi.2,\xi.3$\\
                     $\hrulefill$\\
                     $\xi\vdash$}      
                     $\quad$ or $\quad$
                     \shortstack{
     \shortstack{   $\xi.1.4\vdash \xi.2,\xi.3$}
                     \hspace{1em}
  \shortstack{  $\xi.1.7\vdash  $} \\
                     $\hrulefill$\\
                     $\vdash \xi.1,\xi.2,\xi.3$\\
                     $\hrulefill$\\
                     $\xi\vdash$}                             
\end{center}
 \item  	The following design is based on $\vdash\xi$:
 $$
 \begin{array}{cl}
 \design E=\{ &  (+,\xi,\{1,3\});\\
              & (+,\xi,\{1,3\})(-,\xi.1,\{0\});\\
                   & (+,\xi,\{1,3\})(-,\xi.1,\{0\})(+,\xi.1.0,\{0\});\\
                   &  (+,\xi,\{1,3\})(-,\xi.1,\{1\});\\
                    &  (+,\xi,\{1,3\})(-,\xi.1,\{1\})(+,\xi.1.1,\{0\})\\
                    &  (+,\xi,\{1,3\})(-,\xi.3,\{0\});\\
                    &  (+,\xi,\{1,3\})(-,\xi.3,\{0\})(+,\xi.3.0,\emptyset)\}
                        \end{array}
 $$
 Its unique proof-like presentation is:
$$
	\infer{\vdash \xi}
{
		\infer{\xi.1 \vdash}
			{
			\infer{\vdash \xi.1.0}
				{
				\xi.1.0.0 \vdash
				}
				&
				\infer{\vdash \xi.1.1}
				{
				\xi.1.1.0 \vdash
				}
			}
		&
		\infer{\xi.3 \vdash}
			{
			\infer{\vdash \xi.3.0}
				{
				\vdash
				}
			}
		}
	$$	
\end{itemize}
\end{exa}

\subsection{Interaction and Behaviours}

Interaction, \ie, cut elimination, is a normalization of particular nets of designs, called {\it cut-nets}~\cite{DBLP:journals/mscs/Girard01}. We give below the definition of interaction in the case of a closed cut-net, \ie, all addresses in bases are part of a cut. Therefore, one main design is distinguished because its base has no left part, \ie, is positive. 
The reader may find in~\cite{DBLP:journals/mscs/Girard01} the definition for the general case.

\begin{defi}[Closed cut-net]
Let $\design{R}$ be a net of designs, $\design{R}$ is a {\em closed cut-net} if 
\begin{itemize}
\item addresses in bases are either distinct or present twice, once in a left part of a base and once in a right part of another base,
\item the net of designs is acyclic and connex with respect to the graph of bases and cuts.
\end{itemize}
An address presents in a left part and in a right part defines a {\em cut}. 
\end{defi}

\begin{exa}\hfill
\begin{itemize}
\item A net of two designs of respective bases $\vdash \xi$ and $\xi \vdash$ forms a closed cut-net.
\item A net of two designs of respective bases $\xi \vdash \sigma$ and $\sigma \vdash \xi$ is not a closed cut-net: the graph of cuts is cyclic.
\item A net of three designs of respective bases $\vdash \xi$, $\xi \vdash \sigma$, $\sigma \vdash$ is a closed cut-net.
\end{itemize}
\end{exa}
 
\begin{defi}[Interaction on closed cut-nets]
Let $\design{R}$ be a closed cut-net. 
The design resulting from the interaction, denoted by $\normalisation{\design{R}}$, is defined in the following way:
let $\design{D}$ be the main design of $\design{R}$, with first action $\kappa$,
\begin{itemize}
\item if $\kappa$ is a daimon, then $\normalisation{\design{R}} = \{\daimon\}$,
\item otherwise $\kappa$ is a proper positive action $(+,\sigma,I)$ such that $\sigma$ is part of a cut with another design with last rule $(-,\sigma, {\mathcal N})$ (aggregating ramifications of actions on the same focus $\sigma$):
	\begin{itemize}
	\item If $I \not\in \mathcal N$, then interaction fails.
	\item Otherwise, interaction follows with the connected part of subdesigns obtained from $I$ with the rest of $\design{R}$.
	\end{itemize}
\end{itemize}
\end{defi}

\noindent Following this definition, either interaction fails, or it does not end, or it results in the design $\dai = \{\daimon\}$.
The definition of orthogonality follows:

\begin{defi}[Orthogonal, Behaviour]~
\begin{itemize}
\item Let $\design{D}$ be a design of base $\xi \vdash \sigma_1, \dots, \sigma_n$ (resp. $\vdash \sigma_1, \dots, \sigma_n$), \\
let $\design{R}$ be the net of designs $(\design{A}, \design{B}_1, \dots, \design{B}_n)$ (resp. $\design{R} = (\design{B}_1, \dots, \design{B}_n)$), where $\design{A}$ has base $\vdash \xi$ and $\design{B}_i$ has base $\sigma_i \vdash$, \\
$\design{R}$ belongs to $\design{D}^\perp$ if $\psdes{\design{D}}{\design{R}} = \dai$. 
\item Let $\designset{E}$ be a set of designs of the same base, $\designset{E}^\perp = \bigcap_{\design{D} \in \designset{E}} \design{D}^\perp$.
\item $\designset{E}$ is a {\em behaviour} if $\designset{E} = \designset{E}^{\perp\perp}$.
A behaviour is {\em positive} (resp. {\em negative}) if the base of its designs is positive (resp. negative).
\end{itemize}
\end{defi}

\begin{figure}
\begin{center}
\begin{tikzpicture}[x=70pt,y=30pt]
\node at (-.6,0) {$\design E=$};
\node at (0,0) {$(+,\xi,\{1,3\})$};
\node at (-1,1.3) {$(-,\xi.3,\{0\})$};
\node at (-1,2.6) {$(+,\xi.3.0,\emptyset)$};
\node at (0,1.9) {$(-,\xi.1,\{0\})$};
\node at (0,3.2) {$(+,\xi.1.0,\{0\})$};
\node at (1,2.5) {$(-,\xi.1,\{1\})$};
\node at (1,3.8) {$(+,\xi.1.1,\{0\})$};

\draw (0,.3) -- (-1,1);
\draw (0,.3) -- (0,1.6);
\draw (0,.3) -- (1,2.2);
\draw (-1,1.6) -- (-1,2.3);
\draw (0,2.2) -- (0,2.9);
\draw (1,2.8) -- (1,3.5);

\node at (3.6,0) {$=\design F$};
\node at (3,0) {$(-,\xi,\{1,3\})$};
\node at (3,1) {$(+,\xi.3,\{0\})$};
\node at (3,2) {$(-,\xi.3.0,\emptyset)$};
\node at (3,3) {$(+,\xi.1,\{1\})$};
\node at (2.4,4) {$(-,\xi.1.1,\{0\})$};
\node at (2.4,5) {$\daimon$};
\node at (3.6,4) {$(-,\xi.1.1,\{6,8\})$};
\node at (3.6,5) {$(+,\xi.1.1.6,\{1\})$};

\draw (3,.3) -- (3,.7);
\draw (3,1.3) -- (3,1.7);
\draw (3,2.3) -- (3,2.7);
\draw (3,3.3) -- (2.4,3.7);
\draw (2.4,4.3) -- (2.4,4.7);
\draw (3,3.3) -- (3.6,3.7);
\draw (3.6,4.3) -- (3.6,4.7);

\draw[dashed,->,red] (.45,0) -- (2.45,0);
\draw[dashed,->,red] (2.45,.1) -- (2.45,1);
\draw[dashed,->,red] (2.4,1) .. controls (.9,.9) .. (-.55,1.3);
\draw[dashed,->,red] (-.55,1.4) -- (-.55,2.6);
\draw[dashed,->,red] (-.5,2.6) .. controls (.1,2.5) and (1.7,1.9) .. (2.45,2);
\draw[dashed,->,red] (2.45,2.1) -- (2.45,3);
\draw[dashed,->,red] (2.4,3) .. controls (1.9,2.65) .. (1.45,2.5);
\draw[dashed,->,red] (1.45,2.55) -- (1.5,3.8);
\draw[dashed,->,red] (1.55,3.8) -- (1.9,4);
\draw[dashed,->,red] (1.9,4.1) -- (1.9,4.85);
\node[red] at (1.9,5) {$\daimon$};
\end{tikzpicture} 
\end{center}
\caption{Interaction (dashed line) between two (orthogonal) designs}
\label{fig:Interaction}
\end{figure}
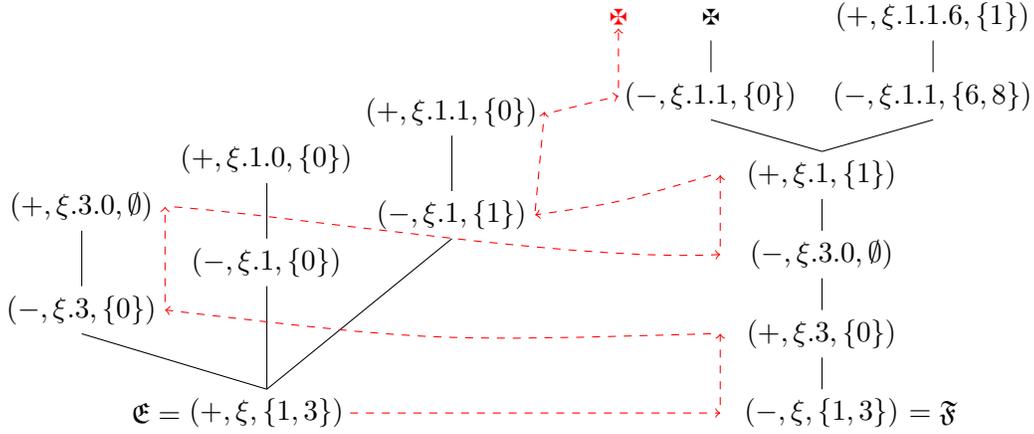

\begin{exa}
We present below a few examples of orthogonals of simple designs.
\begin{itemize}
\item In Fig.~\ref{fig:Interaction}, we present the interaction between the design $\design E$ defined in the fourth item of example~\ref{design-as-dessin} and a design $\design F$ that is one of its orthogonals.
\item 
Designs that are orthogonal to the design $\design D$ defined in the third item of example~\ref{design-as-dessin} are the design $\dai_+$ based on $\vdash\xi$, and designs including the design $\design G$ given below:
$$
\begin{array}{cl}
\design G=\{&(+,\xi,\{1,2,3\})~;\\
 & (+,\xi,\{1,2,3\})(-,\xi.1,\{4,7\})\daimon\}
\end{array}
$$
As already mentioned, there is no way to distinguish the two distinct proof-like  presentations of the design $\design D$ by interaction with a design in its orthogonal.
\end{itemize}
\end{exa}

 
\section{Designs as sets of paths}\label{sec:Paths}

 We characterize in section~\ref{sec:Incarnation} incarnation of a set of designs $E$ as particular cliques of {\em visitable paths in $E$}. 
 A visitable path in a set of designs $E$ is a sequence of actions $\pathLL{p}$ in a design $\design{D}$ of $E$ which are visited during a normalization with a net of designs of $E^\perp$. A clique of such paths is a set of pairwise coherent set of paths, where the coherence relation defined in~\ref{defi:coherence_path} generalizes the one given for chronicles.
What is called in this paper a visitable path is a {\em play} in terms of Game Semantics. We should stress again the fact that designs are untyped objects: what is used by interaction in a design $\design{D}$ depends on the set of designs in which $\design{D}$ is considered. Concretely, it is possible to consider which designs are orthogonal to a given design $\design D$ alone, or to a set of designs $E$ that contains $\design D$. Sequences of actions of $\design D$ that are visited during interaction are not necessarily the same for the first case (orthogonals to the set containing $\design D$ alone) or for the second case (orthogonals to each design of $E$, hence also to $\design D$): the set (when closed under bi-orthogonality) in which $\design{D}$ is considered gives the type. 
In section~\ref{section:normalization_visitable}, we call {\em normalization path} of a design $\design{D}$ a sequence of actions which are visited by interaction with {\em some} net of designs, without taking care of the set in which we may consider the design $\design{D}$: a normalization path may be viewed as a potential play in terms of Game Semantics.

The remainder of this section is devoted to a study of {\em paths based on a net}. The {\em base of a net} consists in the sets of loci that serve as initial addresses, hence initial actions of a design. Paths generalize normalization paths (and visitable paths) in that their definition do not take into account the dual aspect of a play as we notice below after having introduced the concept of view. In studies concerning Linear Logic and done by means of Game Semantics (\eg\ \cite{DBLP:conf/csl/FaggianH02,Laurent05a}), a play in a strategy is a sequence of actions that satisfies several conditions among which {\em visibility}. Visibility is one of the constraints that ensure that a play is a denotation of a cut-elimination process.
Visibility is defined by means of a {\em view} operation on sequences (present in Hyland and Ong-Nickau games~\cite{HylandOng93,DBLP:conf/lfcs/Nickau94}).   
The view allows to recover a chronicle from a sequence of actions, hence also from paths: a view is the subsequence obtained from some sequence of actions by jumping from a negative action to its justification, if any, that should be a positive action, and from a positive action to the action that immediately precedes.
In Fig.~\ref{fig:ExNegJump} on the left, the view of the sequence $\pathLL{p}$ is the dashed red sequence noted $\view{\pathLL{p}}$. Note that the concept of view derived from the one of play in Game Semantics has a primitive r\^ole in Ludics: a design is a set of chronicles and chronicles are the views of potential plays, of a potential innocent strategy. In other words, the concept of view is derived from the one of play in Game Sematics whereas the concept of path is derived from the one of chronicle in Ludics.

\begin{defi}[Base, Sequence, View]~
\begin{itemize}
\item A {\bf base of net} $\beta$ is a non-empty finite set of sequents of pairwise disjoint loci: $\Gamma_1\vdash\Delta_1$, \dots,
$\Gamma_n\vdash\Delta_n$ such that each $\Gamma_i$ contains exactly one locus $\xi_i$, except at most one which may be empty, and the $\Delta_j$ are finite sets.
\item A sequence of actions $\pathLL{s}$ is {\bf based on $\beta$} if an action of $\pathLL{s}$ either is hereditarily justified by an element of one of the sets $\Gamma_i$ or $\Delta_i$, or is a daimon which is, in the later case, the last action of $\pathLL{s}$. 
An action is {\em initial} if it is an element of one of the sets $\Gamma_i$ or $\Delta_i$.
\item Let $\pathLL{s}$ be a sequence of actions based on $\beta$, the {\bf view} $\view{\pathLL{s}}$ is the subsequence of $\pathLL{s}$
defined as follows:
\begin{itemize}[label=$-$]
\item $\view{\epsilon}=\epsilon$ where $\epsilon$ is the empty sequence;
\item $\view{\kappa}=\kappa$;
\item $\view{w\kappa^+}=\view{w}\kappa^+$;
\item $\view{w\kappa^-}=\view{w_0}\kappa^-$ where $w_0$ either is empty if $\kappa^-$ is initial or is the prefix of $w$ ending with the positive action which justifies $\kappa^-$.
\end{itemize}
\end{itemize}
\end{defi}

\noindent Proponent and Opponent players, as coined in Game Semantics, arise naturally in Ludics from the fact that actions are polarized and polarity alternates in chronicles: with respect to a design $\design{D}$, a positive action is played by Proponent whereas a negative action is played by Opponent. Opponent and Proponent are interchanged when considering an orthogonal to the design $\design{D}$. 
A sequence of actions is {\em visible} if the justification of every Proponent action $\kappa$ is in the Proponent view of $\kappa$, this is called P-visitability, and dually, if the justification of every Opponent action $\kappa$ is in the Opponent view of $\kappa$ , called O-visitability. 
 A path, see Def.~\ref{defi:path}, is defined as an alternate, justified, linear, total sequence of actions that satisfies the P-visitability (Opponent being not already known).
We prove in section~\ref{section:normalization_visitable} that a normalization path is a path such that its dual is also a path: in other words the two facets of visibility are satisfied, \ie, P- and O-visitability.
The constraints (Totality) and (Daimon) are added to care of positive bases: in case the base contains a sequent $\vdash \Delta$, either the path is reduced to the daimon or it begins with a positive action focused on a locus of $\Delta$.
%


\begin{figure}

\begin{center}
\scalebox{.9}{
\begin{tikzpicture}[>=latex]
\draw[->,black!40] (-7,0) -- (4.8,0);
\node at (5,0) {$\pathLL{p}$};

\node[black] at (-5.9,.4) {$\kappa'^-$};
\draw[black] (-6,-.1) -- (-6,.1);
\node at (-5.9,-.3) {$\alpha_n^-$};
\draw[black] (-5.6,-.1) -- (-5.6,.1);
\node at (-5.5,-.6) {$\alpha_n^+$};
\node at (-5,0.05) {\ldots};
\draw[black] (-4.5,-.1) -- (-4.5,.1);
\draw[<-,red] (-5.6,.1) .. controls (-5,.25) .. (-4.5,.1);
\node at (-4.4,-.3) {$\alpha_{n-1}^-$};

\node at (-3.35,0.15) {\ldots};
\node at (-3.35,-0.15) {\ldots};

\draw[black] (-2.3,-.1) -- (-2.3,.1);
\draw[black] (-1.9,-.1) -- (-1.9,.1);
\node at (-1.8,-.6) {$\alpha_{i+1}^+$};
\node at (-1.35,0.05) {\ldots};
\draw[black] (-.8,-.1) -- (-.8,.1);
\draw[<-,red] (-1.9,.1) .. controls (-1.35,.25) .. (-.8,.1);
\node at (-.7,-.4) {$\alpha_{i}^-$};
\draw[black] (-.4,-.1) -- (-.4,.1);
\node at (-.3,-.6) {$\alpha_{i}^+$};

\node at (.7,0.15) {\ldots};
\node at (.7,-0.15) {\ldots};

\draw[black] (1.8,-.1) -- (1.8,.1);
\draw[black] (2.2,-.1) -- (2.2,.1);
\node at (2.3,-.6) {$\alpha_{1}^+$};
\node at (2.75,0.05) {\ldots};
\draw[black] (3.3,-.1) -- (3.3,.1);
\draw[<-,red] (2.2,.1) .. controls (2.75,.25) .. (3.3,.1);
\node at (3.4,-.4) {$\alpha_{0}^-$};
\draw[black] (3.7,-.1) -- (3.7,.1);
\node at (3.8,-.6) {$\alpha_{0}^+$};
\node at (3.8,.4) {$\kappa^+$};

\draw[->,red] (3.55,.3) .. controls (-1.55,.95) .. (-5.86,.3);

\end{tikzpicture}
}
\end{center}
\caption{Path $\pathLL{p}$: Constraint of {\em negative jump} between $\kappa^+$ justified by $\kappa'^-$}
\label{fig:NegJump}
\end{figure}
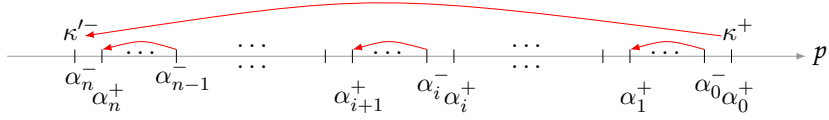

\begin{defi}[Path]\label{defi:path}
A {\bf path} $\pathLL{p}$ based on $\beta$ is a finite sequence of
actions based on $\beta$ such that
\begin{itemize}
\item {\em Alternation:} The polarity of actions alternate between
positive and negative.
\item {\em Justification:} A proper action is either justified, \ie, its
focus is built by one of the previous actions in the sequence, or it is
called initial with a focus in one of $\Gamma_i$ (resp. $\Delta_i$) if
the action is negative (resp. positive).
\item {\em Negative jump:} (There is no jump on positive action) Let $\pathLL{q}\kappa$ be a subsequence of $\pathLL{p}$,\\
- If  $\kappa$ is a positive proper action justified by a negative
action $\kappa'$ then $\kappa' \in \view{\pathLL{q}}$.\\
- If  $\kappa$ is an initial positive proper action then its focus belongs to
one $\Delta_i$  and either $\kappa$ is  the first action of $\pathLL{p}$
and $\Gamma_i$ is empty, or  $\kappa$ is immediately preceded in
$\pathLL{p}$ by a negative action with a focus hereditarily justified by an element
of $\Gamma_i\cup\Delta_i$.
\item {\em Linearity:} Actions have distinct focuses.
\item {\em Daimon:} If present, a daimon ends the path. If it is the
first action in $\pathLL{p}$ then one of $\Gamma_i$ is empty.
\item {\em Totality:} If there exists an empty $\Gamma_i$, then $\pathLL{p}$ is non-empty and begins either with $\daimon$ or with a positive action with a focus in $\Delta_i$.
\end{itemize}
\end{defi}

\noindent In the remainder of the paper, we use an explicit formulation for the justifier of an action to precede it in the sequence (negative jump): Let $\kappa$ be a positive proper action justified by a negative
action $\kappa'$, 
$\kappa' \in \view{\pathLL{q}}$
iff
there is a sequence
$\alpha_0^+\alpha^-_0\dots\alpha^-_n$  beginning with
$\kappa=\alpha_0^+$, ending with $\kappa'=\alpha_n^-$ and such that
$\alpha^-_i$ immediately precedes $\alpha^+_i$ in $\pathLL{p}$ and
$\alpha_{i+1}^+$ justifies $\alpha_i^-$.
Fig.~\ref{fig:ExNegJump} gives an example of a sequence that is not a path: the sequence $\pathLL{q}$ `jumps' from the negative action $(-,\xi.1.1.0,\{1\})$ in a chronicle to the positive action $(+,\xi.2.1,\{1\})$ that stays in another chronicle.
Remark also that the view $\view{\pathLL{q}}$ is not a chronicle. Proposition~\ref{chroniclebase} establishes that the view of a path is a chronicle, and we prove in proposition~\ref{prop:pathsTOnet} that views of a set of {\em pairwise coherent} paths define a net of designs.


\pgfdeclarelayer{background layer}
\pgfdeclarelayer{foreground layer}
\pgfsetlayers{background layer,main,foreground layer}

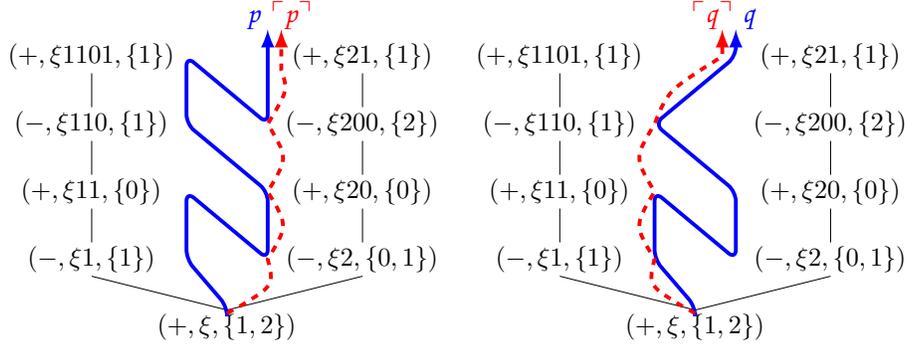
\begin{figure}

\begin{center}
\begin{minipage}{.40\textwidth}
\scalebox{.9}{
\begin{tikzpicture}[>=latex]

\node at (0,0) {$(+,\xi,\{1,2\})$};
\node at (-2,1) {$(-,\xi1,\{1\})$};
\node at (-2,2) {$(+,\xi11,\{0\})$};
\node at (-2,3) {$(-,\xi110,\{1\})$};
\node at (-2,4) {$(+,\xi1101,\{1\})$};
\node at (2,1) {$(-,\xi2,\{0,1\})$};
\node at (2,2) {$(+,\xi20,\{0\})$};
\node at (2,3) {$(-,\xi200,\{2\})$};
\node at (2,4) {$(+,\xi21,\{1\})$};

\draw (0,.25) -- (-2,.75);
\draw (-2,1.25) -- (-2,1.75);
\draw (-2,2.25) -- (-2,2.75);
\draw (-2,3.25) -- (-2,3.75);
\draw (0,.25) -- (2,.75);
\draw (2,1.25) -- (2,1.75);
\draw (2,2.25) -- (2,2.75);
\draw (2,3.25) -- (2,3.75);

\draw[->,blue,ultra thick,rounded corners] 
	(0,.2) -- (0,.3)
	-- (-.6,1) -- (-.6,2)
	-- (.6,1) -- (.6,2)
	-- (-.6,3) -- (-.6,4)
	-- (.6,3) -- (.6,4.4);

\node[anchor=base,blue] at (.4,4.5) {$\pathLL{p}$};

\draw[->,dashed,red,ultra thick] 
	(0,.2) .. controls(.7,.6) .. (.6,1)
	.. controls(.9,1.5) .. (.6,2)
	.. controls(.9,2.5) .. (.6,3)
	.. controls(.9,3.5) .. (.8,3.8)
	-- (.8,4.4);

\node[anchor=base,red] at (.95,4.5) {$\view{\pathLL{p}}$};

\end{tikzpicture}
}
\end{minipage}
\begin{minipage}{.40\textwidth}
\scalebox{.9}{
\begin{tikzpicture}[>=latex]

\node at (0,0) {$(+,\xi,\{1,2\})$};
\node at (-2,1) {$(-,\xi1,\{1\})$};
\node at (-2,2) {$(+,\xi11,\{0\})$};
\node at (-2,3) {$(-,\xi110,\{1\})$};
\node at (-2,4) {$(+,\xi1101,\{1\})$};
\node at (2,1) {$(-,\xi2,\{0,1\})$};
\node at (2,2) {$(+,\xi20,\{0\})$};
\node at (2,3) {$(-,\xi200,\{2\})$};
\node at (2,4) {$(+,\xi21,\{1\})$};

\draw (0,.25) -- (-2,.75);
\draw (-2,1.25) -- (-2,1.75);
\draw (-2,2.25) -- (-2,2.75);
\draw (-2,3.25) -- (-2,3.75);
\draw (0,.25) -- (2,.75);
\draw (2,1.25) -- (2,1.75);
\draw (2,2.25) -- (2,2.75);
\draw (2,3.25) -- (2,3.75);

\draw[->,blue,ultra thick,rounded corners] 
	(0,.2) -- (0,.3)
	-- (-.6,1) -- (-.6,2)
	-- (.6,1) -- (.6,2)
	-- (-.6,3) -- (.6,4)
	-- (.6,4.4);

\node[anchor=base,blue] at (.8,4.5) {$\pathLL{q}$};

\draw[->,dashed,red,ultra thick] 
	(0,.2) .. controls(-.7,.6) .. (-.6,1)
	.. controls(-.9,1.5) .. (-.6,2)
	.. controls(-.9,2.5) .. (-.6,3)
	.. controls(-.4,3.5) .. (.4,4)
	-- (.4,4.4);

\node[anchor=base,red] at (.25,4.5) {$\view{\pathLL{q}}$};
\end{tikzpicture}
}
\end{minipage}
\end{center}
\caption{The sequence $\pathLL{p}$ is a path, whereas $\pathLL{q}$ is not a path.}
\label{fig:ExNegJump}
\end{figure}



A path is {\em positive} or {\em negative} with respect to the polarity of its last action. If $P$ is a set of paths, we note $P^+$ the subset of $P$ consisting of positive paths, \ie, paths ending with a positive action.
Remark that a (non-empty in case the base is positive) prefix of a path is a path.
Remark also that a chronicle based on $\Gamma \vdash \Delta$ is a path.
Before we establish how paths may be viewed as chronicles, a few more facts must be stated.

 \begin{lem}\label{projection}
Let $\pathLL{p}$ be a path based on $\beta$, and $\kappa$ be a proper action occurring in $\view{\pathLL{p}}$. 
  \begin{itemize}
  \item If $\kappa$ is initial and negative in $\view{\pathLL{p}}$ then it is the only initial negative action occurring in $\view{\pathLL{p}}$ and it is the first action of $\view{\pathLL{p}}$.
  \item $\kappa$ is initial in $\pathLL{p}$  iff it is  initial in $\view{\pathLL{p}}$.
  \item $\kappa$ is justified by $\kappa'$ in $\pathLL{p}$ iff  $\kappa$ is justified by $\kappa'$ in $\view{\pathLL{p}}$.
  \end{itemize}
 \end{lem}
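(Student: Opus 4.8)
The plan is to prove the three items simultaneously by induction on the length of $\pathLL{p}$, after isolating a single technical fact about how the view interacts with prefixes. Two preliminary observations organise everything. First, by \emph{Linearity} all actions of $\pathLL{p}$ have pairwise distinct focuses, so each action occurs at most once and the justifier $\kappa'$ of a proper action $\kappa$ is determined by the focus of $\kappa$ alone (it is the action whose focus is that of $\kappa$ deprived of its last integer); hence ``$\kappa$ justified by $\kappa'$'' in any sequence reduces to ``$\kappa'$ occurs somewhere before $\kappa$''. Second, $\view{\pathLL{p}}$ is by construction a subsequence of $\pathLL{p}$, so membership and relative order are preserved when passing from $\view{\pathLL{p}}$ back to $\pathLL{p}$; this already yields the easy directions below.

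The technical core is the following \emph{locality} property: if $\pathLL{r}$ is a prefix of $\pathLL{p}$ whose last action also occurs in $\view{\pathLL{p}}$, then $\view{\pathLL{r}}$ is exactly the prefix of $\view{\pathLL{p}}$ ending with that action. I would prove this by induction on $|\pathLL{p}|$, inspecting the last action $\alpha$ of $\pathLL{p}=\pathLL{p}'\alpha$: if $\alpha$ is positive then $\view{\pathLL{p}}=\view{\pathLL{p}'}\alpha$ and one recurses on the full prefix $\pathLL{p}'$; if $\alpha$ is negative then $\view{\pathLL{p}}=\view{w_0}\alpha$, with $w_0$ the prefix of $\pathLL{p}'$ ending at the justifier of $\alpha$ (empty if $\alpha$ is initial), and one recurses on $w_0$. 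In both cases the actions surviving strictly below $\alpha$ are precisely those surviving in the shorter sequence, so the induction hypothesis applies. I expect this bookkeeping, matching the recursive clauses of the view against prefixes, to be the main (though essentially routine) obstacle of the proof.

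Granting locality, the central item is the justification equivalence. For the forward direction, suppose $\kappa$ occurs in $\view{\pathLL{p}}$ and is justified by $\kappa'$ in $\pathLL{p}$, and let $\pathLL{q}\kappa$ be the prefix of $\pathLL{p}$ ending at $\kappa$. If $\kappa$ is positive then $\kappa'$ is negative and the \emph{Negative jump} clause of Definition~\ref{defi:path} gives $\kappa'\in\view{\pathLL{q}}$; since $\view{\pathLL{q}\kappa}=\view{\pathLL{q}}\kappa$, locality identifies $\view{\pathLL{q}\kappa}$ with the prefix of $\view{\pathLL{p}}$ ending at $\kappa$, whence $\kappa'\in\view{\pathLL{p}}$ before $\kappa$. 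If $\kappa$ is negative then $\kappa'$ is positive and the view clause $\view{\pathLL{q}\kappa}=\view{w_0}\kappa$, where $w_0$ ends at $\kappa'$ and every view ends with the last action of its argument, places $\kappa'$ immediately before $\kappa$ in $\view{\pathLL{q}\kappa}$, hence, by locality, immediately before $\kappa$ in $\view{\pathLL{p}}$. The converse is the easy observation above: a justifier found in $\view{\pathLL{p}}$ sits before $\kappa$ in $\pathLL{p}$ as well, and the focus relation is intrinsic.

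The remaining two items then follow. In a path every proper action is either justified by a unique preceding action or initial, and these alternatives are exclusive; so by the justification equivalence $\kappa$ has a justifier in $\pathLL{p}$ iff it has one in $\view{\pathLL{p}}$, which is exactly the second item. For the first item, let $\kappa$ be initial and negative in $\view{\pathLL{p}}$, hence initial in $\pathLL{p}$ by the second item, and let $\pathLL{r}=w\kappa$ be the prefix of $\pathLL{p}$ ending at $\kappa$. The view clause for an initial negative action makes $w_0$ empty, so $\view{\pathLL{r}}=\kappa$; by locality this single action is the prefix of $\view{\pathLL{p}}$ ending at $\kappa$, i.e.\ $\kappa$ is the first action of $\view{\pathLL{p}}$. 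As the first action is unique, any initial negative action of $\view{\pathLL{p}}$ must coincide with it, giving both the uniqueness and the claimed position.
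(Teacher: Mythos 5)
Your proof is correct and follows essentially the same route as the paper's: the same prefix/locality observation (that $\view{w\kappa}$ is a prefix of $\view{\pathLL{p}}$ when $\kappa$ survives in the view), the same easy subsequence direction, and the same polarity case analysis using the view clause for negative actions and the Negative jump condition for positive ones. The differences are only presentational: you prove locality by an explicit induction and invoke the Negative-jump clause directly where the paper unfolds its explicit chain formulation, and you derive the first two items from the justification equivalence rather than interleaving them.
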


\begin{proof}
Let us observe that by construction if $\kappa$ is a proper action occurring in $\view{\pathLL{p}}$ and if $w\kappa$ is a prefix of $\pathLL{p}$ then $\view{w\kappa}$ is a prefix of $\view{\pathLL{p}}$.
\begin{itemize}
\item If  $\kappa$ is negative and initial in $\view{\pathLL{p}}$ then it is the first action of $\view{\pathLL{p}}$: we have $\pathLL{p} = w\kappa w'$ and $\view{\pathLL{p}} = \view{w\kappa}w'_0 = \kappa w'_0$. Hence it is unique.
\item By construction, if $\kappa$ is initial in $\pathLL{p}$ then it is initial in $\view{\pathLL{p}}$. 
Similarly, if $\kappa$ is justified by $\kappa'$ in $\view{\pathLL{p}}$ then  $\kappa$ is justified by $\kappa'$ in $\pathLL{p}$.
\end{itemize}
We prove the converse of the two last properties by considering separately the two polarities of $\kappa$.
\begin{itemize}
\item If $\kappa$ is negative and justified  by $\kappa'$ in $\pathLL{p}$ then, by construction, $\kappa$ is justified by $\kappa'$ in $\view{\pathLL{p}}$. Hence if $\kappa$ is initial in $\view{\pathLL{p}}$, it should be initial in $\pathLL{p}$.
\item If $\kappa$ is positive and justified by a negative action $\kappa'$ in $\pathLL{p}$ then there is a finite sequence 
 $\alpha_0^+\alpha^-_0,\dots\alpha^-_n$  beginning with $\kappa=\alpha_0^+$, ending with $\kappa'=\alpha_n^-$ and such that $\alpha^-_i$ immediately precedes $\alpha^+_i$ in $\pathLL{p}$ and $\alpha_{i+1}^+$ justifies $\alpha_i^-$. This means that the prefix $w\kappa$ of $\pathLL{p}$  may be written 
 $w_n\alpha^-_n\alpha^+_nw_{n-1}\dots\alpha^-_i\alpha^+_iw_i\dots\alpha_0^-\alpha^+_0$. Then 
\[
\view{w\kappa}=\view{w_n\alpha^-_n\alpha^+_nw_{n-1}\dots \alpha_0^-\alpha^+_0}
=\view{w_n}\kappa'\alpha^+_n\alpha^-_{n-1}\dots\alpha^-_i\alpha^+_i\dots\alpha_0^-\kappa
 \]
 Hence $\kappa$ is justified by $\kappa'$ in $\view{\pathLL{p}}$.
 Thus if $\kappa$ is initial in $\view{\pathLL{p}}$, then it should be initial in $\pathLL{p}$.
\end{itemize}
  \end{proof}

\begin{prop}\label{chroniclebase}
Let $\pathLL{p}$ be a non empty path based on $\Gamma_1\vdash\Delta_1$, \dots,  $\Gamma_n\vdash\Delta_n$. The sequence $\view{\pathLL{p}}$   is a chronicle based on $\Gamma_k\vdash\Delta_k$ for some $k\in\{1,\dots,n\}$.
\end{prop}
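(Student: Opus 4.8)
The plan is to verify directly that $\view{\pathLL{p}}$ satisfies every clause of the definition of a chronicle, and then to exhibit the component $\Gamma_k\vdash\Delta_k$ on which it is based. Throughout I would lean on Lemma~\ref{projection}, which already transfers the notions of \emph{initial} and \emph{justified} between $\pathLL{p}$ and $\view{\pathLL{p}}$, so that most of the structural clauses come almost for free. I would first record two elementary facts, proved by a routine induction on the defining clauses of $\view{-}$: the last action of $\view{\pathLL{p}}$ equals the last action of $\pathLL{p}$, and $\view{w}$ is non-empty as soon as $w$ is.

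These facts give at once that $\view{\pathLL{p}}$ is non-empty and finite, and, combined with the alternation of $\pathLL{p}$, that $\view{\pathLL{p}}$ is itself alternating: when the recursion appends a positive $\kappa^+$, the view so far ends on the negative path-predecessor of $\kappa^+$, and when it appends a negative $\kappa^-$ via $\view{w\kappa^-}=\view{w_0}\kappa^-$, it does so right after the positive action justifying $\kappa^-$ (or resets to $\kappa^-$ alone if $\kappa^-$ is initial). Linearity is inherited since $\view{\pathLL{p}}$ is a subsequence of $\pathLL{p}$, and the Daimon clause holds because a daimon, being positive and terminal in $\pathLL{p}$, is kept as the last action. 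For the justification clauses I would invoke Lemma~\ref{projection}: every positive proper action of $\view{\pathLL{p}}$ is initial or justified exactly as in $\pathLL{p}$, and the defining clause for $\view{w\kappa^-}$ shows that a negative action is either initial, hence by the first item of Lemma~\ref{projection} the unique such one and the first of the view, or immediately preceded by its justifier.

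It remains to pin down the base. I would let $\mu_1$ be the first action of $\view{\pathLL{p}}$; nothing precedes it to justify it, so by Lemma~\ref{projection} it is not justified in $\pathLL{p}$ either, i.e.\ it is initial. Define $k$ by requiring the focus of $\mu_1$ to lie in $\Gamma_k$ if $\mu_1$ is negative, or in $\Delta_k$ if $\mu_1$ is positive (the view reduced to $\daimon$ is handled by the Daimon clause of Definition~\ref{defi:path}, which furnishes an empty $\Gamma_k$). When $\mu_1$ is positive it is in fact the first action of $\pathLL{p}$, since for positive actions the view-predecessor coincides with the path-predecessor and $\mu_1$ has none; the Negative jump clause then makes $\Gamma_k$ empty, matching the absence of an initial negative action. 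Thus the $\Gamma$-side is automatic in both cases, and the only thing left to show is that \emph{all} initial positive actions of $\view{\pathLL{p}}$ have their focus in the single set $\Delta_k$.

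This last point is the main obstacle, and I expect to handle it by induction along $\view{\pathLL{p}}=\mu_1\cdots\mu_r$, carrying the stronger invariant that the hereditary justifier (root) of each $\mu_j$ lies in component $k$. For a negative $\mu_j$ ($j\ge 2$) the root is that of its view-predecessor, and for a justified positive $\mu_j$ the root is that of its justifier, which occurs earlier in the view; both are in $k$ by induction. The delicate case is an \emph{initial} positive $\mu_j$ with $j\ge 2$: since for positive actions the view-predecessor is the path-predecessor, $\mu_j$ is not the first action of $\pathLL{p}$, so the Negative jump clause of Definition~\ref{defi:path} forces $\mu_j$ to be immediately preceded in $\pathLL{p}$ by a negative action whose focus is hereditarily justified by an element of $\Gamma_i\cup\Delta_i$, where $\mathrm{focus}(\mu_j)\in\Delta_i$. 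That negative action is the view-predecessor $\mu_{j-1}$, whose root is in $k$ by induction; since the loci of distinct components of a base of net are pairwise disjoint, the hereditary root determines its component uniquely, forcing $i=k$ and hence $\mathrm{focus}(\mu_j)\in\Delta_k$. Collecting the clauses then shows $\view{\pathLL{p}}$ is a chronicle based on $\Gamma_k\vdash\Delta_k$.
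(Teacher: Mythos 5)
Your proposal is correct and takes essentially the same route as the paper: chronicle-hood is read off from Lemma~\ref{projection} and the recursive clauses of the view, and the base is pinned down by an induction whose crucial case (an initial positive proper action occurring after the first position) is settled by the second item of the Negative jump condition combined with the pairwise disjointness of the loci in the base of net. The only divergence is bookkeeping: the paper inducts on the length of $\pathLL{p}$, tracking the base of the view of each prefix, while you induct along the positions of $\view{\pathLL{p}}$ carrying a hereditary-root invariant; the substance of every case coincides.
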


\begin{proof}
Since $\pathLL{p}$ contains at least one action, $\view{\pathLL{p}}$ contains at least $\kappa$ where  $\kappa$ is the last action of $\pathLL{p}$.
By construction $\view{\pathLL{p}}$ is alternate.  As a consequence of lemma~\ref{projection} the proper actions of $\view{\pathLL{p}}$ whose focuses are not in one of the bases (hence not initial in $\pathLL{p}$) are justified in $\view{\pathLL{p}}$. And of course if present the daimon ends $\view{\pathLL{p}}$. Moreover, a negative action which is not initial is justified by the positive action which immediately precedes it in $\view{\pathLL{p}}$. Hence $\view{\pathLL{p}}$ is a chronicle.
We just have still to check that the chronicle $\view{\pathLL{p}}$ is based on one $\Gamma_k\vdash \Delta_k$. This is done by induction on the length of $\pathLL{p}$.
\begin{itemize}
\item $\pathLL{p}=\kappa$:\\
 Suppose that $\kappa$ is negative. It is initial, its focus is in one $\Gamma_k$, then $\view{\pathLL{p}}$ is the chronicle $\kappa$ based on $\Gamma_k\vdash\Delta_k$. 
 \\
 Suppose that $\kappa$ is positive. In such a case, there exists an empty $\Gamma_k$ and either $\kappa=\daimon$ or the focus of $\kappa$ is in $\Delta_k$. Thus $\view{\pathLL{p}}=\kappa$ is a chronicle based on $\vdash\Delta_k$. 

  \item $\pathLL{p}=w\kappa_1\kappa$. By induction hypothesis, for each non-empty strict prefix $\pathLL{q}$ of $\pathLL{p}$, there exists $k_\pathLL{q}$ such that $\view{\pathLL{q}}$ is a chronicle based on a $\Gamma_{k_\pathLL{q}}\vdash\Delta_{k_\pathLL{q}}$.\\
   Suppose that $\kappa$ is negative. If it is initial, its focus is in one $\Gamma_k$,  then $\view{\pathLL{p}}$ is the chronicle $\kappa$ based on $\Gamma_k\vdash\Delta_k$.  Otherwise, by construction $\kappa$ is justified by the positive action $\kappa'$ which immediately precedes $\kappa$ in $\view{\pathLL{p}}$. Let $w_0\kappa'$ be the prefix of $\pathLL{p}$ ending by $\kappa'$. It is a strict non-empty prefix, hence there exists $k$ such that $\view{w_0\kappa'}$ is a chronicle based on $\Gamma_k\vdash\Delta_k$ and $\view{\pathLL{p}}=\view{w_0\kappa'}\kappa$ is also a chronicle based on $\Gamma_k\vdash\Delta_k$.\\
Suppose that $\kappa$ is positive. We have that $\view{\pathLL{p}}=\view{w\kappa_1}\kappa$ and $\view{w\kappa_1}$ is a chronicle based on some $\Gamma_k\vdash\Delta_k$. Note that $\view{w\kappa_1}\daimon$ is a chronicle based on $\Gamma_k\vdash\Delta_k$. So we suppose now that $\kappa \neq \daimon$. If $\kappa$ is initial then  its focus  is one $\delta_j\in\Delta_j$ and   the focus of $\kappa_1$ is hereditarily justified by an element of $\Gamma_j\cup\Delta_j$.  Hence $j=k$ and $\view{\pathLL{p}}$  is a chronicle based on   $\Gamma_k\vdash \Delta_k$. If $\kappa$ is justified by a negative action $\kappa'$  then $\pathLL{p}=w_0\kappa'w_1\kappa$ and, by lemma~\ref{projection}, $\kappa'$ is in $\view{\pathLL{p}}=\view{w_0\kappa'w_1}\kappa$. Hence, as by induction $\view{w_0\kappa'w_1}$ is a chronicle based on some $\Gamma_k\vdash\Delta_k$, so is $\view{\pathLL{p}}$.
  \end{itemize}
\end{proof}

\noindent In the following, we develop further on this idea and show in proposition~\ref{prop:pathsTOnet} that we can build in a unique way a net of designs from the views of a set of paths. For that purpose, we must consider the views of all {\em prefixes} of paths: each prefix may define a particular chronicle. Furthermore, not only such paths must obviously have the same base but they must also be pairwise {\em coherent}, where this coherence relation generalizes the notion of coherence given for chronicles.
 
\begin{defi}[Coherence on paths]\label{defi:coherence_path}
Two paths $\pathLL{p_1}$ and $\pathLL{p_2}$ on the same base are
coherent, noted $\pathLL{p_1} \coh \pathLL{p_2}$, when: 
\begin{itemize}[label=$-$]
\item their first actions have same polarity: either positive and the first actions are the same or negative;
\item for all sequences $w_1\kappa_1^+$ and  $w_2\kappa_2^+$ respectively prefixes of $\pathLL{p_1}$ and  $\pathLL{p_2}$:
\begin{itemize}[label=\quad]
\item if $\view{w_1}=\view{w_2}$ then $\kappa^+_1=\kappa^+_2$;
\end{itemize}
\item for all sequences $w_1\kappa_1^-$ and  $w_2\kappa_2^-$, respectively prefixes of $\pathLL{p_1}$ and  $\pathLL{p_2}$, \\
let $w^0_1$ be either the empty sequence if $\kappa_1^-$ is initial or the prefix of $\pathLL{p_1}$ ending by the justification of $\kappa_1^-$, \\
let $w^0_2$ be either the empty sequence if $\kappa_2^-$ is initial or the prefix of $\pathLL{p_2}$ ending by the justification of $\kappa_2^-$,
\begin{itemize}[label=\quad]
\item if $\view{w^0_1}=\view{w^0_2}$ and $\kappa_1^-$ and $\kappa_2^-$ have distinct focuses
\item then for all actions $\sigma_1$ and $\sigma_2$ such that $w_1\kappa_1^-w_1'\sigma_1$  and  $w_2\kappa_2^-w_2'\sigma_2$ are respectively prefixes of $\pathLL{p_1}$ and $\pathLL{p_2}$, and such that  $\kappa^-_1\in\view{w_1\kappa_1^-w_1'\sigma_1}$ and $\kappa_2^-\in\view{w_2\kappa_2^-w_2'\sigma_2}$, $ \sigma_1$ and $\sigma_2$ have distinct focuses.
\end{itemize}
\end{itemize}
To rephrase the last item, $\kappa_1^-$ and $\kappa_2^-$ satisfying the conditions, $\sigma_1$ and $\sigma_2$ should have distinct focuses if they are ulterior actions hereditarily justified by $\kappa_1^-$ and $\kappa_2^-$ respectively in $\pathLL{p_1}$ and $\pathLL{p_2}$.

\end{defi}

The relation $\coh$ is reflexive. 
Note that, applied to chronicles, the notion of coherence on paths coincide with the notion of coherence on chronicles given by Girard~\cite{DBLP:journals/mscs/Girard01} and recalled in the previous section.
Hence, in the following, we only use the notation $\coh$ to speak of coherence between paths or between chronicles.
Moreover, the comparability property in the definition of coherence between chronicles is still satisfied between coherent paths: if $\pathLL{p_1} \coh \pathLL{p_2}$ then either one extends the other or they first differ on negative actions. 
In fact, if $w\kappa_1^+$ and $w\kappa_2^+$ are respective prefixes of $\pathLL{p_1}$ and  $\pathLL{p_2}$ with $\kappa_1^+ \neq \kappa_2^+$, then the condition of coherence is not satisfied, hence $\pathLL{p_1} \not\coh \pathLL{p_2}$.\bigskip

We show below that the union of views of prefixes of coherent paths such that maximal ones are positive forms a net of designs. Taking a unique path, we recover the fact that it gives rise to a net of slices.\bigskip

\noindent{\sc Notations}:\hfill
\begin{itemize}[label=$-$]
\item Let $\pathLL{p}$ be a non-empty path, $\pathLL{p}^*$ is the prefix closure of $\pathLL{p}$, \ie, the set of paths that are non-empty prefixes of $\pathLL{p}$. We put $\epsilon^* = \{\epsilon\}$.
\item Let $D$ be a set of paths of the same base, $\view{D} = \{\view{\pathLL{p}} ; \pathLL{p} \in D\}$ and $D^* = \bigcup_{\pathLL{p} \in D} \pathLL{p}^*$.
\item Let $\pathLL{p}$ be a pathh, $\view{\pathLL{p}^*}$ is noted
  $\fullview{\pathLL{p}}$. More generally, $\view{D^*}$ is noted
  $\fullview{D}$.
\end{itemize}

\begin{prop}\label{prop:pathsTOnet}
 Let $D$ be a non empty set  of non empty coherent paths based   on   $\beta$
such that maximal ones are positive and let $D^*$ be its closure by prefixes. The set of chronicles $\fullviews{D}$ defined as the union of views of paths of $D^*$ forms a net of designs based on $\beta$. 
  \end{prop}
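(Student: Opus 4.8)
The plan is to realise the net explicitly as the decomposition of $\fullviews{D}$ according to the component of $\beta$ on which each chronicle is based. By Proposition~\ref{chroniclebase} every $\view{\pathLL{q}}$ with $\pathLL{q}\in D^*$ is a chronicle based on some $\Gamma_k\vdash\Delta_k$, and since the loci of the $\Gamma_k\cup\Delta_k$ are pairwise disjoint this component is unique, being fixed by the focus of the initial action of $\view{\pathLL{q}}$. So I would set $\design{D}_k=\{\view{\pathLL{q}}:\pathLL{q}\in D^*,\ \view{\pathLL{q}}\text{ based on }\Gamma_k\vdash\Delta_k\}$, giving a disjoint union $\fullviews{D}=\bigcup_{k=1}^{n}\design{D}_k$. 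As the $n$ bases are pairwise disjoint by definition of a base of net, it then suffices to show that each $\design{D}_k$ is a design based on $\Gamma_k\vdash\Delta_k$ (an empty block being an admissible design on a negative base); the family $\{\design{D}_1,\dots,\design{D}_n\}$ is then a net based on $\beta$.

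The structural and boundary conditions are straightforward. For \emph{Forest} I would use the observation made in the proof of Lemma~\ref{projection}: if $w\kappa$ is a prefix of $\pathLL{q}$ and $\kappa$ occurs in $\view{\pathLL{q}}$, then $\view{w\kappa}$ is a prefix of $\view{\pathLL{q}}$. Hence every chronicle-prefix of a $\view{\pathLL{q}}\in\design{D}_k$ is the view of a prefix of $\pathLL{q}$, which lies in $D^*$ and shares the same initial action, so it lies again in $\design{D}_k$. For \emph{Positivity}, recall that the last action of a view equals the last action of the underlying sequence; if $\view{\pathLL{q}}$ ended negatively then $\pathLL{q}$ would end negatively, hence be a proper prefix of a maximal path of $D$, which is positive by hypothesis, and the next (positive) action would extend $\view{\pathLL{q}}$ inside $\design{D}_k$---so chronicles with no extension end positively. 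For \emph{Totality} on the unique positive component, the Totality clause of Definition~\ref{defi:path} forces every path of $D$ to begin with $\daimon$ or with a positive action focused in the corresponding $\Delta_i$; the one-action prefix then already supplies a chronicle beginning positively in that block, which is therefore non-empty, and by the ``based on'' clause all its chronicles begin with a positive action.

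The heart of the argument is \emph{Coherence}: for $\pathLL{q_1},\pathLL{q_2}$ in one block I must prove $\view{\pathLL{q_1}}\coh\view{\pathLL{q_2}}$ as chronicles. Since the clauses of Definition~\ref{defi:coherence_path} are universally quantified over prefixes, coherence of the maximal paths of $D$ restricts to $\pathLL{q_1}\coh\pathLL{q_2}$, and it remains to recover the two clauses of chronicle coherence. For comparability, if the two views first differed at a positive action after a common prefix $w$, I would take the path-prefixes $u_1,u_2$ immediately preceding the two positive actions; as these actions are positive, $\view{u_1}=\view{u_2}=w$, so the second clause of Definition~\ref{defi:coherence_path} forces the two positive actions to be equal, a contradiction---hence any first difference falls on negative actions. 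For propagation, write $\view{\pathLL{q_1}}=w\kappa_1^-v_1$ and $\view{\pathLL{q_2}}=w\kappa_2^-v_2$ with $\kappa_1^-,\kappa_2^-$ of distinct focus; letting $w_i\kappa_i^-$ be the path-prefix ending at $\kappa_i^-$ and $w_i^0$ as in the definition, the identity $\view{w_i\kappa_i^-}=\view{w_i^0}\kappa_i^-$ gives $\view{w_1^0}=\view{w_2^0}=w$. Then for any $\sigma_1$ occurring in $v_1$ and any $\sigma_2$ occurring in $v_2$, taking $s_i$ to be the prefix of $\pathLL{q_i}$ whose view ends at $\sigma_i$ one has $\kappa_i^-\in\view{s_i}$, so the third clause of Definition~\ref{defi:coherence_path} applies and yields that $\sigma_1$ and $\sigma_2$ have distinct focuses---exactly the propagation requirement.

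The delicate point, and the one I expect to require care, is this last clause. It would be tempting to read it as ``$\sigma_i$ is hereditarily justified by $\kappa_i^-$'', but that is false in general: the positive action opening $v_i$ may be justified by an action of $w$ rather than by $\kappa_i^-$. What saves propagation is the \emph{formal} membership condition $\kappa_i^-\in\view{s_i}$, which holds for every action of $v_i$ simply because $\kappa_i^-$ precedes $\sigma_i$ in the chronicle $\view{\pathLL{q_i}}$; invoking this, rather than the informal hereditary-justification rephrasing, is the key to the step. Once coherence is in place, the four conditions hold for each $\design{D}_k$, the blocks sit on the pairwise disjoint bases $\Gamma_k\vdash\Delta_k$, and $\{\design{D}_1,\dots,\design{D}_n\}$ is the required net based on $\beta$.
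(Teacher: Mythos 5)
Your proof is correct and takes essentially the same approach as the paper's: you decompose $\fullviews{D}$ according to the base sequent of each view (the paper indexes its blocks by the initial locus $\rho$, which amounts to the same partition), invoke Proposition~\ref{chroniclebase}, and verify Forest, Coherence, Positivity and Totality via the view-prefix observation and the restriction of path coherence to prefixes. Your treatment of Propagation through the formal membership condition $\kappa_i^-\in\view{s_i}$, rather than the hereditary-justification rephrasing, is if anything slightly more explicit than the paper's own wording, but the underlying argument is the same.
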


 \begin{proof}
Let $\beta = \Gamma_1\vdash \Delta_1, \dots, \Gamma_n\vdash \Delta_n$.
Let $\rho$ be either an element of one of $\Gamma_i$ or the focus of the first positive action of the paths of $D$ (and in this case there exists $i$ such that $\Gamma_i = \emptyset$). We consider the subset $D^*_\rho$ of $D^*$ defined in the following way: 
\[
D^*_\rho
=\{ \pathLL{p}\in D^*;\quad\view{\pathLL{p}}\mbox{ begins with an action focused on  }\rho\}
\] 
 We consider the set of chronicles defined by  $\views{D^*_\rho}=\{ \views{\pathLL{p}};\; \pathLL{p}\in D^*_\rho\}$.
Suppose $\views{D^*_\rho}$ is not empty. We know by proposition \ref{chroniclebase} that there exists $i$ such that these chronicles are all based on $\Gamma_i\vdash\Delta_i$, and either $\Gamma_i$ is empty and $\rho\in\Delta_i$ or $\Gamma_i = \{\rho\}$.
 Then  we check that $\views{D^*_\rho}$ is a design based on $\Gamma_i\vdash\Delta_i$:
\begin{itemize}
\item (Forest) Suppose that $\chronicle{c}\in \views{D^*_{\rho}}$ and let $w$ be a non empty prefix of $\chronicle{c}$. As $\chronicle{c}\in \views{D^*_{\rho}}$, there is a path $\pathLL{p}$ belonging to $D$ and a prefix  $\pathLL{p'}$ of $\pathLL{p}$, belonging to $D^*_\rho$, such that $\view{\pathLL{p'}}=\chronicle{c}$. The last action $\kappa$ of $w$ belongs to $\view{\pathLL{p'}}$. 
If $\kappa$ is the daimon, then it is the last action of $\chronicle{c}$, hence $w= \chronicle{c} \in \views{D^*_{\rho}}$.
Otherwise, $\kappa$ is a proper action occurring in $\view{\pathLL{p'}}$, hence in $\pathLL{p'}$, thus there exists $w_0$ and $w_0\kappa$ is a prefix of $\pathLL{p'}$. We already observed that in that case $\view{w_0\kappa}$ is a prefix of $\view{\pathLL{p'}}$ ending with $\kappa$. Hence $w=\view{w_0\kappa}$ belongs to $\views{D^*_{\rho}}$. 

\item (Coherence) Suppose that $w\kappa_1$ and $w\kappa_2$ belong to $\views{D^*_{\rho}}$. This means that there are two paths $\pathLL{p}_1$ and $\pathLL{p}_2$, belonging to $D$, and two prefixes $\pathLL{p}_1'$ and $\pathLL{p}_2'$ of respectively $\pathLL{p}_1$ and $\pathLL{p}_2$, belonging to $D^*_{\rho}$, such that $\view{\pathLL{p}_i'}=w\kappa_i$ for $i = 1,2$. 
Let us note $\pathLL{p}'_1=w_1\kappa_1$ and
$\pathLL{p}'_2=w_2\kappa_2$.
\begin{itemize}[label=$-$]
\item If $\kappa_1$ and $\kappa_2$ are positive then $\view{\pathLL{p}_i'}=\view{w_i\kappa_i} =\view{w_i}\kappa_i$ for $i = 1,2$, hence $\view{w_1}=w=\view{w_2}$. Thus, as $\pathLL{p}_1$ and  $\pathLL{p}_2$ are coherent, $\kappa_1 = \kappa_2$.
\item If $\kappa_1$ and $\kappa_2$ are negative with distinct focuses  then all the ulterior actions $\sigma_1$ and $\sigma_2$ hereditarely justified by $\kappa_1$ and $\kappa_2$ respectively in $\pathLL{p_1}$ and $\pathLL{p_2}$  have distinct focuses; then it is still true in $\view{\pathLL{p}_1} $ and $\view{\pathLL{p}_2}$, \ie, if $w_1\kappa_1^-w_1'\sigma_1$  and  $w_2\kappa_2^-w_2'\sigma_2$ are initial sequences respectively of $\pathLL{p_1}$ and $\pathLL{p_2}$ and are such that  $\kappa^-_1\in\view{w_1\kappa_1^-w_1'\sigma_1}$ and $\kappa_2^-\in\view{w_2\kappa_2^-w_2'\sigma_2}$, then
$ \sigma_1$ and $\sigma_2$ have distinct focuses. 
\end{itemize}

\item (Positivity) By construction since the maximal paths in $D$ are positive ones, chronicles in $\views{D^*_{\rho}}$ without extension end with a positive action.

\item (Totality) It follows from the Totality constraint on paths.
\end{itemize}
Note that sets $\views{D^*_{\rho}}$ are either empty or designs with disjoint sequents as bases. Hence the result.
\end{proof}

We know how to reconstruct some designs or nets of designs from cliques of paths. In fact a net of designs $\design{R}$ based on $\beta$ may be recovered from the set of {\em paths on $\design{R}$}, that is to say paths $\pathLL{p}$ based on $\beta$ such that $\fullview{\pathLL{p}}$ is a subnet of $\design{R}$:

\begin{prop}\label{prop:PathsNet}
 Let $\design{R}$ be a net of designs of base $\beta$. Let
 $\PoD{\design{R}}$ stand for the set of paths $\pathLL{p}$ based on
 $\beta$ such that $\fullview{\pathLL{p}}$ is a subnet of
 $\design{R}$. The following properties are satisfied:
\begin{itemize}[label=$-$]
\item If $\chronicle{c}$ is a chronicle of $\PoD{\design{R}}$ then it is a chronicle of $\design{R}$.
\item $\PoD{\design{R}}$ is prefix-closed.
\item If $\pathLL{p} \in \PoD{\design{R}}$ then $\view{\pathLL{p}} \in \PoD{\design{R}}$.
\item $\views{\PoD{\design{R}}} = \design{R}$
\end{itemize}
\end{prop}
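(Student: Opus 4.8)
The plan is to prove the four items in the order stated, each relying on the previous ones, and to use throughout three facts: that a chronicle is its own view, so that $\view{\chronicle{c}} = \chronicle{c}$ for every chronicle $\chronicle{c}$; proposition~\ref{chroniclebase}, which ensures $\view{\pathLL{p}}$ is always a chronicle; and the observation opening the proof of lemma~\ref{projection}, namely that if $w\kappa$ is a prefix of $\pathLL{p}$ with $\kappa$ a proper action surviving in $\view{\pathLL{p}}$, then $\view{w\kappa}$ is a prefix of $\view{\pathLL{p}}$. The first item is then immediate: if $\chronicle{c}$ belongs to $\PoD{\design{R}}$ and is a chronicle, then $\fullview{\chronicle{c}}$ is a subnet of $\design{R}$ by definition of $\PoD{\design{R}}$, and since $\chronicle{c} = \view{\chronicle{c}}$ lies in $\view{\chronicle{c}^*} = \fullview{\chronicle{c}}$, it is a chronicle of $\design{R}$.

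For prefix-closure, let $\pathLL{q}$ be a non-empty prefix of some $\pathLL{p} \in \PoD{\design{R}}$. By the earlier remark, $\pathLL{q}$ is again a path based on $\beta$. The essential point is monotonicity: every non-empty prefix of $\pathLL{q}$ is a non-empty prefix of $\pathLL{p}$, so $\pathLL{q}^* \subseteq \pathLL{p}^*$ and therefore $\fullview{\pathLL{q}} \subseteq \fullview{\pathLL{p}}$. As the chronicles of $\fullview{\pathLL{q}}$ are thus chronicles of the subnet $\fullview{\pathLL{p}}$ of $\design{R}$, the set $\fullview{\pathLL{q}}$ is itself a subnet of $\design{R}$, whence $\pathLL{q} \in \PoD{\design{R}}$.

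For the third item, proposition~\ref{chroniclebase} makes $\view{\pathLL{p}}$ a chronicle, hence a path based on $\beta$, and it remains to show $\fullview{\view{\pathLL{p}}}$ is a subnet of $\design{R}$. Since $\view{\pathLL{p}}$ is a chronicle, each of its non-empty prefixes is a chronicle and equals its own view, so $\fullview{\view{\pathLL{p}}}$ is exactly the set of non-empty prefixes of $\view{\pathLL{p}}$. Any such prefix ends in an action $\kappa$ occurring in $\view{\pathLL{p}}$, hence in $\pathLL{p}$; writing $\pathLL{p} = w\kappa w'$ and combining the observation above with linearity, which forces $\kappa$ to occur only once, this prefix equals $\view{w\kappa}$ and so belongs to $\fullview{\pathLL{p}}$. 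Hence $\fullview{\view{\pathLL{p}}} \subseteq \fullview{\pathLL{p}}$ is a subnet of $\design{R}$ and $\view{\pathLL{p}} \in \PoD{\design{R}}$.

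Finally I prove $\views{\PoD{\design{R}}} = \design{R}$ by double inclusion. For $\subseteq$, an element $\view{\pathLL{p}}$ with $\pathLL{p} \in \PoD{\design{R}}$ belongs to $\PoD{\design{R}}$ by the third item and is a chronicle, hence is a chronicle of $\design{R}$ by the first item. For $\supseteq$, a chronicle $\chronicle{c}$ of $\design{R}$ is a path whose prefix-closure consists of chronicles of $\design{R}$ (designs being prefix-closed), so $\fullview{\chronicle{c}}$ is a subnet of $\design{R}$ and $\chronicle{c} \in \PoD{\design{R}}$; as $\view{\chronicle{c}} = \chronicle{c}$, it lies in $\views{\PoD{\design{R}}}$. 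I expect the third item to be the main obstacle: identifying $\fullview{\view{\pathLL{p}}}$ with a subset of $\fullview{\pathLL{p}}$ is exactly where the bookkeeping of lemma~\ref{projection} together with linearity is genuinely needed, and where one must be careful that ``subnet'' is understood here as containment of chronicles rather than as the strict design conditions (positivity in particular) on the truncated view set.
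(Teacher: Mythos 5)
Your proof is correct and takes essentially the same approach as the paper, whose entire argument for this proposition is the terse observation that a prefix of a path is a path, that a chronicle is a path (equal to its own view), and that a view of a path is a chronicle, hence a path. You simply make explicit the bookkeeping the paper omits --- monotonicity of $\fullview{\cdot}$ under prefixes, the use of linearity to identify $\fullview{\view{\pathLL{p}}}$ as a subset of $\fullview{\pathLL{p}}$, and the double inclusion for the fourth item, which the paper's proof never even addresses --- and your closing caveat that ``subnet'' must be read as containment of chronicles rather than via the strict design conditions is precisely the reading the paper tacitly adopts.
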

\begin{proof}
As we noticed when defining a path, a prefix of a path is a path, a chronicle is a path, hence the first two items follow. Moreover a view of a path is a chronicle, hence a path.  
\end{proof}

We give in proposition~\ref{prop:characSetOfPaths} an inductive characterization of the set $\PoD{\design{R}}$. 
 We prove that chronicles (augmented first with a positive action if the base is positive) are paths, and that paths are obtained inductively by combining paths that are compatible in the sense given in proposition~\ref{createpath}:
  proposition~\ref{createpath} below gives a means for building a path $\pathLL{p}_1\pathLL{r}$ from two existing ones $\pathLL{p}_1$ and $\pathLL{p}_2 = \pathLL{q}\pathLL{r}$,
\ie\ to extend $\pathLL{p}_1$ by a suffix of $\pathLL{p}_2$: this extension is a {\em negative jump} at the end of $\pathLL{p}_1$, hence $\pathLL{p}_1$ must not end with a daimon, $\pathLL{r}$ must begin with a negative action, and actions in $\pathLL{r}$ must also satisfy a few compatibility conditions with respect to $\pathLL{p}_1$.
\begin{prop}\label{createpath}
Let $\pathLL{p}_1=w_1\kappa_1^-w'_1\kappa^+_1$ and $\pathLL{p}_2=w_2\kappa_2^-w'_2$ be two paths based on $\beta$.
The sequence $\pathLL{p}=w_1\kappa_1^-w'_1\kappa_1^+\kappa_2^-w'_2$ is
a path based on $\beta$ if the following conditions are satisfied:
\begin{itemize}[label=$-$]
\item $\kappa^+_1$ is a proper action,
\item $\view{w_2\kappa^-_2}w'_2$ is a path based on $\beta$,
\item actions in $\kappa_2^-w'_2$ and  $w_1\kappa_1^-w'_1\kappa^+_1$ have distinct focuses,
\item $\kappa^-_1$ and $\kappa_2^-$ are together initial, or are justified by the same positive action, or\footnote{In the case where $\beta$ contains a positive sequent $\vdash\Delta_i$ and $\kappa^+$ is anchored in $\vdash\Delta_i$.} $w_1=w_2=\kappa^+$, the action $\kappa_1^-$ is justified by $\kappa^+$ and $\kappa_2^-$ is initial, 
\item $\view{w_1\kappa^-_1}$ and $\view{w_2\kappa^-_2}$ coincide
  except on their last actions or are respectively equal to
  $\kappa^+\kappa_1^-$ and $\kappa^-_2$. 
\end{itemize}
\end{prop}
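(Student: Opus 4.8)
The plan is to verify directly that $\pathLL{p}=w_1\kappa_1^-w'_1\kappa_1^+\kappa_2^-w'_2$ meets each of the six requirements of Definition~\ref{defi:path}, exploiting that its prefix $\pathLL{p}_1=w_1\kappa_1^-w'_1\kappa_1^+$ is already a path and that $\kappa_2^-w'_2$ is a suffix of the path $\pathLL{p}_2$. The only genuinely new phenomenon is the negative jump from $\kappa_1^+$ to $\kappa_2^-$ glueing the two pieces together. Granting that, Alternation is immediate ($\kappa_1^+$ is positive, $\kappa_2^-$ is negative, and $\kappa_2^-w'_2$ is alternate as a suffix of $\pathLL{p}_2$); Linearity follows from the linearity of $\pathLL{p}_1$, of $\pathLL{p}_2$, and the third hypothesis that the two blocks $\kappa_2^-w'_2$ and $w_1\kappa_1^-w'_1\kappa_1^+$ have pairwise distinct focuses; the Daimon and Totality clauses hold because $\kappa_1^+$ is proper (so no $\daimon$ occurs before the end of $w'_2$) and because $\pathLL{p}$ starts exactly as $\pathLL{p}_1$ does, which already satisfies Totality on $\beta$.

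The technical heart is to control the views of prefixes of $\pathLL{p}$ along the block $\kappa_2^-w'_2$ so as to recover Justification and Negative jump. Writing $\design{R}=w_1\kappa_1^-w'_1\kappa_1^+\kappa_2^-$ and $\design{S}=\view{w_2\kappa_2^-}$, I would first prove the base identity $\view{\design{R}}=\design{S}$. When $\kappa_1^-$ and $\kappa_2^-$ are both initial this is clear, since each view jumps to the empty sequence and reduces to the corresponding single negative action. When they share a justifying positive action $\alpha$, the view of $\design{R}$ jumps back to $\alpha$ (which sits in $w_1$), so that $\view{\design{R}}=\view{w_1^{\le\alpha}}\kappa_2^-$, where $w_1^{\le\alpha}$ denotes the prefix ending at $\alpha$; the fifth hypothesis — $\view{w_1\kappa_1^-}$ and $\view{w_2\kappa_2^-}$ agree except on their last action — forces $\view{w_1^{\le\alpha}}=\view{w_2^{\le\alpha}}$, whence $\view{\design{R}}=\view{w_2^{\le\alpha}}\kappa_2^-=\design{S}$. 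The special base case $w_1=w_2=\kappa^+$ is settled identically using the second alternative of the fourth and fifth hypotheses.

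Next I would promote this to the inductive claim that $\view{\design{R}v}=\view{\design{S}v}$ for every prefix $v$ of $w'_2$. Positive steps are trivial since $\view{wu^+}=\view{w}u^+$. For a negative step $v=v'\mu^-$ justified by a positive action $\beta$, the observation recorded in the proof of Lemma~\ref{projection} — if $\beta$ occurs in $\view{w}$ then $\view{w_{\le\beta}}=\view{w}_{\le\beta}$ — lets me replace the jump target by the matching prefix of the view. If $\beta$ lies in $v'$ the induction hypothesis applies directly; if $\beta$ precedes $\kappa_2^-$, then the second hypothesis, that $\view{w_2\kappa_2^-}w'_2$ is a path, forces $\beta\in\design{S}=\view{\design{R}}$, so $\beta$ genuinely survives in $\design{R}$ and I again land on matching prefixes. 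Since $\view{\design{R}v}=\view{\design{S}v}$ is exactly the view of the corresponding prefix of the path $\view{w_2\kappa_2^-}w'_2$, every Justification and Negative-jump obligation for an action of $w'_2$ in $\pathLL{p}$ transfers verbatim from the fact that $\view{w_2\kappa_2^-}w'_2$ is a path, while the junction action $\kappa_2^-$ is handled by the fourth hypothesis (it is initial, or justified by the same positive action as $\kappa_1^-$, which occurs earlier in $\pathLL{p}$); the clauses for the $w_1\kappa_1^-w'_1\kappa_1^+$ block are inherited from $\pathLL{p}_1$. I expect the main obstacle to be precisely the negative step of this induction when $\beta$ precedes $\kappa_2^-$: it is the single point where the second and fifth hypotheses must be combined with the projection property of views, and where one must check that the jump target really belongs to $\pathLL{p}$ rather than only to $\pathLL{p}_2$.
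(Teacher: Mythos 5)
Your proof is correct, and its skeleton is the same as the paper's: dispatch Alternation, Linearity, Daimon, Totality and the obligations for the $w_1\kappa_1^-w'_1\kappa_1^+$ block as immediate, then concentrate on Justification and Negative Jump for actions of $w'_2$, transferring them from the hypothesis that $\view{w_2\kappa^-_2}w'_2$ is a path. Where you genuinely add something is in \emph{how} that transfer is justified. The paper exhibits the witnessing chain $\alpha_0^+\alpha_0^-\dots\alpha_n^-$ inside $\view{w_2\kappa_2^-}w'_2$ and then simply asserts that ``this sequence is completely included into $\pathLL{p}$''; its proof never explicitly invokes the fourth and fifth hypotheses, which are exactly what make that assertion legitimate. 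Your inductive view-identity $\view{\design{R}v}=\view{\design{S}v}$ --- with the base case $\view{w_1\kappa_1^-w'_1\kappa_1^+\kappa_2^-}=\view{w_2\kappa_2^-}$ proved from hypotheses four and five, and the inductive step handled by the projection observation from Lemma~\ref{projection} together with hypothesis two --- is precisely the missing justification: since, as the paper notes after Definition~\ref{defi:path}, the Negative Jump condition is equivalent to membership of the justifier in the view of the preceding prefix, equality of the two views transfers the condition (chain and all) verbatim, and it also yields Justification for free because a view is a subsequence of the sequence it is computed from. So your route is not so much a different argument as the paper's argument made rigorous; what it buys is an explicit localization of where each hypothesis of the proposition is used, which the paper's compressed proof leaves implicit, and it correctly isolates the only delicate point (a negative action of $w'_2$ whose justifier lies before the junction).
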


\begin{proof}
By construction  $\pathLL{p}$ is alternate, its proper actions have distinct focuses and are  either justified or   initial with a focus in $\beta$. Moreover, linearity and totality are satisfied and, if present, the daimon ends $\pathLL{p}_2$ hence ends the path. It remains to prove that the constraint (Negative Jump) is satisfied. Let $\kappa$ be an action in $\pathLL{p}$ then one has:\\
- If  $\kappa$ is a positive proper action occurring in $w'_2$ and  justified by a negative
action $\kappa'$ in $\pathLL{p}_2$, then $\kappa'$ belongs to $\view{w_2\kappa_2^-}w'_2$: the path $\view{w_2\kappa_2^-}w'_2$ should contain a justification for $\kappa$. Hence in $\view{w_2\kappa_2^-}w'_2$ there is a sequence
$\alpha_0^+\alpha^-_0\dots\alpha^-_n$  beginning with
$\kappa=\alpha_0^+$, ending with $\kappa'=\alpha_n^-$ and such that
$\alpha_{i+1}^+$ justifies $\alpha_i^-$ and
$\alpha^-_i$ immediately precedes $\alpha^+_i$ in $\view{w_2\kappa_2^-}w'_2$. Then  this sequence is completely included  into $\pathLL{p}$. \\
- If  $\kappa$ is a positive initial action occurring in $w'_2$ then its focus belongs to some $\Delta_i$ (in $\beta$) and  the only possibility is that  $\kappa$ is immediately preceded in
$\view{w_2\kappa_2^-}w'_2$ by a negative action with a focus hereditarily justified by an element
of $\Gamma_i\cup\Delta_i$.
 \end{proof}

\begin{prop}\label{prop:characSetOfPaths}
Let $\design{R}$ be a net of designs based on $\beta$.
The set of paths $\PoD{\design{R}}$ of $\design{R}$ may be exactly
characterized as follows:
\begin{itemize}[label=$-$]
\item if $\beta$ has only negative sequents, then $\design{R} \subset \PoD{\design{R}}$ and the empty path is in $\PoD{\design{R}}$;
\item if there exists a design $\design{D}$ of $\design{R}$ with a positive base, then $\design{D} \subset \PoD{\design{R}}$ and, for all chronicles $\chronicle{c}$ of $\design{R}$ beginning with a negative action, $\kappa^+\chronicle{c}\in\PoD{\design{R}}$ where $\kappa^+$ is the first action of chronicles of $\design{D}$;
\item if $\pathLL{p}_1=w_1\kappa_1^-w'_1\kappa^+_1$ and $\pathLL{p}_2=w_2\kappa_2^-w'_2$ are two paths belonging to $\PoD{\design{R}}$ and satisfying the conditions of proposition~\ref{createpath} then
$\pathLL{p}=w_1\kappa_1^-w'_1\kappa_1^+\kappa_2^-w'_2\in\PoD{\design{R}}$.
\end{itemize}
\end{prop}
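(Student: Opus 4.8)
The statement claims that $\PoD{\design{R}}$ is the least set of paths closed under the three listed clauses, so the plan is to prove two inclusions: \emph{soundness}, that each clause keeps us inside $\PoD{\design{R}}$, and \emph{completeness}, that every path of $\PoD{\design{R}}$ is produced by the clauses. Since $\pathLL{p}\in\PoD{\design{R}}$ means exactly that $\fullview{\pathLL{p}}$ is a subnet of $\design{R}$, throughout I would reduce membership to checking that the views of all prefixes of a sequence are chronicles of $\design{R}$.

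For soundness the two base clauses are direct. A chronicle $\chronicle{c}$ of $\design{R}$ is a path (Def.~\ref{defi:path}), and $\fullview{\chronicle{c}}$ is just the set of its prefixes, all chronicles of $\design{R}$ by the (Forest) condition, so $\chronicle{c}\in\PoD{\design{R}}$; for a negative-starting $\chronicle{c}$ the sequence $\kappa^+\chronicle{c}$ only adds the root $\kappa^+$ of $\design{D}$ to $\fullview{\chronicle{c}}$, again a chronicle of $\design{R}$. For the inductive clause, Proposition~\ref{createpath} already delivers that $\pathLL{p}=w_1\kappa_1^-w'_1\kappa_1^+\kappa_2^-w'_2$ is a path, so I would only check $\fullview{\pathLL{p}}\subseteq\design{R}$: prefixes ending inside $w_1\kappa_1^-w'_1\kappa_1^+$ have views in $\fullview{\pathLL{p}_1}$, while the last condition of Proposition~\ref{createpath} forces $\view{w_1\kappa_1^-w'_1\kappa_1^+\kappa_2^-}=\view{w_2\kappa_2^-}$, after which the view develops exactly as along $\pathLL{p}_2$, giving views in $\fullview{\pathLL{p}_2}$. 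Hence $\fullview{\pathLL{p}}\subseteq\fullview{\pathLL{p}_1}\cup\fullview{\pathLL{p}_2}\subseteq\design{R}$.

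For completeness I would induct on the length of $\pathLL{p}\in\PoD{\design{R}}$. If $\pathLL{p}$ has no negative jump then $\pathLL{p}=\view{\pathLL{p}}$ is a single chronicle of $\design{R}$, covered by the first clause when $\beta$ is negative and, when $\beta$ carries the positive design $\design{D}$, by a chronicle of $\design{D}$ or by the $\kappa^+\chronicle{c}$ form of the second clause. Otherwise I isolate the \emph{last} negative jump and write $\pathLL{p}=w_1\kappa_1^-w'_1\kappa_1^+\kappa_2^-w'_2$, with $\kappa_2^-$ not justified by its predecessor $\kappa_1^+$ and no jump inside $w'_2$. I would put $\pathLL{p}_1=w_1\kappa_1^-w'_1\kappa_1^+$, a strictly shorter prefix that lies in $\PoD{\design{R}}$ by prefix-closure (Proposition~\ref{prop:PathsNet}) and is produced by the clauses by the induction hypothesis; since the jump is the last one the view runs linearly through $w'_2$, so $\view{\pathLL{p}}$ is the view of the prefix up to $\kappa_2^-$ followed by $w'_2$, and I would take $\pathLL{p}_2=\view{\pathLL{p}}$, which is in $\PoD{\design{R}}$ by Proposition~\ref{prop:PathsNet} and, being a view, is a single chronicle, hence a base case. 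The suffix $w'_2$ is shared, so $\pathLL{p}_1$ and $\pathLL{p}_2$ recombine to $\pathLL{p}$; the third clause then applies once the hypotheses of Proposition~\ref{createpath} are verified.

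The main obstacle is precisely this last verification, namely matching the two negative actions $\kappa_1^-$ and $\kappa_2^-$ at the jump. I would choose $\kappa_1^-$ as the negative action sitting immediately below the justifier of $\kappa_2^-$ in $\view{\pathLL{p}_1}$, and then use Lemma~\ref{projection} and Proposition~\ref{chroniclebase} to show that $\kappa_1^-$ and $\kappa_2^-$ are both initial or are justified by the same positive action, and that $\view{w_1\kappa_1^-}$ and $\view{w_2\kappa_2^-}$ agree up to their last actions --- the fourth and fifth conditions of Proposition~\ref{createpath}. The genuinely delicate subcase is the positive base: when $\view{\pathLL{p}}$ begins with an initial negative action of a component, the right base case is the $\kappa^+\chronicle{c}$ clause, and one must invoke the special alternative of the fourth condition ($w_1=w_2=\kappa^+$, with $\kappa_1^-$ justified by $\kappa^+$ and $\kappa_2^-$ initial); reconciling this with (Totality) is the point that requires the most care.
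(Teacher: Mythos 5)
Your proposal is, in structure, exactly the paper's own proof: soundness via $\fullview{\pathLL{p}}\subseteq\fullview{\pathLL{p}_1}\cup\fullview{\pathLL{p}_2}$, and completeness by induction on length, cutting $\pathLL{p}$ at the last negative jump (the paper phrases this as the longest common suffix of $\pathLL{p}$ and $\view{\pathLL{p}}$, which is the same cut point), taking $\pathLL{p}_2=\view{\pathLL{p}}$ or $\kappa^+\view{\pathLL{p}}$, and recombining with the third clause. However, the step you yourself flag as the main obstacle is a genuine gap, and your proposed resolution does not close it. Your candidate $\kappa_1^-$, ``the negative action sitting immediately below the justifier of $\kappa_2^-$ in $\view{\pathLL{p}_1}$'', need not exist: by the very choice of the cut point, the justifier of $\kappa_2^-$ is not the last action of $\pathLL{p}_1$, and in general it does not occur in $\view{\pathLL{p}_1}$ at all. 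Worse, $\pathLL{p}_1$ may contain no negative action whatsoever sharing the justifier of $\kappa_2^-$, in which case the fourth and fifth conditions of Proposition~\ref{createpath} fail for \emph{every} choice of $\kappa_1^-$, so the third clause cannot be applied at this cut.

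Concretely, take the path of Figure~\ref{fig:ExNegJump} and name its actions $\kappa_1,\dots,\kappa_9$ in order of visit, so that $\kappa_6=(-,\xi110,\{1\})$ is justified by $\kappa_3=(+,\xi11,\{0\})$, $\kappa_8=(-,\xi200,\{2\})$ by $\kappa_5=(+,\xi20,\{0\})$, and $\kappa_9=(+,\xi21,\{1\})$ by $\kappa_4=(-,\xi2,\{0,1\})$. The last jump is at $\kappa_8$, so $\pathLL{p}_1=\kappa_1\dots\kappa_7$; its negative actions are justified by $\kappa_1$, $\kappa_1$ and $\kappa_3$, none by $\kappa_5$, and $\kappa_5\notin\view{\pathLL{p}_1}=\kappa_1\kappa_2\kappa_3\kappa_6\kappa_7$, so your $\kappa_1^-$ does not exist and no other choice satisfies condition four. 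Cutting earlier does not help: at $\kappa_6$ (resp.\ $\kappa_4$) condition two of Proposition~\ref{createpath} forces the glued sequence to be $\view{w_2\kappa_2^-}$ followed by the suffix, namely $\kappa_1\kappa_2\kappa_3\kappa_6\kappa_7\kappa_8\kappa_9$ (resp.\ $\kappa_1\kappa_4\kappa_5\kappa_6\kappa_7\kappa_8\kappa_9$), which is not a path because the justifier $\kappa_5$ of $\kappa_8$ (resp.\ $\kappa_3$ of $\kappa_6$) is missing; and at $\kappa_2$ the prefix $\kappa_1$ contains no negative action. So this path of $\PoD{\design{R}}$ is not generated by the three clauses at all. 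You should know that the published proof stumbles at exactly the same place: it takes for $\kappa_1^-$ the negative action immediately following the justifier of $\kappa_2^-$ in $\pathLL{p}_1$ (here $\kappa_6$) and then asserts conditions four and five, which is false on this very example. Your proposal therefore reproduces the paper's argument including its unrepaired step; as a standalone proof it does not go through, and a repair would have to weaken the gluing conditions of Proposition~\ref{createpath} (demanding only what the negative-jump constraint actually needs at the junction) rather than find a cleverer $\kappa_1^-$.
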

\begin{proof}
Let us give a net $\design{R}$ and a set $\mathbb P_{\design{R}} =
\bigcup_{i\geq 0} \mathbb P^i_{\design{R}}$ defined inductively as:
$\forall i\geq 0, \mathbb P^i_{\design{R}} \subset \mathbb
P^{i+1}_{\design{R}}$ and
\begin{itemize}[label=$-$]
\item If $\beta$ has only negative sequents, then $\design{R} \subset \mathbb P^0_{\design{R}}$ and the empty path is in $\mathbb P^0_{\design{R}}$;
\item If there exists a design $\design{D}$ of $\design{R}$ with a positive base, 
let $\kappa^+$ be the first action of its chronicles 
then $\design{D} \subset \mathbb P^0_{\design{R}}$ and for all chronicle $\chronicle{c}$ of $\design{R}$ beginning with a negative action, $\kappa^+\chronicle{c}\in\mathbb P^0_{\design{R}}$ ;
\item If $\pathLL{p}_1=w_1\kappa_1^-w'_1\kappa^+_1$ and $\pathLL{p}_2=w_2\kappa_2^-w'_2$ are two paths belonging to $\mathbb P^i_{\design{R}}$ and satisfying the conditions of proposition~\ref{createpath} then
$\pathLL{p}=w_1\kappa_1^-w'_1\kappa_1^+\kappa_2^-w'_2\in\mathbb
  P^{i+1}_{\design{R}}$.
\end{itemize}

\noindent We show below that $\PoD{\design{R}} = \mathbb P_{\design{R}}$.
It is straightforward that $\mathbb P^0_{\design{R}} \subset
\PoD{\design{R}}$. Moreover, with notations as in the last item,
$\fullview{\pathLL{p}} \subset \fullview{\pathLL{p}_1} \cup
\fullview{\pathLL{p}_2}$. Hence by induction one proves that $\mathbb
P_{\design{R}} \subset \PoD{\design{R}}$.

One proves the other inclusion by induction on the length of a path
$\pathLL{p}$ of $\PoD{\design{R}}$:
\begin{itemize}[label=$-$]
 \item The property is immediate if $\pathLL{p}$ is empty, a single action or a chronicle.
 \item If $\pathLL{p} = \kappa^-\kappa^+$ then $\pathLL{p}$ is a
   chronicle, if $\pathLL{p} = \kappa^+\kappa^-$ then either
   $\pathLL{p}$ is a chronicle, or $\kappa^-$ is a chronicle and
   $\kappa^+$ is the first action of a design of $\design{R}$ with
   positive base: in all these three cases $\pathLL{p}$ belongs to
   $\mathbb P_{\design{R}}$.

\item Otherwise $\pathLL{p}$ has length at least $3$ and is not a chronicle, hence $\view{\pathLL{p}}$ has at least two actions less than $\pathLL{p}$. 
Note that the longest common suffix of $\pathLL{p}$ and $\view{\pathLL{p}}$ is not empty and begins with a negative action, otherwise it would be $\pathLL{p}$ hence $\pathLL{p}$ would be a chronicle. 
Let us note this common suffix $\kappa_2^-w'_2$ and set $\pathLL{p}=\pathLL{p}_1\kappa_2^-w'_2$. The path $\pathLL{p}_1$ is a non-empty path of $\PoD{\design{R}}$ hence by induction an element of some $\mathbb P^i_{\design{R}}$. Moreover $\pathLL{p}_1$ ends with a positive proper action.
Either $\pathLL{p}_1$ is reduced to a positive action, hence the base of $\design{R}$ is positive and the result follows from the second item in the definition of $\mathbb P_{\design{R}}$.
Or $\pathLL{p}_1$ contains at least two actions if $\beta$ is negative or three actions if $\beta$ is positive.
Let $\pathLL{p}_2$ be defined in the following way: if the base of $\design{R}$ is negative or if this base is positive and $\view{\pathLL{p}}$ has a first action positive then $\pathLL{p}_2 = \view{\pathLL{p}}$, otherwise let $\kappa^+$ be the first action of the design of positive base and set $\pathLL{p}_2 = \kappa^+\view{\pathLL{p}}$. In the two cases, $\pathLL{p}_2 \in \mathbb P^0_{\design{R}}$ and is a path.
Moreover, $\kappa_2^-w'_2$ is a suffix of $\view{\pathLL{p}}$. 
Paths $\pathLL{p}_1$ and $\pathLL{p}_2$ satisfy conditions of
proposition~\ref{createpath}:
\begin{itemize}[label=$\cdot$]
\item take $\kappa^+_1$ as the last action of $\pathLL{p}_1$, hence it is a proper action,
\item take $w_2$ such that $\pathLL{p}_2 = w_2\kappa^-_2w'_2$ then $\view{w_2\kappa^-_2}w'_2 = \view{\pathLL{p}_2}$ is a path based on $\beta$,
\item no actions respectively  in $\kappa_2^-w'_2$ and  $\pathLL{p}_1$ have same focus,
\item if $\kappa_2^-$ is initial then let $\kappa^-_1$ be the first (initial)  action of $\pathLL{p}_1$ if $\beta$ is negative or be the second action of $\pathLL{p}_1$ otherwise. If  $\kappa_2^-$ is justified by a positive action then this last one is followed in $\pathLL{p}_1$ by a negative action $\kappa^-_1$: if it were not the case this positive action would be the last of $\pathLL{p}_1$ hence in the common suffix of $\pathLL{p}$ and $\view{\pathLL{p}}$ yielding a contradiction, 
\item $\view{w_1\kappa^-_1}$ and $\view{w_2\kappa^-_2}$ coincide
  except on their last actions which are respectively $\kappa_1^-$ and
  $\kappa_2^-$.
\end{itemize}
\end{itemize}
Hence $\pathLL{p} \in \mathbb P^{i+1}_{\design{R}}$. 
This ends the induction on the length of paths of $\PoD{\design{R}}$.
Thus $\PoD{\design{R}} \subset \mathbb P_{\design{R}}$.
\end{proof}


\section{Normalization paths, visitable paths}\label{section:normalization_visitable}

In~\cite{DBLP:journals/mscs/Girard01}, Girard introduces the concept of {\em dispute}. In a few words, a dispute is a possible travel in a cut-net of designs during normalization. 
Here we will focus on such disputes and more precisely on sequences of actions followed up on one side of an interaction.
In the sequel of this paper, we restrict ourself to particular closed cut-nets. Recall that a closed cut-net is a net of designs where all addresses in bases are part of a cut. 
We distinguish in such a net one design with base $\beta$ of the form
 $\xi\vdash\sigma_1,\dots,\sigma_n$ (resp. equal to $\vdash\sigma_1,\dots,\sigma_n$) and a net of designs which base is noted $\beta^\perp$ and that is equal to $\vdash\xi$, $\sigma_1\vdash$, \dots, $\sigma_n\vdash$ (resp.   equal to $\sigma_1\vdash$, \dots, $\sigma_n\vdash$). 
The previous section was devoted to define paths in nets of designs.
This section is concerned with {\em normalization paths}, \ie, sequences of actions that may be followed in a normalization. 
We prove that a normalization path is a path. However not all paths are normalization paths as normalization is a kind of `mirror' process: a path $\pathLL{p}$ may be followed by normalization iff its {\em dual} noted $\dual{\pathLL{p}}$ is a path, where $\dual{\pathLL{p}}$ has actions of opposite polarities and terminating with a daimon exactly  when $\pathLL {p}$ does not end with a daimon. 
Then we determine when $\dual{\pathLL{p}}$ is a path as a condition on $\pathLL{p}$.
We finish this section with a more general question: let $\designset{E}$ be a set of designs of the same base, is it possible to characterize paths on $\design{D} \in \designset{E}$ that may be visited by an element of $\designset{E}^\perp$?

\begin{defi}[Normalization path]
Let $(\design{D},\design{R}$) be a convergent closed cut-net such that all the cut loci belong to the base of $\design{D}$. The {\em normalization path} of the interaction of $\design{D}$ with $\design{R}$, denoted $\normalisationSeq{\design{D}}{\design{R}}$, is the sequence of actions of $\design{D}$ visited during the normalization. It  may be defined by induction on the number $n$ of normalization steps:
\begin{itemize}
\item Case $n=1$:
\begin{itemize}
\item If the interaction stops in one step: either $\design{D}=\dai$,
  in this case $\normalisationSeq{\design{D}}{\design{R}}=\daimon$, or
  the main design (which is not $\design{D}$) is equal to $\dai$ and
  in this case $\normalisationSeq{\design{D}}{\design{R}}$ is the
  empty sequence.
\item Otherwise let $\kappa^+$ be the first action of the main
  design. The first action of
  $\normalisationSeq{\design{D}}{\design{R}}$ is $\kappa^+$ if
  $\design{D}$ is the main design and is $\overline{\kappa^+}$
  otherwise\footnote{Where the notation $\overline{\kappa}$ is simply
    $\Overline{(\pm,\xi,I)} = (\mp,\xi,I)$ and may be extended on
    sequences by $\Overline{\epsilon} = \epsilon$ and
    $\Overline{w\kappa} = \Overline{w}\,\Overline{\kappa}$.}.
\end{itemize}

\item Case $n=p+1$: the prefix $\kappa_1\dots\kappa_p$ of $\normalisationSeq{\design{D}}{\design{R}}$ is  already defined.\\
Either the interaction stops and $\normalisationSeq{\design{D}}{\design{R}}=\kappa_1\dots\kappa_p$ if the main design is a subdesign of $\design{R}$, or $\normalisationSeq{\design{D}}{\design{R}}=\kappa_1\dots\kappa_p\daimon$ if the main design is a subdesign of $\design{D}$.\\
Or, let $\kappa^+$ be the first proper action of the closed cut-net obtained after step $p$, $\normalisationSeq{\design{D}}{\design{R}}$ begins with $\kappa_1\dots\kappa_p\overline{\kappa^+}$ if the main design is a subdesign of $\design{R}$, or it begins with $\kappa_1\dots\kappa_p\kappa^+$ if the main design is a subdesign of $\design{D}$.
\end{itemize}
We note $\normalisationSeq{\design{R}}{\design{D}}$ the sequence of actions visited in $\design{R}$ during the normalization with $\design{D}$.
\end{defi}

The previous definition makes sense because normalization of convergent closed cut-nets is a step by step procedure that satisfies lemma~\ref{lemme:cut-nets}.

\begin{lem}\label{lemme:cut-nets}
Let $(\design{D},\design{R}$) be a convergent closed cut-net such that the base of $\design{D}$ contains all the cut-loci.
Every normalization step (until the last one) is a convergent closed
cut-net  $(\design{D}',\design{R}'$) where:
\begin{itemize}[label=$-$]
\item $\design{D}'$ is a subdesign of $\design{D}$,
\item either $\design{D}'$ is the main design, or $\design{D}'$ is
  such that its first action is the dual of the one of the main
  design, that is a subdesign of $\design{R}$, or $\dai$ is a design
  of $\design{R}'$.
\end{itemize}
\end{lem}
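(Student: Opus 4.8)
The plan is to prove, by induction on the number of normalization steps, a strengthening of the statement: throughout the normalization the cut-net keeps a \emph{bipartite tree} shape, its designs splitting into descendants of $\design{D}$ (the $\design{D}$-side) and descendants of the designs of $\design{R}$ (the $\design{R}$-side), with every cut joining a design of one side to a design of the other, and with a distinguished $\design{D}$-side design $\design{D}'$ that is either the unique main design or is cut with it. The hypothesis that the base of $\design{D}$ contains all cut-loci is what forces this shape initially: writing $\design{R}=(\design{A},\design{B}_1,\dots,\design{B}_n)$ with $\design{A}$ of base $\vdash\xi$ and each $\design{B}_i$ of base $\sigma_i\vdash$, every cut of the starting net joins $\design{D}$ to one of $\design{A},\design{B}_1,\dots,\design{B}_n$, so the net is a star centred on $\design{D}$. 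Throughout I use that a connex closed cut-net has a unique design of positive base, namely the main one, so that $\design{D}'$ is unambiguously determined.

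For the base case I split on the polarity of the base of $\design{D}$. If it is positive, $\design{D}$ is itself the main design and the first alternative holds with $\design{D}'=\design{D}$. If it is negative, $\xi\vdash\sigma_1,\dots,\sigma_n$, the main design is $\design{A}\in\design{R}$ of base $\vdash\xi$, and since the initial negative action $(-,\xi,\dots)$ of $\design{D}$ is cut with the first action $(+,\xi,I)$ of $\design{A}$, the second alternative holds with $\design{D}'=\design{D}$. For the inductive step, consider a net satisfying the invariant whose main design plays a \emph{proper} positive action $(+,\tau,I)$. By the definition of interaction, $\tau$ is a cut-locus, hence shared by exactly one design of each side; convergence forces $I$ to lie in the ramification $\mathcal N$ of the negative partner, and the step replaces the cut on $\tau$ by cuts on the fresh subloci $\tau.i$ ($i\in I$), retaining the connected part. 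If the main design is $\design{D}'$ (first alternative), the new main design is the positive-base premise $\vdash\tau.I,\dots$ of the $\design{R}$-side partner, lying on the $\design{R}$-side; its first action, focused on some $\tau.i$, is cut with the premise $\tau.i\vdash\Delta_i$ of $\design{D}'$, which becomes the new $\design{D}'$, giving the second alternative. Symmetrically, if $\design{D}'$ reacts to the main design (second alternative), the new main design is the positive-base premise of $\design{D}'$ itself, on the $\design{D}$-side, which becomes the new $\design{D}'$, giving the first alternative. In both cases the new $\design{D}'$ is a premise of a subdesign of $\design{D}$, hence again a subdesign of $\design{D}$. Finally, the terminal net, where the main design reduces to $\dai$, falls under the first alternative if that $\dai$ descends from $\design{D}$ (then $\design{D}'=\dai$ is the main design) and under the third alternative, $\dai$ a design of $\design{R}'$, if it descends from $\design{R}$, since then the main design has no proper focus for $\design{D}'$ to mirror.

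It remains to check that each pair $(\design{D}',\design{R}')$ produced is again a convergent closed cut-net. Convergence is inherited because the normalization of $(\design{D}',\design{R}')$ is exactly the tail of the convergent normalization of $(\design{D},\design{R})$, which ends on $\dai$. The closed-cut-net conditions persist because the subloci $\tau.i$ created by the positive action occur once on the left, in $\tau.i\vdash\Delta_i$, and once on the right, in $\vdash\tau.I,\dots$, so they are genuine cuts, while the former cut on $\tau$ disappears; replacing the single edge carrying $\tau$ by the small subtree linking these new cuts keeps the graph of bases and cuts acyclic and connex, and it respects the bipartition since the two designs sharing each new cut sit on opposite sides. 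I expect the principal difficulty to be precisely this bookkeeping on the graph of cuts: one must argue that restricting to the connected part in the definition of interaction neither breaks connexity nor creates a cut internal to one side, and that the distinguished $\design{D}'$ stays uniquely determined — the most delicate point being the boundary where a $\daimon$ has just surfaced on the $\design{R}$-side, which is exactly what the third alternative records.
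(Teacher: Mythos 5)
Your overall strategy (induction on the number of normalization steps, strengthened by a ``bipartite'' invariant separating $\design{D}$-side and $\design{R}$-side designs) is essentially the paper's: the paper's auxiliary property~b) --- every newly created cut-locus occurs both in the base of a subdesign of $\design{D}$ and in the base of a subdesign of $\design{R}$ --- is exactly your bipartiteness. But your inductive step contains a genuine gap, in the case where the main design is the $\design{D}$-side design $\design{D}'$. After $\design{D}'$ plays $(+,\tau,I)$, the new main design is the $\design{R}$-side premise based on $\vdash \tau.1,\dots,\tau.n,\Gamma$, and you assert that its first action is ``focused on some $\tau.i$'', so that the new $\design{D}'$ is a premise $\tau.i\vdash\Delta_i$ of the old $\design{D}'$. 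This is false in general: the first action of the new main design may be focused on \emph{any} cut-locus of its base, in particular on a locus of the context $\Gamma$ created at an \emph{earlier} step of the normalization. In that case the design carrying the dual action is not a premise of the old $\design{D}'$ at all, but some other $\design{D}$-side design that has been waiting in the net since that earlier step; accordingly, your closing claim that ``the new $\design{D}'$ is a premise of a subdesign of $\design{D}$'' is also wrong in this case.

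A concrete convergent example: let $\design{D}$, based on $\vdash\xi$, have first action $(+,\xi,\{0,1\})$, with the premise $\xi.0\vdash$ continuing by $(-,\xi.0,\{0\})(+,\xi.0.0,\{0\})$ and the premise $\xi.1\vdash$ continuing by $(-,\xi.1,\{0\})\daimon$; let $\design{R}$, based on $\xi\vdash$, consist of the single branch $(-,\xi,\{0,1\})(+,\xi.0,\{0\})(-,\xi.0.0,\{0\})(+,\xi.1,\{0\})$. Normalization converges, but at the third step the main design is the $\design{R}$-side subdesign based on $\vdash\xi.0.0.0,\xi.1$ and it plays $(+,\xi.1,\{0\})$: the focus $\xi.1$ is not a freshly created sublocus (that would be $\xi.0.0.0$), and the new $\design{D}'$ is the design $\xi.1\vdash$ created at the very first step, not a premise of the previous $\design{D}'$. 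This is precisely the case the paper's proof isolates ($\xi\in\Gamma$): it argues that a locus of $\Gamma$ cannot be initial, hence was created during normalization, and then invokes the invariant~b) to exhibit a $\design{D}$-side subdesign based on $\xi\vdash\Xi$ whose first action, by convergence, is $(-,\xi,I)$. Your bipartite invariant is strong enough to run the same argument, but as written you never invoke it for this purpose; without that case your induction only treats the ``local'' interactions where play stays inside the pair of designs just consumed, and the lemma's second alternative is not established in general. (A smaller point: in your base case with $\design{D}$ negative you assume the main design of $\design{R}$ opens with a proper action on $\xi$; if it is $\dai$, only the third alternative applies, which your terminal-case discussion covers but your base case as stated does not.)
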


\proof By induction on the number of normalization steps, we prove that a) it occurs in a convergent closed cut-net  $(\design{D}',\design{R}'$) where $\design{D}'$ is a subdesign of $\design{D}$ and is either the main design or such that its first action is the dual of the one of the main design (except that if the main design is $\dai$) and b)  every newly  created  cut-locus appears in the base of a subdesign of $\design{D}$ and in the base of a subdesign of $\design{R}$.
\begin{enumerate}
\item We consider the first normalization step. The property  b) is satisfied by hypothesis; the property a) is deduced from it:
\begin{itemize}
\item either ${\design{D}}$ is the main design, it is located on a base $\vdash\sigma_1,\dots,\sigma_n$ while the other designs of the closed cut-net are located on $\sigma_i\vdash$,
\item or $\design{D}$ is located on $\xi\vdash\sigma_1,\dots,\sigma_n$. Since the base of $\design{D}$ contains all the cut loci, the other designs are  $\design{A}$ based on  $\vdash \xi$, and the designs $\design{B}_i$ based on $\sigma_i \vdash$. The main design is $\design{A}$ and since the interaction does not diverge, either $\design{A}$ is  $\dai$ or its first positive action $\kappa^+$ is such that the dual action $\overline{\kappa^+}$ is one of the first actions of $\design{D}$, then a) is satisfied.
\end{itemize}
\item We prove that the properties a) and b) are preserved during a normalization step:
\begin{itemize}
\item We consider a convergent closed cut-net $(\design{D}',\design{R}')$ where $\design{D}'$ is a subdesign of $\design{D}$ and is the main design based on $\vdash \sigma, \Delta$. If $\design{D}'$ is not $\dai$, let $\sigma$ be the focus of its first action. The next normalization step is on the closed cut-net which contains the same designs as before except that $\design{D}'$ is replaced by several designs $\design{D}'_i$ based on $\sigma i\vdash\Delta_i$ and the design based on $\sigma\vdash\Gamma$ is replaced by its subdesign $\design{A}'$ based on $\vdash \sigma 1,\dots,\sigma n,\Gamma$. The main design is now $\design{A}'$:
\begin{itemize}
\item Either its first action is $\daimon$  hence a) is satisfied, 
\item or it is a proper action $(+,\xi,I)$, where $\xi=\sigma {i_0}$ or $\xi\in\Gamma$. In the first case, since the interaction does not diverge, the dual action $(-,\sigma {i_0},I)$ is one of the first actions of the subdesign $\design{D}'_{i_0}$ of $\design{D}$, then a) is satisfied. Otherwise $\xi\in\Gamma$. The loci belonging to $\Gamma$ are not initial as in such a case they should appear in the base of $\design{D}$ in negative position, and this is not possible. So $\xi$ has been created during the normalization, and then is located in   subdesigns of $\design{D}$, that is that there is a subdesign of $\design{D}$ based on $\xi\vdash\Xi$ and (since the interaction does not diverge) having $(-,\xi,I)$ as first action. Then a) is satisfied. 
\end{itemize}
Moreover the only new cut loci are the $\sigma {i}$'s which are located on subdesigns of $\design{D}$. Hence b) is satisfied.
\item We consider a convergent closed cut-net in which the main design $\design{A}'$ is not $\dai$ and is such that the dual of its first action $(+,\sigma,I)$ is one of the first negative action of a subdesign $\design{D}'$ based on $\sigma\vdash\Gamma$.  The next normalization step is on the closed cut-net which contains the same designs as before except that $\design{D}'$ is replaced by its subdesign $\design{D}''$ based on $\vdash\sigma 1,\dots,\sigma n,\Gamma$ and some subdesigns $\design{A}_i'$ based on $\sigma i\vdash\Delta_i$. Since $\design{D}''$ is the main design, a) is satisfied. Moreover the only new cut loci are the $\sigma {i}$'s which are located on $\design{D}''$ which is a subdesign of $\design{D}$. Hence b) is satisfied.
\end{itemize}
\end{enumerate}

The following proposition shows that normalization paths are paths.

\begin{prop}\label{prop:NormPath}
 Let $(\design{D},\design{R})$ be a convergent closed cut-net.\\
$\normalisationSeq{\design{D}}{\design{R}}$ is a path on $\design{D}$.
$\normalisationSeq{\design{R}}{\design{D}}$ is a path on $\design{R}$.
\end{prop}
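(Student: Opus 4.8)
The plan is to prove both statements by induction on the number $n$ of normalization steps, exploiting Lemma~\ref{lemme:cut-nets} to keep control of the shape of each intermediate closed cut-net, and checking along the way each of the defining conditions of a path (Definition~\ref{defi:path}) for $\normalisationSeq{\design{D}}{\design{R}}$. The second assertion, that $\normalisationSeq{\design{R}}{\design{D}}$ is a path on $\design{R}$, is entirely symmetric: it suffices to exchange the roles of the two sides (Proponent and Opponent), the only difference being that the view bookkeeping is now distributed over the several designs of the net $\design{R}$ and its net base $\beta^\perp$. The key is to strengthen the induction hypothesis so as to record not merely that the partial path built so far is a path, but that $\fullview{\normalisationSeq{\design{D}}{\design{R}}}\subseteq\design{D}$ and that the subdesign of $\design{D}$ occurring in the current cut-net (as provided by clause (a) of Lemma~\ref{lemme:cut-nets}) is exactly the residual of $\design{D}$ above the chronicle $\view{\normalisationSeq{\design{D}}{\design{R}}}$.

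The structural conditions are comparatively direct. Lemma~\ref{lemme:cut-nets} shows that the main-design role alternates strictly between a subdesign of $\design{D}$ and a subdesign of $\design{R}$; since a positive action played by the main design is recorded in $\normalisationSeq{\design{D}}{\design{R}}$ as itself when the main design lies on the $\design{D}$-side and as its dual negative action otherwise, the recorded polarities alternate, giving \emph{Alternation}. \emph{Daimon} and \emph{Totality} follow from the base cases of the definition of the normalization path: if $\design{D}$ has a positive base it is the initial main design, so the sequence starts with $\daimon$ or with a positive action focused in its base, and a daimon is appended only as the final, terminating action. \emph{Linearity} holds because every newly created cut-locus is fresh by clause (b) of the proof of Lemma~\ref{lemme:cut-nets}, while two actions of one and the same design already carry distinct focuses, so all focuses along the path are pairwise distinct. \emph{Justification} will fall out of the invariant below, since each recorded action lies on a chronicle of $\design{D}$ and is thus either initial with focus in $\beta$ or justified by an earlier action.

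The heart of the proof is \emph{Negative jump}, and it is exactly here that the strengthened invariant is used. Suppose the $\design{D}$-side is the main design and plays a positive proper action $\kappa^+$; by the invariant this $\kappa^+$ extends the current chronicle $\view{\normalisationSeq{\design{D}}{\design{R}}}$ of $\design{D}$, so $\view{\pathLL{q}\kappa^+}=\view{\pathLL{q}}\kappa^+$ is again a chronicle of $\design{D}$, where $\pathLL{q}$ is the prefix built so far. The justifier $\kappa'^-$ of $\kappa^+$ inside this chronicle is, by Lemma~\ref{projection}, the justifier of $\kappa^+$ in the path, and it occurs in $\view{\pathLL{q}}$; this is precisely the Negative jump condition for $\kappa^+$. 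When instead the $\design{R}$-side is the main design and plays $\kappa^+$, one appends the negative action $\overline{\kappa^+}$, whose focus was, by clause (a) of Lemma~\ref{lemme:cut-nets}, created by an earlier positive action $(+,\sigma,I)$ of $\design{D}$; the view clause for negative actions makes $\view{\pathLL{q}\,\overline{\kappa^+}}$ jump back exactly to that positive action, so the invariant (namely $\fullview{\cdot}\subseteq\design{D}$ together with the residual description) is preserved. Checking these two view computations against the subdesign bookkeeping of Lemma~\ref{lemme:cut-nets} maintains the induction hypothesis and, with it, all conditions of Definition~\ref{defi:path}.

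The main obstacle I anticipate is precisely the maintenance of the ``view $=$ traversed chronicle'' invariant across a negative step: one must verify that the locus on which the $\design{R}$-side acts, once dualized into a negative action of the $\design{D}$-path, is justified by the very positive action recovered by the view operation and not by some other earlier action of the same focus-family. This is where clause (a) of Lemma~\ref{lemme:cut-nets}, that each freshly created cut-locus resides in a subdesign of $\design{D}$ whose first action is correctly anchored, does the real work, aligning the justification structure of the path with the chronicle structure of $\design{D}$. Once this alignment is established, Proposition~\ref{chroniclebase} guarantees that each $\view{\normalisationSeq{\design{D}}{\design{R}}}$ is indeed a chronicle, and the induction closes.
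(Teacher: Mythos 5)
Your proposal is correct and takes essentially the same approach as the paper: an induction along the normalization (the paper phrases it as induction on the length of a prefix of $\normalisationSeq{\design{D}}{\design{R}}$), in which a positive step extends the current view to a chronicle of $\design{D}$, and a negative step satisfies Justification because its focus was created by an earlier positive action of the prefix, the initial-negative case forcing $p=0$. Your strengthened invariant ($\fullview{\normalisationSeq{\design{D}}{\design{R}}}\subseteq\design{D}$ together with the residual subdesign supplied by Lemma~\ref{lemme:cut-nets}) simply makes explicit what the paper compresses into the phrase ``with respect to normalization'', and your symmetric treatment of $\normalisationSeq{\design{R}}{\design{D}}$ matches the paper's ``proved in the same way''.
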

\begin{proof}
We prove the first property by induction on the length of a prefix of $\normalisationSeq{\design{D}}{\design{R}}$.\\
- The base case of the induction depends on the polarity of $\design{D}$: If $\design{D}$ has a negative base then the empty sequence is a path on $\design{D}$, otherwise $\design{D}$ has a positive base hence there exists a first action in the sequence $\normalisationSeq{\design{D}}{\design{R}}$, this action being the first action of $\design{D}$, hence a path on $\design{D}$.\\
- Suppose $\kappa_1\dots\kappa_p\kappa$ is a prefix of $\normalisationSeq{\design{D}}{\design{R}}$ and that by induction hypothesis $\kappa_1\dots\kappa_p$ is a path on $\design{D}$.\\
If $\kappa$ is a positive action then, with respect to normalization, $\view{\kappa_1\dots\kappa_p}\kappa$ is a chronicle of $\design{D}$ that extends $\view{\kappa_1\dots\kappa_p}$, hence $\kappa_1\dots\kappa_p\kappa$ is a path on $\design{D}$.\\
If $\kappa$ is an initial negative action hence $\design{D}$ is negative and $\kappa$ is the first action of the normalization, \ie, $p=0$, and $\kappa$ is a path on $\design{D}$.\\
Otherwise the focus of the negative action $\kappa$ has been created during normalization by a positive action present in $\kappa_1\dots\kappa_p$, hence $\kappa_1\dots\kappa_p\kappa$ is a path on $\design{D}$.\\
The second property is proved in the same way.
\end{proof}

Note that $\normalisationSeq{\design{R}}{\design{D}}$ is obtained from $\normalisationSeq{\design{D}}{\design{R}}$ by just changing polarities of proper actions and adding a daimon if it were not present in $\normalisationSeq{\design{D}}{\design{R}}$. 
More precisely we define the {\em dual} of a positive alternate sequence of actions $\pathLL{p}$ (possibly empty) to be a positive alternate sequence of actions $\dual{\pathLL{p}}$ (possibly empty) in the following way:
\begin{itemize}
\item If $\pathLL{p} = w\daimon$, $\dual{\pathLL{p}} := \Overline{w}$.
\item Otherwise $\dual{\pathLL{p}} := \Overline{\pathLL{p}}\daimon$.
\end{itemize}
 Note that $\dual{\dual{\pathLL{p}}} = \pathLL{p}$.
It follows from the definition of a normalization path that $\design{D} \perp \design{R}$ iff $\exists \pathLL{p} \in \PoD{\design{D}}$ such that $\dual{\pathLL{p}} \in \PoD{\design{R}}$.
Such a $\pathLL{p}$ is unique and is in fact $\normalisationSeq{\design{D}}{\design{R}}$.
Not all paths of $\PoD{\design{D}}$ may be visited by a net in $\design{D}^\perp$ as we show in the next example proposed by Faggian~\cite{DBLP:journals/tcs/Faggian06}.
\begin{exa}
Consider the design $\design{D}$ below:
\begin{center}
$
\design{D} = 
\scalebox{1}{
\infer{\vdash \xi,\sigma,\tau}{
	\infer{\xi0 \vdash \sigma}{
		\infer{\vdash \xi00,\sigma}{\sigma0 \vdash \xi00}
	}
	&
	\infer{\xi1 \vdash \tau}{
		\infer{\vdash \xi10,\tau}{\tau0 \vdash \xi10}
	}
}
}
$
\end{center}
Its orthogonal $\design{D}^\perp$ is the following set of nets (where dots may be replaced by any forest of actions such that the result is a design):
$$\raisebox{3ex}{\mbox{$\left\{\vphantom{\mbox{\infer{\xi \vdash}{\infer[\daimon]{\vdash \xi0, \xi1}{}}}}\right.$}}
\infer{\xi \vdash}
	{
	\infer[\daimon]{\vdash \xi0, \xi1}{}
	}
~
\deduce{\sigma\vdash}{\vdots}
~
\deduce{\tau\vdash}{\vdots}
\quad;\quad
\infer{\xi \vdash}
	{
	\infer{\vdash \xi0, \xi1}
		{
		\deduce{\xi00 \vdash \xi1}{\vdots}
		}
	}
~
\infer{\sigma\vdash}
	{
	\infer[\daimon]{\vdash \sigma0}{}
	}
~
\deduce{\tau\vdash}{\vdots}
\quad;\quad
\infer{\xi \vdash}
	{
	\infer{\vdash \xi0, \xi1}
		{
		\deduce{\xi10 \vdash \xi0}{\vdots}
		}
	}
~
\deduce{\sigma\vdash}{\vdots}
~
\infer{\tau\vdash}
	{
	\infer[\daimon]{\vdash \tau0}{}
	}
\raisebox{3ex}{\mbox{$\left.\vphantom{\mbox{\infer{\xi \vdash}{\infer[\daimon]{\vdash \xi0, \xi1}{}}}}\right\}$}}$$\medskip

\noindent Let us consider the path $\pathLL{p} = {(+,\xi,\{0,1\})(-,\xi0,\{0\})(+,\sigma,\{0\})(-,\xi1,\{0\})(+,\tau,\{0\})}$. It is a path of $\PoD{\design{D}}$, however it is not visited by a net in $\design{D}^\perp$: its dual $\dual{\pathLL{p}}$ is $(-,\xi,\{0,1\})(+,\xi0,\{0\})$ $(-,\sigma,\{0\})(+,\xi1,\{0\})(-,\tau,\{0\})\daimon$ that is not a path.
\end{exa}
In~\cite{DBLP:journals/tcs/Faggian06}, Faggian precisely characterizes {\em strong slices},  \ie\  slices of a design $\design{D}$ that may be completely visited with a unique normalization by an orthogonal of $\design{D}$. 
Remember that a slice of a design is a {\em multiplicative} subdesign of $\design{D}$, \ie\ a focus appears only in a unique negative action.
Furthermore, Faggian showed that a slice is strong iff it is finite and a partial preorder traversal may be defined on it: the root is visited, then in preorder each of its subtrees, the last visited node for each visited subtree being a leaf.
We remark that if $\pathLL{p}$ is a normalization path of a design $\design{D}$ then $\fullview{\pathLL{p}}$ is a strong slice of a $\design{D}$: the normalization process defines such a partial order on the design $\fullview{\pathLL{p}}$, and by construction $\fullview{\pathLL{p}} \subset \design{D}$ such that it does not contain two negative actions with the same focus, \ie\ $\fullview{\pathLL{p}}$ is a slice of $\design{D}$.
To summarize, let $\pathLL{p}$ be a sequence of actions based on $\beta$, we have that $\fullview{\pathLL{p}}$ is a strong slice iff $\pathLL{p}$ and $\dual{\pathLL{p}}$ are paths.

Note that the fact that $\fullview{\pathLL{p}}$ is a strong slice may be determined by means of constraints  only on $\pathLL{p}$. In fact, a dual of a path $\pathLL{p}$ is a path if $\pathLL{p}$ satisfies the dual of the first item of the Negative Jump constraint we call Restrictive Negative Jump. 
We notice that the Negative Jump together with Restrictive Negative Jump constraints are equivalent to the Visibility constraint in Game Semantics.
\begin{prop}\label{prop:dualpath_restNegJump}
 Let $\pathLL{p}$ be a path based on $\beta$, $\dual{\pathLL{p}}$ is a path based on $\beta^\perp$ iff $\pathLL{p}$ satisfies the following condition:
\begin{itemize}
\item {\em Restrictive Negative Jump:} for all sequence $\pathLL{q}\kappa$ that is a subsequence of $\pathLL{p}$, if  $\kappa$ is a negative action justified by a positive action $\kappa'$ then $\Overline{\kappa'} \in \view{\,\dual{\pathLL{q}}\,}$.
\end{itemize}
\end{prop}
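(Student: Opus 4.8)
The plan is to verify the six defining constraints of a path for $\dual{\pathLL{p}}$ one at a time and to observe that only the \emph{Negative Jump} constraint carries genuine content, matching exactly the \emph{Restrictive Negative Jump} condition on $\pathLL{p}$. Recall that $\dual{\pathLL{p}}$ is obtained from $\pathLL{p}$ by flipping the polarity of every proper action (the operation $\Overline{\cdot}$) and either deleting a trailing daimon or appending one, while $\beta^\perp$ is obtained from $\beta$ by exchanging the left and right parts of each sequent. Since polarity-flipping leaves every focus, every ``built-by'' relation and every alternation pattern unchanged, the constraints \emph{Alternation}, \emph{Justification} and \emph{Linearity} transfer immediately: an action initial in $\pathLL{p}$ with focus in $\Gamma_i$ (resp.\ $\Delta_i$) becomes an action of opposite polarity whose focus now lies in the right part (resp.\ left part) of the corresponding sequent of $\beta^\perp$, which is precisely what initiality in $\beta^\perp$ demands. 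The constraints \emph{Daimon} and \emph{Totality} are taken care of by the very definition of $\dual{\cdot}$: adding a daimon when $\pathLL{p}$ does not already end with one, and removing it otherwise, is arranged exactly so that $\dual{\pathLL{p}}$ is non-empty and begins with $\daimon$ or with a positive action focused in the relevant right part precisely when $\beta^\perp$ has an empty left part.

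The crux is the \emph{Negative Jump} constraint for $\dual{\pathLL{p}}$, which I would split along its two clauses. A positive proper action $\mu$ of $\dual{\pathLL{p}}$ has the form $\mu = \Overline{\kappa}$ for a \emph{negative} proper action $\kappa$ of $\pathLL{p}$; if $\mu$ is justified by a negative action, that action is $\mu' = \Overline{\kappa'}$, where $\kappa'$ is the positive action justifying $\kappa$ in $\pathLL{p}$. Writing $\pathLL{q}\kappa$ for the prefix of $\pathLL{p}$ ending at $\kappa$, the prefix of $\dual{\pathLL{p}}$ ending just before $\mu$ is $\Overline{\pathLL{q}}$, so the constraint $\mu' \in \view{\,\cdot\,}$ reads $\Overline{\kappa'} \in \view{\Overline{\pathLL{q}}}$. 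As $\pathLL{q}$ ends with a positive proper action, $\dual{\pathLL{q}} = \Overline{\pathLL{q}}\daimon$ and hence $\view{\dual{\pathLL{q}}} = \view{\Overline{\pathLL{q}}}\daimon$; since $\Overline{\kappa'} \neq \daimon$, this is the same membership as $\Overline{\kappa'} \in \view{\dual{\pathLL{q}}}$. Thus, action by action, the first clause of \emph{Negative Jump} for $\dual{\pathLL{p}}$ is literally the \emph{Restrictive Negative Jump} for $\pathLL{p}$. For the second clause, the initial positive proper actions of $\dual{\pathLL{p}}$ are the duals of the initial negative actions of $\pathLL{p}$, and their required placement (first action with an empty left part, or immediately preceded by a suitable negative action) follows from the position already forced on the initial negative actions of $\pathLL{p}$ by its own constraints, transported along the base duality.

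With this translation in hand, both implications follow at once. For the ``if'' direction, assuming $\pathLL{p}$ is a path satisfying \emph{Restrictive Negative Jump}, the five routine constraints together with the equivalence just established give all six constraints for $\dual{\pathLL{p}}$, so $\dual{\pathLL{p}}$ is a path on $\beta^\perp$. For the ``only if'' direction, if $\dual{\pathLL{p}}$ is a path it satisfies \emph{Negative Jump}, whose first clause is exactly \emph{Restrictive Negative Jump} for $\pathLL{p}$.

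The point I expect to require the most care is the exact matching of the view taken inside $\dual{\pathLL{p}}$ with $\view{\dual{\pathLL{q}}}$: one must use that the view operator jumps at the negative actions of $\dual{\pathLL{p}}$, which are the \emph{positive} actions of $\pathLL{p}$, so that $\view{\dual{\pathLL{q}}}$ in fact computes the ``Opponent view'' of $\pathLL{q}$. This asymmetry of $\view{\cdot}$ under dualization is precisely what makes the dual constraint supply the missing O-visibility half; alongside it one must check that the appended daimon never interferes with the relevant membership tests and that the initial-action clause produces no condition beyond those already guaranteed by $\pathLL{p}$ being a path on $\beta$.
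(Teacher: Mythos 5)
Your proof is correct and follows essentially the same route as the paper's: both arguments observe that the routine constraints (Alternation, Justification, Linearity, Daimon, Totality) transfer under polarity-flipping, and both reduce the crux to identifying the first clause of Negative Jump for $\dual{\pathLL{p}}$ with the Restrictive Negative Jump for $\pathLL{p}$, using the structure of the base for the second clause. Your explicit bookkeeping showing $\view{\dual{\pathLL{q}}} = \view{\Overline{\pathLL{q}}}\daimon$, so that the appended daimon does not disturb the membership test, is a detail the paper leaves implicit, but it is the same argument.
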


\noindent The Restrictive Negative Jump constraint may be explicitly stated as follows:\\
{\em If  $\kappa^-$ is a proper action justified by an
action $\kappa'^{+}$ then there is a sequence
$\alpha_0^-\alpha^{+}_0\dots\alpha^{+}_n$  beginning with
$\kappa^-=\alpha_0^-$, ending with $\kappa'^{+}=\alpha_n^{+}$ and such that
$\alpha^{+}_i$ immediately precedes $\alpha^-_i$ in $\pathLL{p}$ and
$\alpha_{i+1}^-$ justifies $\alpha_i^{+}$.}

\begin{proof}
Suppose $\dual{\pathLL{p}}$ is a path then it satisfies the (Negative
Jump) constraint hence $\pathLL{p}$ satisfies its dual, hence in
particular the (Restrictive Negative Jump) constraint.

Suppose $\pathLL{p}$ satisfies (Restrictive Negative Jump). Being given the definition of $\dual{\pathLL{p}}$ and the fact that $\pathLL{p}$ is a path, the sequence $\dual{\pathLL{p}}$ satisfies conditions (Alternation), (Justification), (Linearity), (Daimon). It remains to check that $\dual{\pathLL{p}}$ satisfies (Totality) and (Negative Jump):
\begin{itemize}
\item (Totality) Note that if $\beta$ is negative then either $\pathLL{p}$ is empty then $\dual{\pathLL{p}} = \daimon$ is non-empty, or the first action of $\pathLL{p}$ is a negative action hence $\dual{\pathLL{p}}$ is non-empty.
\item (Negative Jump)
\begin{itemize}
\item The first item is satisfied by $\dual{\pathLL{p}}$ as $\pathLL{p}$ satisfies the Restrictive Negative Jump constraint.
\item Concerning the second item, let $\kappa$ be an initial positive proper action of $\dual{\pathLL{p}}$, then $\Overline{\kappa}$ is negative initial in $\pathLL{p}$, hence the first action of $\pathLL{p}$ being given the kind of base we consider. The result follows.\qedhere
\end{itemize}
\end{itemize}
\end{proof}

\noindent This achieves the characterization of what can be visited in an interaction: let $\pathLL{p}$ be a path in a design $\design{D}$, $\pathLL{p}$ may be visited in an interaction if, by definition, there exists a design $\design{E}$ and $\pathLL{p} = \normalisationSeq{\design{D}}{\design{E}}$. If $\dual{\pathLL{p}}$ is a path then $\fullview{\dual{\pathLL{p}}}$ is a design and we have that $\pathLL{p} = \normalisationSeq{\design{D}}{\fullview{\dual{\pathLL{p}}}}$. Finally proposition~\ref{prop:dualpath_restNegJump} gives the constraints that $\pathLL{p}$ should satisfy for $\dual{\pathLL{p}}$ to be a path.

We now move on to a more general question: let $\designset{E}$ be a set of designs of the same base, is it possible to characterize paths $\pathLL{p}$ on designs of $\designset{E}$ that may be visited by an orthogonal, \ie, an element of $\designset{E}^\perp$? A necessary condition is that $\dual{\pathLL{p}}$ is also a path, \ie, $\fullview{p}$ is a strong slice. Note that there is no reason that $\fullview{\dual{\pathLL{p}}}$ may be orthogonal to {\em all} the elements of $\designset{E}$. Indeed, the fact that $\dual{\pathLL{p}}$ is a path is not a sufficient condition: suppose a design $\design{D} \in \designset{E}$ with a first negative action $\kappa^-$, then $\pathLL{p} = \kappa^-$ is a path on $\design{D}$; $\pathLL{p}$ is visited by interaction with an element $\design{R}$ of $\designset{E}^\perp$ if $\design{R}$ contains a design with first positive action $\Overline{\kappa^-}$; but as $\design{R} \in \designset{E}^\perp$, $\kappa^-$ should be one of the first negative actions of each design of $\designset{E}$. 
This is the keypoint we use in proposition~\ref{stabilite-neg} for 
having a necessary condition for a path to be
 {\em visitable} in a set of designs $\designset{E}$. Note that this 
condition
 does not make direct reference to elements of $\designset{E}^\perp$.

\begin{defi}[Visitability in a Set of Designs]
Let $\designset{E}$ be a set of designs of the same base,
a path $\pathLL{p}$ is {\em visitable in $\designset{E}$} if there exists a design $\design{D}\in \designset{E}$ and a net $\design{R} \in \designset{E}^\perp$ such that $\pathLL{p} =  \normalisationSeq{\design{D}}{\design{R}}$. 
\end{defi}

\rem A visitable path $\pathLL{p}$ is positive (\ie, its last action is a positive one). Obviously, with notations as in the definition, $\pathLL{p}\in \PoD{\design{D}}$; moreover $\forall \design{D}'\in \designset{E}$ such that $\pathLL{p}\in \PoD{\design{D}'}$ then $\pathLL{p} =  \normalisationSeq{\design{D}'}{\design{R}}$ as normalization is deterministic.

\begin{prop}\label{stabilite-neg}
Let $\designset{E}$ be a set of designs and let $\pathLL{p}$ be a positive path of a design of $\designset{E}$,
\begin{center}
If the path $\pathLL{p}$ is {\em visitable in $\designset{E}$} 
then  
$\begin{cases}
\bullet~ \text{the sequence } \dual{\pathLL{p}} \text{ is a path,}\\
\bullet~ \text{for all prefix } w\kappa \text{ of } \pathLL{p}\text{, for all design } \design{D} \text{ in } \designset{E} \text{ such that }w \text{ is a path of } \design{D}\text{,}\\
\quad\text{if } \kappa \text{ is a negative action then } w\kappa \text{ is a path of } \design{D},\\
\end{cases}$
\end{center}
\end{prop}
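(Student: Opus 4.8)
The plan is to unfold the definition of visitability and then exploit the determinism of normalization together with the fact that the witnessing net lies in $\designset{E}^\perp$, hence is orthogonal to \emph{every} design of $\designset{E}$. First I would fix witnesses: since $\pathLL{p}$ is visitable in $\designset{E}$, pick $\design{D}\in\designset{E}$ and a net $\design{R}\in\designset{E}^\perp$ with $\pathLL{p}=\normalisationSeq{\design{D}}{\design{R}}$; as $\designset{E}^\perp=\bigcap_{\design{C}\in\designset{E}}\design{C}^\perp$, the net $\design{R}$ is orthogonal to each design of $\designset{E}$, in particular $\design{D}\perp\design{R}$. The first bullet is then immediate from the characterization that $\design{D}\perp\design{R}$ iff some $\pathLL{q}\in\PoD{\design{D}}$ has $\dual{\pathLL{q}}\in\PoD{\design{R}}$, the unique such $\pathLL{q}$ being $\normalisationSeq{\design{D}}{\design{R}}=\pathLL{p}$. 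Thus $\dual{\pathLL{p}}\in\PoD{\design{R}}$, so $\dual{\pathLL{p}}$ is a path.

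For the second bullet, fix a prefix $w\kappa$ of $\pathLL{p}$ with $\kappa$ negative and a design $\design{D}'\in\designset{E}$ with $w\in\PoD{\design{D}'}$. Since $\design{R}\in\designset{E}^\perp$ we also have $\design{D}'\perp\design{R}$, so $\normalisationSeq{\design{D}'}{\design{R}}$ is defined and, by Proposition~\ref{prop:NormPath}, is a path on $\design{D}'$. I would reduce the goal to showing that $w\kappa$ is a prefix of $\normalisationSeq{\design{D}'}{\design{R}}$: since $\PoD{\design{D}'}$ is prefix-closed (Proposition~\ref{prop:PathsNet}), this immediately yields $w\kappa\in\PoD{\design{D}'}$, i.e.\ $w\kappa$ is a path of $\design{D}'$. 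To prove the reduction I would compare the runs $\normalisationSeq{\design{D}}{\design{R}}$ and $\normalisationSeq{\design{D}'}{\design{R}}$ step by step. While the shared prefix $w$ is traversed, the left design is consulted only on the chronicles collected in $\fullview{w}$: the positive actions of $w$ are played by the left design and recorded in $\fullview{w}$, while the negative actions of $w$ are played by the unchanged net $\design{R}$. As $w\in\PoD{\design{D}}\cap\PoD{\design{D}'}$ gives $\fullview{w}\subseteq\design{D}$ and $\fullview{w}\subseteq\design{D}'$, and $\design{R}$ is the same in both cut-nets, determinism of normalization forces the two normalization paths to agree on $w$. At the next step, $\kappa$ being negative in $\pathLL{p}$ means the main design there is a subdesign of $\design{R}$ playing $\overline{\kappa}$; the same subdesign of $\design{R}$ is reached in the interaction with $\design{D}'$, so it again plays $\overline{\kappa}$. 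Finally, because $\design{D}'\perp\design{R}$ the interaction does not fail, so the relevant subdesign of $\design{D}'$ must offer the negative action $\kappa$; hence $w\kappa$ is visited and the reduction holds.

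The step I expect to be the main obstacle is exactly this coincidence argument: making rigorous that the interaction consults only the shared fragment $\fullview{w}$, so that replacing $\design{D}$ by any $\design{D}'$ sharing $w$ leaves the run unchanged up through $\kappa$. This rests on the determinism of normalization (the remark following the definition of visitability) and on the bookkeeping of Lemma~\ref{lemme:cut-nets}, which at each step identifies whether the main design comes from the $\design{D}$-side or the $\design{R}$-side and tracks the newly created cut loci; these are what let one transport the run from $(\design{D},\design{R})$ to $(\design{D}',\design{R})$ position by position. The degenerate case $w=\emptyset$ (so $\kappa$ is the first, initial negative action and $\design{D}'$ has the same negative base) deserves a separate check: there $\design{R}$ plays $\overline{\kappa}$ as its very first positive action, and $\design{D}'\perp\design{R}$ forces $\kappa$ to be an initial negative action of $\design{D}'$, which is precisely the informal keypoint stated before the proposition.
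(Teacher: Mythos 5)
Your proposal is correct and follows essentially the same route as the paper's own proof: fix witnesses $\design{D}_0\in\designset{E}$ and $\design{R}_0\in\designset{E}^\perp$, get the first bullet from the fact that $\dual{\pathLL{p}}=\normalisationSeq{\design{R}_0}{\design{D}_0}$ is a (normalization) path, and for the second bullet use determinism of normalization plus $\design{D}'\perp\design{R}_0$ to show the run of $(\design{D}',\design{R}_0)$ agrees with $\pathLL{p}$ through $w$, so that the $\design{R}_0$-side plays $\overline{\kappa}$ next and orthogonality forces $\design{D}'$ to contain $\kappa$. The paper also splits off the case $w$ empty exactly as you do, and its step-by-step transport of the run rests on the same bookkeeping (Lemma~\ref{lemme:cut-nets}) that you identify as the crux.
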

\begin{proof}
Suppose that $\pathLL{p}$ is a path visitable in $\designset{E}$, then there exists a design $\design{D}_0\in \designset{E}$ and a net $\design{R}_0 \in \designset{E}^\perp$ such that $\pathLL{p} =  \normalisationSeq{\design{D}_0}{\design{R}_0}$, hence $\dual{\pathLL{p}} = \normalisationSeq{\design{R}_0}{\design{D}_0}$ is a path (in $\design{R}_0$). Let $w\kappa$ be a prefix of $\pathLL{p}$ and $\design{D}$ be a design of $\designset{E}$ such that $w\in\PoD{\design{D}}$. 
Note that $w$ cannot end with a daimon.
Since $w$ is a path of $\design{D}$, and that normalization is deterministic and $\design{D}\perp\design{R}_0$, then $w$ is a prefix of $\normalisationSeq{\design{D}}{\design{R}_0}$.
Suppose first that $\kappa$ is a negative action:
\begin{itemize}
\item If $w$ is empty, the base of designs of $\designset{E}$ is negative, \ie, of the form $\xi \vdash \sigma_1, \dots, \sigma_n$ and nets $\design{R}$ in $\designset{E}^\perp$ have bases $\vdash \xi$, $\sigma_1 \vdash$, ..., $\sigma_n \vdash$. So normalizations between $\design{R}$ and designs in $\designset{E}$ must begin with the same first positive action $\overline{\kappa}$ of $\design{R}$, hence $\kappa$ should be a first action in all designs in $\designset{E}$, thus a path in all designs in $\designset{E}$.
\item Otherwise, we note that after $|w|$ steps of normalization between $\design{D}$ and $\design{R}_0$, the cut-net consists of a net $\design{X}_\design{D}$ of subdesigns of $\design{D}$ and a net $\design{X}_{\design{R}_0}$ of subdesigns of designs of $\design{R}_0$. Moreover, there is exactly one design of positive base in this cut-net and this design belongs to $\design{X}_{\design{R}_0}$ as $w$ ends with a positive action.
Furthermore, $\design{X}_{\design{R}_0}$ is also the net obtained from $\design{R}_0$ after $|w|$ steps in the normalization between $\design{D}_0$ and $\design{R}_0$. Then $\overline{\kappa}$ is the next action to be used by normalization with 
$\design{D}_0$, hence also with $\design{D}$ (for normalization to proceed). Thus $w\kappa$ is a path of  $\design{D}$.\qedhere
\end{itemize}
\end{proof}

\begin{defi}[Completion of designs]\label{defi:closure}
Let $\design{D}$ be a design of base $\beta$, the {\tt completion} of $\design{D}$, noted $\design{D}^c$, is the design of base $\beta$ obtained from $\design{D}$ in the following way:\\
\centerline{
	$\design{D}^c := \design D \cup \{\chronicle c \kappa^-\daimon ~;~ \chronicle c \in \design D, \chronicle c \kappa^-\not\in \design D, \chronicle c\kappa^-\daimon$ is a chronicle of base $\beta\}$
}\\
Let $\design R$ be a net of designs, the {\em completion} of $\design R$ also written $\design R^c$ is the net of completions of designs of $\design R$.
\end{defi}

\noindent Note that $\design{D}^c$ is a design:
First, an action $\kappa^-$ is either initial or justified by the last action of $\chronicle c$ (in $\chronicle c \kappa^-$) hence linearity is satisfied.
Second, as chronicles $\chronicle c$ are in $\design D$ and $\kappa^-$ are negative actions, chronicles $\chronicle c \kappa^-\daimon$ are pairwise coherent and coherent with actions of $\design D$.

\begin{prop}\label{prop:completed_subdesign}
Let $\design{D}$ be a design in a behaviour $\behaviour{A}$, let $\design{C} \subset \design{D}$ then $\design{C}^c \in \behaviour{A}$.
\end{prop}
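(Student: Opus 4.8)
The plan is to use that $\behaviour{A}=\behaviour{A}^{\perp\perp}$ and to establish membership through orthogonality: it suffices to show $\design{C}^c\perp\design{R}$ for every net $\design{R}\in\behaviour{A}^{\perp}$. So I would fix such an $\design{R}$. Since $\design{D}\in\behaviour{A}$ we have $\design{D}\perp\design{R}$, and by Proposition~\ref{prop:NormPath} the normalization path $\pathLL{q}:=\normalisationSeq{\design{D}}{\design{R}}$ is a path of $\PoD{\design{D}}$ whose dual $\dual{\pathLL{q}}$ lies in $\PoD{\design{R}}$. By the characterization of orthogonality recorded right after Proposition~\ref{prop:NormPath} ($\design{D}\perp\design{R}$ iff some $\pathLL{p}\in\PoD{\design{D}}$ has $\dual{\pathLL{p}}\in\PoD{\design{R}}$), it is then enough to exhibit one path $\pathLL{p}\in\PoD{\design{C}^c}$ with $\dual{\pathLL{p}}\in\PoD{\design{R}}$.

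First I would compare the chronicles visited along $\pathLL{q}$ with $\design{C}$. If $\fullview{\pathLL{q}}\subseteq\design{C}$, then $\pathLL{q}\in\PoD{\design{C}}\subseteq\PoD{\design{C}^c}$ and $\pathLL{p}:=\pathLL{q}$ works at once. Otherwise let $\pathLL{q}_0\kappa$ be the shortest prefix of $\pathLL{q}$ with $\view{\pathLL{q}_0\kappa}\notin\design{C}$; by minimality every shorter prefix-view lies in $\design{C}$, that is, $\fullview{\pathLL{q}_0}\subseteq\design{C}$. My candidate witness is then $\pathLL{p}:=\pathLL{q}_0\kappa\daimon$.

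The key step is that this first divergence is forced to be a \emph{negative} action. Indeed, were $\kappa$ positive, then $\view{\pathLL{q}_0}\in\fullview{\pathLL{q}_0}\subseteq\design{C}$ would be a chronicle of $\design{C}$ ending with a negative action; by Positivity it has a continuation in $\design{C}$, and by Coherence (two chronicles of a design first differ on negative actions) this positive continuation is unique, hence equal to the one prescribed by $\design{D}$, namely $\kappa$ — contradicting $\view{\pathLL{q}_0\kappa}\notin\design{C}$. So $\kappa$ is negative. Assuming for the main case that $\kappa$ is not initial, it is justified by a positive action; writing $w_0$ for the prefix of $\pathLL{q}_0$ ending with that justifier, the chronicle $\chronicle{c}:=\view{w_0}$ lies in $\fullview{\pathLL{q}_0}\subseteq\design{C}$ and ends positively, while $\chronicle{c}\kappa=\view{\pathLL{q}_0\kappa}\notin\design{C}$. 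This is exactly the pattern covered by Definition~\ref{defi:closure}, so $\chronicle{c}\kappa\daimon\in\design{C}^c$; together with $\fullview{\pathLL{q}_0}\subseteq\design{C}\subseteq\design{C}^c$ this gives $\fullview{\pathLL{p}}\subseteq\design{C}^c$, whence $\pathLL{p}\in\PoD{\design{C}^c}$. Finally, as $\pathLL{p}$ ends with $\daimon$ its dual $\dual{\pathLL{p}}$ is the prefix $\overline{\pathLL{q}_0\kappa}$ of $\dual{\pathLL{q}}$, and since $\PoD{\design{R}}$ is prefix-closed (Proposition~\ref{prop:PathsNet}) we get $\dual{\pathLL{p}}\in\PoD{\design{R}}$. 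Hence $\design{C}^c\perp\design{R}$, and as $\design{R}$ was arbitrary, $\design{C}^c\in\behaviour{A}^{\perp\perp}=\behaviour{A}$.

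The step I expect to be the main obstacle is the degenerate \emph{initial} case $\pathLL{q}_0=\epsilon$ with $\kappa$ a first, initial negative action, which can occur only when the base is negative: there is then no nonempty $\chronicle{c}\in\design{C}$ to which the completion of Definition~\ref{defi:closure} can append a $\daimon$, so ``giving up'' is not available at the very first move. I would treat this case separately, arguing that for $\design{R}\in\behaviour{A}^{\perp}$ the first Opponent action dualizes to an initial negative action that $\design{C}$ must already contain (equivalently, that $\design{C}^c$ and $\design{D}$ may be taken to share their initial actions), so that no divergence occurs at step one and the interaction proceeds as for $\design{D}$. The remaining verifications — that $\pathLL{p}$ is genuinely a path, determinism of the positive response in a design, and prefix-closure of $\PoD{\design{R}}$ — are routine.
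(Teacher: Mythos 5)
Your main-case argument is essentially the paper's own proof, lightly reorganized. The paper fixes $\design{E}\in\behaviour{A}^\perp$ and takes the \emph{longest} positive-ended prefix of $\normalisationSeq{\design{D}}{\design{E}}$ that is a path of $\design{C}$; you take the \emph{shortest} prefix whose view escapes $\design{C}$. Both show that the escape must occur on a negative action by the same determinism argument (Positivity of $\design{C}$ plus comparability inside $\design{D}$ forces a unique positive response), and both conclude by letting $\design{C}^c$ answer $\daimon$ there; you merely route the conclusion through the dual-path characterization of orthogonality and prefix-closure of $\PoD{\design{R}}$, where the paper appeals directly to determinism of normalization. Up to that point your proof is correct and matches the paper.

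The genuine problem is the initial case, and your resolution of it is false. You claim that for $\design{R}\in\behaviour{A}^\perp$ the first action of $\design{R}$ dualizes to an initial negative action that $\design{C}$ must already contain. It need not. Take the base $\xi\vdash$, let $\design{D}$ consist of the chronicles $(-,\xi,\{1\})$, $(-,\xi,\{1\})(+,\xi.1,\emptyset)$, $(-,\xi,\{2\})$, $(-,\xi,\{2\})(+,\xi.2,\emptyset)$, let $\behaviour{A}=\{\design{D}\}^{\perp\perp}$, and let $\design{C}$ be the subdesign consisting of the two chronicles through $(-,\xi,\{1\})$. The design with chronicles $(+,\xi,\{2\})$ and $(+,\xi,\{2\})(-,\xi.2,\emptyset)\daimon$ belongs to $\behaviour{A}^\perp$, yet $(-,\xi,\{2\})\notin\design{C}$: nothing forces a subdesign to retain all initial negative actions of $\design{D}$, and this is precisely the situation the completion is meant to repair. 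The way out is not an argument about $\design{R}$ but the reading of Definition~\ref{defi:closure} in which $\chronicle{c}$ is allowed to be empty, so that $\design{C}^c$ also contains $\kappa^-\daimon$ for every initial negative action $\kappa^-$ absent from $\design{C}$; with that reading your main daimon argument applies verbatim at the first step and no separate case is needed. (The paper's own proof must be read the same way: its chronicle $\chronicle{c}$ with $\view{\pathLL{p}\kappa^-}=\chronicle{c}\kappa^-$ can be empty; under the strictly literal definition, where chronicles are non-empty, the proposition itself would fail on negative bases by the counterexample above.) So you located the delicate point correctly, but the step as you wrote it fails, and the patch lives in the definition of $\design{C}^c$, not in a constraint on $\design{C}$.
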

\begin{proof}
Let $\design{E} \in \behaviour{A}^\perp$. Hence $\design{E} \perp \design{D}$. Let $\pathLL{p}$ be the longest positive-ended path in the design $\design{C}$ that is a prefix of $\normalisationSeq{\design{D}}{\design{E}}$. 
Either $\pathLL{p} = \normalisationSeq{\design{D}}{\design{E}}$, hence $\design{E} \perp \design{C}$, and also $\design{E} \perp \design{C}^c$. 
Or there exist actions $\kappa^-$, $\kappa^+$ and a sequence $w$ such that $\normalisationSeq{\design{D}}{\design{E}} = \pathLL{p}\kappa^-\kappa^+w$. 
Consider the chronicle $\chronicle{c}$ such that $\view{\pathLL{p}\kappa^-} = \chronicle{c}\kappa^-$. By construction, $\chronicle{c} \in \design{C}$. Either $\chronicle{c}\kappa^- \in \design{C}$ hence also $\chronicle{c}\kappa^-\kappa^+ \in \design{C}$ as $\design{C} \subset \design{D}$ and there is a unique positive action after a negative action. Contradiction as $\pathLL{p}$ is then not maximal. 
Or $\chronicle{c}\kappa^-\daimon \in \design{C}^c$ hence $\design{E} \perp \design{C}^c$.
\end{proof}

The proposition~\ref{prop:completed_subdesign} is also true when we have nets of designs instead of designs.

\begin{prop}\label{prop:visitable-carac}
Let $E$ be a set of designs of same base.
Let $\pathLL p$ be a path of a design of $E$.
$\pathLL p$ is visitable in $E$ iff $\fullview{\dual{\pathLL p}}^c \in E^{\perp}$.
\end{prop}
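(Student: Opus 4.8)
The plan is to derive both implications from the characterisation stated just after Proposition~\ref{prop:NormPath}: for orthogonal $\design{D}$ and $\design{R}$, the normalization path $\normalisationSeq{\design{D}}{\design{R}}$ is the \emph{unique} path $\pathLL{q}\in\PoD{\design{D}}$ whose dual $\dual{\pathLL{q}}$ lies in $\PoD{\design{R}}$. Throughout, since the expression $\fullview{\dual{\pathLL p}}^c$ is only meaningful when $\dual{\pathLL p}$ is a path, so that $\fullview{\dual{\pathLL p}}$ is a net of designs by Proposition~\ref{prop:pathsTOnet}, I may assume $\dual{\pathLL p}$ is a path. Then $\fullview{\dual{\pathLL p}}$ is a well-defined net of base $\beta^\perp$, and $\dual{\pathLL p}\in\PoD{\fullview{\dual{\pathLL p}}}$, since $\fullview{\dual{\pathLL p}}$ is a subnet of itself.

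For the forward implication, suppose $\pathLL p$ is visitable in $E$, witnessed by $\design D\in E$ and $\design R\in E^{\perp}$ with $\pathLL p=\normalisationSeq{\design D}{\design R}$. By Proposition~\ref{prop:NormPath} and the remark following it, $\dual{\pathLL p}=\normalisationSeq{\design R}{\design D}$ is a path on $\design R$, that is $\dual{\pathLL p}\in\PoD{\design R}$; hence, by definition of $\PoD{\design R}$, the net $\fullview{\dual{\pathLL p}}$ is a subnet of $\design R$. Now $E^{\perp}$ is a behaviour, so Proposition~\ref{prop:completed_subdesign}, in its version for nets, applied to the subnet $\fullview{\dual{\pathLL p}}\subset\design R\in E^{\perp}$, yields $\fullview{\dual{\pathLL p}}^c\in E^{\perp}$.

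For the converse, assume $\fullview{\dual{\pathLL p}}^c\in E^{\perp}$ and let $\design D\in E$ be a design of which $\pathLL p$ is a path, so $\pathLL p\in\PoD{\design D}$. From $\design D\in E$ and $\fullview{\dual{\pathLL p}}^c\in E^{\perp}$ we get $\design D\perp\fullview{\dual{\pathLL p}}^c$. Moreover $\fullview{\dual{\pathLL p}}$ is a subnet of $\fullview{\dual{\pathLL p}}^c$, because completion only adds chronicles; hence, combining $\dual{\pathLL p}\in\PoD{\fullview{\dual{\pathLL p}}}$ with transitivity of the subnet relation, $\dual{\pathLL p}\in\PoD{\fullview{\dual{\pathLL p}}^c}$. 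We thus have $\pathLL p\in\PoD{\design D}$ and $\dual{\pathLL p}\in\PoD{\fullview{\dual{\pathLL p}}^c}$, so by the uniqueness clause of the characterisation above $\pathLL p=\normalisationSeq{\design D}{\fullview{\dual{\pathLL p}}^c}$. Taking $\design R:=\fullview{\dual{\pathLL p}}^c\in E^{\perp}$ as the orthogonal net exhibits $\pathLL p$ as visitable in $E$.

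The delicate point in both directions is the completion step. In the forward direction, passing from the arbitrary witnessing orthogonal $\design R$ to the canonical net $\fullview{\dual{\pathLL p}}^c$ is exactly what Proposition~\ref{prop:completed_subdesign} absorbs: the visited part $\fullview{\dual{\pathLL p}}$ need not itself belong to $E^{\perp}$, but its completion does. In the converse, one must check that completing $\fullview{\dual{\pathLL p}}$ does not perturb the normalization against $\design D$; I handle this through the uniqueness clause rather than by a step-by-step analysis of the interaction, the point being that adding the $\daimon$-terminated negative branches of the completion only enlarges the net and therefore cannot remove the witnessing pair $(\pathLL p,\dual{\pathLL p})$ that forces the normalization path to be $\pathLL p$.
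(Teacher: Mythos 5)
Your proof is correct and follows essentially the same route as the paper's: the forward direction applies Proposition~\ref{prop:completed_subdesign} (in its version for nets) to the subnet $\fullview{\dual{\pathLL p}}\subset\design R$, and the converse takes $\fullview{\dual{\pathLL p}}^c$ itself as the witnessing orthogonal net. The only difference is presentational: you justify the step $\pathLL p=\normalisationSeq{\design D}{\fullview{\dual{\pathLL p}}^c}$ explicitly via the uniqueness clause stated after Proposition~\ref{prop:NormPath}, whereas the paper simply asserts it.
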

\begin{proof}
Suppose that $\pathLL p$ is visitable in $E$. Then there exist $\design D \in E$ and $\design R \in E^\perp$ such that $\pathLL p =  \normalisationSeq{\design{D}}{\design{R}}$. 
Furthermore $\dual{\pathLL p}$ is a path in $\design R$, hence $\fullview{\dual{\pathLL p}} \subset \design R$. It follows from proposition~\ref{prop:completed_subdesign} that $\fullview{\dual{\pathLL p}}^c \in E^\perp$. \\
Suppose that $\fullview{\dual{\pathLL p}}^c \in E^{\perp}$, let $\design D$ be the design in $E$ such that $\pathLL p$ is a path of $\design D$. Note that $\design D \perp \fullview{\dual{\pathLL p}}^c$ and that $\pathLL p = \normalisationSeq{\design{D}}{\fullview{\dual{\pathLL p}}^c}$.
\end{proof}

Proposition~\ref{prop:visitable-carac} gives a means to compute the set of visitable paths of a set $E$ of designs of the same base (when $E$ is a finite set of finite designs): take each positive-ended path $\pathLL{p}$ of some design of $E$, test if for all design $\design D$ in $E$ we have $\design D \perp \fullview{\dual{\pathLL p}}^c$. This method may be improved by considering only paths that satisfy the necessary constraint given in proposition~\ref{stabilite-neg}.

\begin{cor}
The set of paths visitable in $\designset{E}$, noted $V_{\designset{E}}$, is positive-prefix closed:
if $\pathLL{p}\kappa^+w \in V_{\designset{E}}$ then $\pathLL{p}\kappa^+ \in V_{\designset{E}}$.
\end{cor}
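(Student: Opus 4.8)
The plan is to exhibit, for the shorter path $\pathLL{r} := \pathLL{p}\kappa^+$, an explicit orthogonal net that visits it, obtained by \emph{truncating} a net that visits the longer path. If the suffix $w$ is empty there is nothing to prove, so I assume $w$ non-empty; by alternation $w$ begins with a negative action, and $\kappa^+$ is a proper positive action (a daimon would end the path and force $w$ empty). Since $\pathLL{s} := \pathLL{p}\kappa^+w \in V_{\designset{E}}$, fix $\design{D} \in \designset{E}$ and $\design{R} \in \designset{E}^\perp$ with $\pathLL{s} = \normalisationSeq{\design{D}}{\design{R}}$; then $\dual{\pathLL{s}} = \normalisationSeq{\design{R}}{\design{D}}$ is a path of $\design{R}$, so $\fullview{\dual{\pathLL{s}}}$ is a subnet of $\design{R}$. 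Let $\chronicle{c}\,\Overline{\kappa^+}$ denote the chronicle $\view{\Overline{\pathLL{p}}\,\Overline{\kappa^+}}$, a chronicle of $\design{R}$ whose last action is $\Overline{\kappa^+}$ and whose immediate prefix $\chronicle{c}$ ends with a positive action. The guiding intuition is that cutting $\pathLL{s}$ down to $\pathLL{r}$ on the side of $\design{D}$ corresponds, on the side of $\design{R}$, to making the opponent give up (play $\daimon$) right after the action $\Overline{\kappa^+}$ dual to $\kappa^+$; the completion of Definition~\ref{defi:closure} is precisely the device that inserts such giving-up daimons while staying inside the behaviour $\designset{E}^\perp$, by Proposition~\ref{prop:completed_subdesign}.

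Concretely, I would define the subnet $\design{C} \subseteq \design{R}$ by deleting from the component of $\design{R}$ containing $\chronicle{c}\,\Overline{\kappa^+}$ that chronicle together with all its extensions, i.e.\ $\{\chronicle{d} \in \design{R} ; \chronicle{c}\,\Overline{\kappa^+} \text{ is not a prefix of } \chronicle{d}\}$. One first checks $\design{C}$ is still a net of designs: the only delicate point is \emph{Positivity}, which holds because $\chronicle{c}$ ends with a positive action and hence remains a legitimate leaf once its negative child $\Overline{\kappa^+}$ is removed. As $\design{C} \subseteq \design{R} \in \designset{E}^\perp$, the net version of Proposition~\ref{prop:completed_subdesign} yields $\design{C}^c \in \designset{E}^\perp$. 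Now completion restores exactly the giving-up daimon: since $\chronicle{c} \in \design{C}$ while $\chronicle{c}\,\Overline{\kappa^+} \notin \design{C}$, the chronicle $\chronicle{c}\,\Overline{\kappa^+}\daimon$ lies in $\design{C}^c$. The sequence $\dual{\pathLL{r}} = \Overline{\pathLL{p}}\,\Overline{\kappa^+}\daimon$ is a path (it is the prefix $\Overline{\pathLL{p}}\,\Overline{\kappa^+}$ of the path $\dual{\pathLL{s}}$ followed by $\daimon$), and the chronicles of $\fullview{\dual{\pathLL{r}}}$ are the views of the prefixes of $\Overline{\pathLL{p}}\,\Overline{\kappa^+}$ — already chronicles of $\design{C}$, except $\chronicle{c}\,\Overline{\kappa^+}$ itself which lies in $\design{C}^c$ — together with $\chronicle{c}\,\Overline{\kappa^+}\daimon$. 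Hence $\fullview{\dual{\pathLL{r}}}$ is a subnet of $\design{C}^c$, i.e.\ $\dual{\pathLL{r}} \in \PoD{\design{C}^c}$.

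To conclude, $\pathLL{r}$ is a prefix of $\pathLL{s} \in \PoD{\design{D}}$, hence $\pathLL{r} \in \PoD{\design{D}}$ by prefix-closure (Proposition~\ref{prop:PathsNet}). Since $\design{C}^c \in \designset{E}^\perp$ we have $\design{D} \perp \design{C}^c$, and the uniqueness of the normalization path (noted after Proposition~\ref{prop:NormPath}), applied to $\pathLL{r} \in \PoD{\design{D}}$ with $\dual{\pathLL{r}} \in \PoD{\design{C}^c}$, forces $\pathLL{r} = \normalisationSeq{\design{D}}{\design{C}^c}$; thus $\pathLL{p}\kappa^+$ is visitable in $\designset{E}$. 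I expect the main obstacle to be the bookkeeping of the second paragraph: checking that $\design{C}$ is a genuine net after truncation and, above all, that although $\design{C}^c$ may acquire extra branches — coming from long prefixes of $\dual{\pathLL{s}}$ whose views avoid $\Overline{\kappa^+}$, so that $\design{C}^c$ in general strictly contains $\fullview{\dual{\pathLL{r}}}^c$ — these are harmless, since only the inclusion $\fullview{\dual{\pathLL{r}}} \subseteq \design{C}^c$, and not an equality, is needed for $\design{C}^c$ to visit $\pathLL{r}$.
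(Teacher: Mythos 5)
Your proof is correct and is essentially the argument the paper leaves implicit (the corollary is stated without proof, as a consequence of the completion machinery): force the opponent to play $\daimon$ immediately after $\overline{\kappa^+}$ by completing a truncation of the witnessing net, use the net version of Proposition~\ref{prop:completed_subdesign} to stay inside $\designset{E}^\perp$, and conclude by uniqueness of the normalization path. The only cosmetic difference is that one may truncate all the way down to the subnet $\fullview{\Overline{\pathLL{p}}}$ of $\design{R}$, whose completion is exactly $\fullview{\dual{\pathLL{p}\kappa^+}}^c$, so that Proposition~\ref{prop:visitable-carac} applies verbatim and the check that your larger $\design{C}$ is still a net becomes unnecessary; this is the same idea.
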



\section{Incarnation expressed by means of paths}\label{sec:Incarnation}


\subsection{Incarnation}

A behaviour may contain useless designs with respect to orthogonality. This is clear when considering a design as a set of paths:  if $\behaviour{B}$ is a behaviour and $\design{D} \in \behaviour{B}$ then designs in $\behaviour{B}$ obtained from $\design{D}$ by extending one of its paths are useless with respect to membership to $\behaviour{B}$. This suggests the following definition:

\begin{defi}[Incarnation]\label{incarnationdessein}
Let $\behaviour{B}$ be a behaviour, $\design{D}$ be a design in $\behaviour{B}$. 
\begin{itemize}
\item $\Dincarnation{D}{\behaviour{B}} := \bigcup_{\design{R} \in \behaviour{B}^\perp} \normalisationDes{D}{R}$ is the {\em incarnation} of $\design{D}$ with respect to the behaviour $\behaviour{B}$. 
\item $\design{D}$ is {\em material} in $\behaviour{B}$ if $\design{D} = \Dincarnation{D}{\behaviour{B}}$.
\item The {\em incarnation} $\Bincarnation{\behaviour{B}}$ of a behaviour $\behaviour{B}$ is the  set of its material designs.
\end{itemize}
$\Dincarnation{D}{\behaviour{B}}$ is simply noted $\Dincarnation{D}{}$ when $\behaviour{B}$ is clear from the context.
\end{defi}

Note that $\bigcup_{\design{R} \in \behaviour{B}^\perp} \normalisationDes{D}{R}$ is a design: if $\design{R} \in \behaviour{B}^\perp$ then 
$\normalisationSeq{\design{D}}{\design{R}}$ is a path included in $\design{D}$, thus $\Dincarnation{D}{} = \bigcup_{\design{R} \in \behaviour{B}^\perp} \normalisationDes{D}{R}$ is a design included in $\design{D}$. 
Furthermore, by construction, $\bigcup_{\design{R} \in \behaviour{B}^\perp} \normalisationDes{D}{R} \in \behaviour{B}^{\perp\perp} = \behaviour{B}$. Hence $\Bincarnation{\behaviour{B}} \subset \behaviour{B}$.

 With respect to the inclusion relation on designs belonging to a given behaviour, the minimal designs are in the incarnation of this behaviour.
This is the core of the definition given by Girard in~\cite{DBLP:journals/mscs/Girard01}:
\begin{prop}
Let $\behaviour{B}$ be a behaviour, $\design{D}$ be a design in $\behaviour{B}$.
If $\design{D}$ is minimal in $\behaviour{B}$ with respect to inclusion then $\design{D} \in \Bincarnation{\behaviour{B}}$.
\end{prop}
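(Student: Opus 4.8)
The plan is to read the conclusion off directly from the observations recorded immediately before the statement, applying minimality only once at the end. The facts I would invoke are precisely those just established: that $\Dincarnation{D}{} = \bigcup_{\design{R} \in \behaviour{B}^\perp} \normalisationDes{D}{R}$ is a design, that $\Dincarnation{D}{} \subseteq \design{D}$, and that $\Dincarnation{D}{} \in \behaviour{B}$. For the inclusion, each normalization $\normalisationSeq{\design{D}}{\design{R}}$ is by Proposition~\ref{prop:NormPath} a path of $\design{D}$, so its full view is a subdesign of $\design{D}$, and the union of these subdesigns remains inside $\design{D}$. For membership, $\Dincarnation{D}{}$ lies in $\behaviour{B}^{\perp\perp}$, which equals $\behaviour{B}$ since $\behaviour{B}$ is a behaviour.

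With these in hand the argument closes in one step: $\Dincarnation{D}{}$ is an element of $\behaviour{B}$ that is included in $\design{D}$, so the hypothesis that $\design{D}$ is minimal in $\behaviour{B}$ with respect to inclusion forces $\Dincarnation{D}{} = \design{D}$. This equality is exactly the definition of $\design{D}$ being material in $\behaviour{B}$, hence $\design{D} \in \Bincarnation{\behaviour{B}}$, as required.

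There is no genuine obstacle here, since all the substance is carried by the preparatory remarks guaranteeing that $\Dincarnation{D}{}$ is a design of $\behaviour{B}$ contained in $\design{D}$. The single point worth keeping in mind is the direction of the inclusion: the incarnation always shrinks $\design{D}$ (every visited path already lies in $\design{D}$) rather than enlarging it, so that minimality is indeed the right hypothesis to upgrade $\Dincarnation{D}{} \subseteq \design{D}$ into an equality.
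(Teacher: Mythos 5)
Your proof is correct and follows exactly the route the paper intends: the remarks preceding the proposition establish that $\Dincarnation{D}{\behaviour{B}}$ is a design included in $\design{D}$ and belonging to $\behaviour{B}^{\perp\perp} = \behaviour{B}$, and the paper leaves it implicit that minimality then forces $\Dincarnation{D}{\behaviour{B}} = \design{D}$, which is precisely your one-step conclusion. Nothing is missing, and your closing observation about the direction of the inclusion is the right point to emphasize.
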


Note that the previous property is also true in the more general setting defined by Basaldella and Faggian for dealing with exponentials in Ludics~\cite{DBLP:conf/lics/BasaldellaF09}. The converse property, \ie, the fact that actions can only be used once in a normalization, is not true in their setting. In the present paper, thanks to linearity, the converse property does hold,. Linearity induces the following fact: $\dual{\normalisationSeq{\design{D}}{\design{R}}} = \normalisationSeq{\design{R}}{\design{D}}$ (see also~\cite{DBLP:journals/mscs/Girard01}).

\begin{prop}
Let $\behaviour{B}$ be a behaviour, $\design{D}$ be a design in $\behaviour{B}$.
If $\design{D} \in \Bincarnation{\behaviour{B}}$
then $\design{D}$ is minimal in $\behaviour{B}$ with respect to inclusion.
\end{prop}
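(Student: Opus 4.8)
The plan is to argue the implication in the following form: assuming $\design{D}$ is material in $\behaviour{B}$, I show that every $\design{C}\in\behaviour{B}$ with $\design{C}\subseteq\design{D}$ already equals $\design{D}$. Since the inclusion $\design{C}\subseteq\design{D}$ is given, it suffices to establish the reverse inclusion $\design{D}\subseteq\design{C}$, i.e.\ that every chronicle of $\design{D}$ is a chronicle of $\design{C}$; this immediately yields minimality.

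So I would fix an arbitrary chronicle $\chronicle{c}\in\design{D}$. Materiality gives $\design{D}=\Dincarnation{D}{\behaviour{B}}=\bigcup_{\design{R}\in\behaviour{B}^\perp}\normalisationDes{D}{R}$, so there is a net $\design{R}\in\behaviour{B}^\perp$ such that $\chronicle{c}$ occurs in $\fullview{\normalisationSeq{\design{D}}{\design{R}}}$, that is, $\chronicle{c}$ is visited during the normalization of $\design{D}$ against $\design{R}$. Since $\design{C}\in\behaviour{B}$ and $\design{R}\in\behaviour{B}^\perp$, we also have $\design{C}\perp\design{R}$, so $\normalisationSeq{\design{C}}{\design{R}}$ is a well-defined convergent path of $\design{C}$.

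The heart of the argument, and the step I expect to be the main obstacle, is to prove that the two normalization paths coincide: $\normalisationSeq{\design{C}}{\design{R}}=\normalisationSeq{\design{D}}{\design{R}}$. For this I would invoke the uniqueness property recorded just after Proposition~\ref{prop:NormPath}: whenever $\design{D}\perp\design{R}$, there is a \emph{unique} $\pathLL{p}\in\PoD{\design{D}}$ with $\dual{\pathLL{p}}\in\PoD{\design{R}}$, and it is exactly $\normalisationSeq{\design{D}}{\design{R}}$. Now $\normalisationSeq{\design{C}}{\design{R}}$ is a path of $\design{C}$ by Proposition~\ref{prop:NormPath}, hence lies in $\PoD{\design{C}}\subseteq\PoD{\design{D}}$ (because $\design{C}\subseteq\design{D}$ makes the full view of any path of $\design{C}$ a subnet of $\design{D}$ as well), while its dual $\dual{\normalisationSeq{\design{C}}{\design{R}}}=\normalisationSeq{\design{R}}{\design{C}}$ is a path of $\design{R}$, again by Proposition~\ref{prop:NormPath}, and so lies in $\PoD{\design{R}}$. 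Applying the uniqueness property to $\design{D}\perp\design{R}$ then forces $\normalisationSeq{\design{C}}{\design{R}}=\normalisationSeq{\design{D}}{\design{R}}$.

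From this equality I would conclude $\fullview{\normalisationSeq{\design{C}}{\design{R}}}=\fullview{\normalisationSeq{\design{D}}{\design{R}}}$; since $\normalisationSeq{\design{C}}{\design{R}}\in\PoD{\design{C}}$, this full view is a subdesign of $\design{C}$, and it contains $\chronicle{c}$ by the choice of $\design{R}$. Hence $\chronicle{c}\in\design{C}$. As $\chronicle{c}\in\design{D}$ was arbitrary, $\design{D}\subseteq\design{C}$, whence $\design{C}=\design{D}$, so that $\design{D}$ is minimal in $\behaviour{B}$ with respect to inclusion. The only delicate point is the coincidence of the two normalization paths, and the uniqueness statement after Proposition~\ref{prop:NormPath}, together with the monotonicity $\PoD{\design{C}}\subseteq\PoD{\design{D}}$, is precisely what makes it painless.
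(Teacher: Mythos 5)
Your proof is correct, and at bottom it is the contrapositive of the paper's own argument, routed through a different lemma, so a brief comparison is worthwhile. The paper takes an arbitrary design $\design{D}_0$ strictly included in $\design{D}$, notes it must miss some chronicle $\chronicle{c}$, picks $\design{R}_0\in\behaviour{B}^\perp$ whose normalization path against $\design{D}$ has $\chronicle{c}$ in its full view, and then concludes $\design{D}_0\not\perp\design{R}_0$ ``by linearity constraint on designs'', hence $\design{D}_0\notin\behaviour{B}$. You instead assume $\design{C}\in\behaviour{B}$ with $\design{C}\subseteq\design{D}$ and prove $\design{D}\subseteq\design{C}$, the crux being the equality $\normalisationSeq{\design{C}}{\design{R}}=\normalisationSeq{\design{D}}{\design{R}}$, which you obtain from the uniqueness statement following Proposition~\ref{prop:NormPath} together with the monotonicity $\PoD{\design{C}}\subseteq\PoD{\design{D}}$. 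Both arguments rest on the same underlying fact, namely that linearity makes the interaction trace deterministic and hence forced; the difference lies in what has to be argued. Your route only ever reasons about convergent interactions: the step the paper leaves terse (why normalization of the mutilated design against $\design{R}_0$ must fail) is replaced by an appeal to an explicitly stated uniqueness property, which is arguably cleaner and fills in precisely the most compressed point of the paper's proof. In exchange, the paper's form produces an explicit separating counter-design $\design{R}_0$ witnessing that every strictly smaller design is expelled from $\behaviour{B}$, which is marginally more informative but requires analysing failure of normalization rather than only its success.
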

\begin{proof}
Remark that $\bigcup_{\design{R} \in \behaviour{B}^\perp} \normalisationDes{D}{R} = \fullview{\bigcup_{\design{R} \in \behaviour{B}^\perp} \normalisationSeq{\design{D}}{\design{R}}}$.
To obtain a design $\design{D}_0$ strictly included in $\bigcup_{\design{R} \in \behaviour{B}^\perp} \normalisationDes{D}{R}$, we have to erase at least a chronicle $\chronicle{c}$ (and its extensions). But there is at least a path $\pathLL{p}_0\in {\bigcup_{\design{R} \in \behaviour{B}^\perp} \normalisationSeq{\design{D}}{\design{R}}}$ such that $\chronicle{c}\in\fullview{\pathLL{p}_0}$   hence if we denote by $\design{R}_0$ a net such that $\pathLL{p}_0= \normalisationSeq{\design{D}}{\design{R}_0}$ we have by linearity constraint on designs that $\design{D}_0\not\perp\design{R}_0$. Hence $\design{D}_0 \not\in \behaviour{B}$.
\end{proof}

Incarnation may be also defined for a set $\designset{F}$ of nets of designs, when such a set $\designset{F}$ is the orthogonal of a set $\designset{E}$ of designs of the same base, \ie, $\designset{F} = \designset{E}^\perp$:
\begin{defi}
Let $\designset{E}$ be a set of designs of the same base $\beta=\delta_0\vdash\delta_1,\dots,\delta_n$ (resp. $\beta= \vdash\delta_1,\dots,\delta_n$), $\design{R} = (\design{E}_i)$ be a net of designs belonging to $\designset{E}^\perp$, with $\design{E}_i$ of base $ \delta_i\vdash $ for $i\in\{1,\dots, n\}$ and $\design{E}_0$ based on $\vdash\delta_0$. 
With $\design{D} \in \designset{E}$, let $(\design{F}_i^{\design{D}}) = \normalisationDes{R}{D}$ be a net of designs such that $\design{F}_i^{\design{D}}$ is a design with the same base as $\design{E}_i$.\\
The {\em incarnation} of $\design{R}$, denoted $\Dincarnation{R}{\designset{E}^{\perp}}$, is the net $(\bigcup_{\design{D} \in \designset{E}} \design{F}_i^{\design{D}})$.\\
The {\em incarnation} of $\designset{E}^{\perp}$, written $|\designset{E}^{\perp}|$, is the set of nets of designs $\{\Dincarnation{R}{\designset{E}^{\perp}} ; \design{R} \in \designset{E}^{\perp}\}$.
\end{defi}

The incarnation of such nets of designs is well-defined: with notations given in the previous definition, for all $i$, $\design{F}_i^{\design{D}} \subset \design{E}_i$, hence $\bigcup_{\design{D} \in \designset{E}} \design{F}_i^{\design{D}}$ is a design included in $\design{E}_i$.
Furthermore, for all $\design{D} \in \designset{E}$, $\design{D} \perp \Dincarnation{R}{\designset{E}^{\perp}}$. Hence $|\designset{E}^{\perp}| \subset \designset{E}^{\perp}$.
Abusively, we note $\Dincarnation{R}{\designset{E}^{\perp}} = \bigcup_{\design{D} \in \designset{E}}\normalisationDes{R}{D}$.

\vspace{.5cm}
Let us consider a set of designs $\designset{E}$ of the same base, we
gave a necessary condition in proposition~\ref{stabilite-neg} for a path to be visitable in $\designset{E}$,
 notice now that visitability is sufficient for computing the incarnation of a design in the behaviour generated by $\designset{E}$: it follows from the definition that $|\design{D}|_{\designset{E}^{\perp\perp}} = \fullview{\PoD{\design{D}} \cap V_\designset{E}}$ when $\design{D} \in \designset{E}$. 
In other words, it is not necessary to compute the behaviour of a set of designs for computing the incarnation of its designs.
This suggests the following questions. Is it possible to compute directly the incarnation of the behaviour generated by $\designset{E}$, resting only on the paths visitable in $\designset{E}$? 
Are designs in $\designset{E}$ all necessary for computing this incarnation?
We remark below that none of these questions admits an obvious answer: on the one hand the biorthogonal may contain designs that are built from parts of several designs of $\designset{E}$, on the other hand even if some kinds of designs are clearly redundant some other cases remain out of reach. 

One of the easiest kind of designs of a biorthogonal that may be clearly generated from already known designs concerns those obtained by replacing subtrees by daimons.
In fact this is part of an explicit characterization of the relation $\design{D} \preccurlyeq \design{E}$, that reads $\design{D}$ is more defined than $\design{E}$, between designs with the same base. The relation $\design{D} \preccurlyeq \design{E}$ holds whenever $\design{D}^\perp\subset\design{E}^\perp$.
This characterization is part of the {\em separation theorem}~\cite{DBLP:journals/mscs/Girard01}:
the relation $\preccurlyeq$ is a partial order and we have that $\design{D} \preccurlyeq \design{E}$ iff every chronicle $\chronicle{c} \in \design{D} - \design{E}$ can be written $\chronicle{c}'\chronicle{d}$ for a certain $\chronicle{c}'$ such that $\chronicle{c}'\daimon\in \design{E}$.
The relation $\preccurlyeq$ subsumes the inclusion relation. Indeed, we have of course that if $\design{D}\subset\design{E}$ then $\design{D} \preccurlyeq \design{E}$. 
This is already taken into account in incarnation as a material design is minimal with respect to inclusion in a behaviour.
There is another fact causing a design $\design{D}$ to be more defined than a design $\design{E}$: by substituting chronicles $w\kappa^+w'$ in $\design{D}$ by the only chronicle $w\daimon$ in $\design{E}$. As a consequence, if $\design{D}$ is in the behaviour $\designset{E}^{\perp\perp}$ then so is $\design{E}$, moreover if $\design{D}$ is in the incarnation $|\designset{E}^{\perp\perp}|$ then so is $\design{E}$.
We call such a set of designs $\design{E}$ the {\em daimon closure} of $\design{D}$ and we note it ${\design{D}}^\daimon$ (see next definition). 
Relations of inclusion and to be more defined, as well as the daimon closure of a design, are schematized in Figure~\ref{fig:daimon_closure}.


\begin{figure}\label{fig:daimon_closure}
\begin{tabular}{p{1cm}cp{.2cm}p{1cm}cp{.2cm}p{1.3cm}c}
\raisebox{5ex}{$\color{black}{\design{D}} \color{black}{\preceq} {\color{red}{\design{E}}}$:} 
&
\begin{tikzpicture}[fill opacity=0.5,scale=.7]
	\filldraw[fill=red]
		(0,0)
		-- (intersection of 0,0---1,3 and -1,2.5--1,2.5) 
		-- 	node[near start,fill opacity = 1] {{\color{red}{$\daimon$}}} 
				node[near end,fill opacity = 1] {{\color{red}{$\daimon$}}} 
				(intersection of 0,0--0.5,4 and -1,2.5--1,2.5)
		-- (0.5,4)
		-- (2.5,4)
		-- node[midway,right,fill opacity = 1] {{\color{red}{$\design{E}$}}} (0,0);

	\draw[black,line width=3pt] 
		(0,0) 
		-- node[midway,left,fill opacity = 1] {{\color{black}{$\design{D}$}}} (-1,3)
		-- (1,3)
		-- cycle;
\end{tikzpicture}
&
~
&
\raisebox{5ex}{${\color{black}{\design{D}}} \color{black}{\subset} {\color{red}{\design{E}}}$:}
&
\begin{tikzpicture}[fill opacity=0.5,scale=.7]
	\filldraw[fill=red]
		(0,0)
		-- (intersection of 0,0---1,3 and -1,3.5--1,3.5) 
		-- (intersection of 0,0--0.5,4 and -1,3.5--1,3.5)
		-- (0.5,4)
		-- (2.5,4)
		-- node[midway,right,fill opacity = 1] {{\color{red}{$\design{E}$}}} (0,0);

	\draw[black,line width=3pt] 
		(0,0) 
		-- 	node[midway,left,fill opacity = 1] {{\color{black}{$\design{D}$}}} 
				(-1,3)
		-- (1,3)
		-- cycle;
\end{tikzpicture}
&
~
&
\raisebox{5ex}{${\color{red}{\design{E}}} \color{black}{\in \design{D}^\daimon}$:}
&
\begin{tikzpicture}[fill opacity=0.5,scale=.7]
	\filldraw[fill=red]
		(0,0)
		-- (intersection of 0,0---1,3 and -1,2.5--1,2.5) 
		-- 	node[near start,fill opacity = 1] {{\color{red}{$\daimon$}}} 
				node[near end,fill opacity = 1] {{\color{red}{$\daimon$}}} 
				(intersection of 0,0--0.5,3 and -1,2.5--1,2.5)
		-- (0.5,3)
		-- (1,3)
		-- node[midway,right,fill opacity = 1] {{\color{red}{$\design{E}$}}} (0,0);

	\draw[black,line width=3pt] 
		(0,0) 
		-- node[midway,left,fill opacity = 1] {{\color{black}{$\design{D}$}}} (-1,3)
		-- (1,3)
		-- cycle;
\end{tikzpicture}
\end{tabular}
\caption{Relations of inclusion and to be more defined, and daimon closure of a design.}
\end{figure}

\begin{defi}[Daimon closure]
Let $\design{D}$ be a design, the {\em daimon closure} of $\design{D}$, denoted by ${\design{D}}^\daimon$,
is the set of designs obtained from $\design{D}$ by substituting, for some set of  negative chronicles $\chronicle{c}\in\design{D}$, all the chronicles $\chronicle{c}\kappa^+w\in\design{D}$ by the chronicles $\chronicle{c}\daimon$.\\
Let $\designset{E}$ be a set of designs of the same base, the daimon closure of $\designset{E}$ noted ${\designset{E}}^\daimon$ is the set $\bigcup_{\design{D}\in\designset{E}} \design{D}^\daimon$.\\
\end{defi}

To rephrase what precedes, behaviours and incarnation are `closed under daimon'. Hence the complement $\designset{E} \setminus \designset{E}^\daimon$ of $\designset{E}^\daimon$ in $\designset{E}$ is sufficient for generating the incarnation of $\designset{E}$. However this does not fully address the question: in example~\ref{examples_4}, designs $\design{E}$ and $\design{F}$ are sufficient for generating the incarnation of, say, $\{\design{E}, \design{F}, \design{G}, \design{H}\}$.
\begin{exa}\label{examples_3}
We have that $|\{\design{E}', \design{F}', \design{G}'\}^{\perp\perp}| = \{\design{E}', \design{F}', \design{G}', \design{E}'', \design{F}''\}^\daimon$ where:\footnote{In the representation of designs, we may choose to put $\mu$ or $\sigma$ in any of the branches when they are not a focus of an action. The choice we take has no consequence on the result.}\\
\[
\design{E}' =
\infer{\vdash \xi, \sigma, \mu}{
	\infer{\xi.1\vdash \mu}{
		\infer{\vdash \xi.1.0, \mu}{
			\xi.1.0.0 \vdash \mu
		}
	}
	&
	\infer{\xi.2\vdash \sigma}{
		\infer{\vdash \xi.2.0, \sigma}{
			\sigma.1 \vdash \xi.2.0
		}
	}
}
,
\design{F}' =
\infer{\vdash \xi, \sigma, \mu}{
	\infer{\xi.1\vdash \mu}{
		\infer{\vdash \xi.1.0, \mu}{
			\xi.1.0.1 \vdash \mu
		}
	}
	&
	\infer{\xi.2\vdash \sigma}{	
		\infer{\vdash \xi.2.0, \sigma}{
			\sigma.2 \vdash \xi.2.0
		}
	}
}
,
\design{G}' =
\infer{\vdash \xi, \sigma, \mu}{
	\infer{\xi.1\vdash \mu}{
		\infer{\vdash \xi.1.0, \mu}{
			\infer{\mu.0 \vdash \xi.1.0}{
				\infer[\daimon]{\vdash \xi.1.0, \mu.0.0}{
				}
			}
		}
	}
	&
	\xi.2\vdash \sigma
}
\]

\[
\design{E}'' =
\infer{\vdash \xi, \sigma, \mu}{
	\infer{\xi.1\vdash \mu}{
		\infer{\vdash \xi.1.0, \mu}{
			\infer{\mu.0 \vdash \xi.1.0}{
				\infer{\vdash \xi.1.0, \mu.0.0}{
					\xi.1.0.0 \vdash \mu.0.0
				}
			}
		}
	}
	&
	\infer{\xi.2\vdash \sigma}{
		\infer{\vdash \xi.2.0, \sigma}{
			\sigma.1 \vdash \xi.2.0
		}
	}
}
,
\quad\design{F}'' =
\infer{\vdash \xi, \sigma, \mu}{
	\infer{\xi.1\vdash \mu}{
		\infer{\vdash \xi.1.0, \mu}{
			\infer{\mu.0 \vdash \xi.1.0}{
				\infer{\vdash \xi.1.0, \mu.0.0}{
					\xi.1.0.1 \vdash \mu.0.0
				}
			}
		}
	}
	&
	\infer{\xi.2\vdash \sigma}{	
		\infer{\vdash \xi.2.0, \sigma}{
			\sigma.2 \vdash \xi.2.0
		}
	}
}
\]
\end{exa}

\begin{exa}\label{examples_4}
We have that $|\{\design{E}, \design{F}\}^{\perp\perp}| = \{\design{E}, \design{F}, \design{G}, \design{H}\}^\daimon$ where:
\[
\design{E}=\!\!\!\!\!\!\!\!\!\!
\infer{\vdash \xi, \sigma, \mu}{
	\infer{\xi.1\vdash \sigma, \mu}{
		\infer{\vdash \xi.1.0, \sigma, \mu}{
			\infer{\sigma.1 \vdash \xi.1.0, \mu}{
				\infer{\vdash \xi.1.0, \sigma.1.0, \mu}{
					\infer{\mu.1 \vdash \xi.1.0, \sigma.1.0}{
						\infer[\daimon]{\vdash \xi.1.0, \sigma.1.0, \mu.1.0}{
						}
					}
				}
			}
		}
	}
}
,
\design{F} =\!\!\!\!\!\!\!\!\!\!
\infer{\vdash \xi, \sigma, \mu}{
	\infer{\mu.1\vdash \sigma, \xi}{
		\infer{\vdash \mu.1.0, \sigma, \xi}{
			\infer{\sigma.1 \vdash \mu.1.0, \xi}{
				\infer{\vdash \mu.1.0, \sigma.1.0, \xi}{
					\infer{\xi.1 \vdash \mu.1.0, \sigma.1.0}{
						\infer[\daimon]{\vdash \xi.1.0, \sigma.1.0, \mu.1.0}{
						}
					}
				}
			}
		}
	}
}
,
\design{G} =\!\!\!\!\!\!\!\!\!\!
\infer{\vdash \xi, \sigma, \mu}{
	\infer{\xi.1\vdash \sigma, \mu}{
		\infer{\vdash \xi.1.0, \sigma, \mu}{
			\infer{\mu.1 \vdash \xi.1.0, \sigma}{
				\infer{\vdash \xi.1.0, \mu.1.0, \sigma}{
					\infer{\sigma.1 \vdash \xi.1.0, \mu.1.0}{
						\infer[\daimon]{\vdash \xi.1.0, \sigma.1.0, \mu.1.0}{
						}
					}
				}
			}
		}
	}
}
,
\design{H} =\!\!\!\!\!\!\!\!\!\!
\infer{\vdash \xi, \sigma, \mu}{
	\infer{\mu.1\vdash \sigma, \xi}{
		\infer{\vdash \mu.1.0, \sigma, \xi}{
			\infer{\xi.1 \vdash \mu.1.0, \sigma}{
				\infer{\vdash \mu.1.0, \xi.1.0, \sigma}{
					\infer{\sigma.1 \vdash \mu.1.0, \xi.1.0}{
						\infer[\daimon]{\vdash \xi.1.0, \sigma.1.0, \mu.1.0}{
						}
					}
				}
			}
		}
	}
}
\]
\end{exa}

Note that each design is obtained from each other one by commuting some actions.
Obviously, such commutations do not always give rise to elements of the biorthogonal as it depends on the rest of the set of designs. Moreover, designs in the incarnation may not necessarily be obtained this way: In example~\ref{examples_3}, designs $\design{E}''$ and $\design{F}''$ are built by adding the pair of actions $(+,\mu,\{0\})(-,\mu.0,\{0\})$ into paths which become visitable reversing proposition~\ref{stabilite-neg}: every path in a dual design travels through this pair.

\vspace{1em}

\vspace{.5cm}
From the previous remarks, it seems that there is no clear way to compute directly an incarnation. To circumvent this difficulty, we use in this paper an indirect approach for computing an incarnation. 
First, as we develop it in the next subsection, the incarnation $|\designset{E}^\perp|$ may be defined by means of $\designset{E}$ only, where $\designset{E}$ is a set of designs.
Second we remark with the following proposition that $|\designset{E}^{\perp\perp}| = |(|\designset{E}^\perp|)^\perp|$. Hence computing an incarnation involves twice the same procedure, \ie, computing the incarnation of the dual.
\begin{prop}\label{prop:charac_incarnation_dualdual}
Let $\designset{E}$ be a set of designs of the same base $\beta$,
$\designset{E}^{\perp\perp} = |\designset{E}^{\perp}|^\perp$.
\end{prop}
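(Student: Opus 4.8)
The plan is to establish the two inclusions separately, being careful that each rests on a \emph{different} monotonicity property of orthogonality. For sets of (nets of) designs, orthogonality is antitone, since $\designset{F}^\perp = \bigcap_{\design{R} \in \designset{F}} \design{R}^\perp$; for the subdesign relation on a single design or net it is, on the contrary, monotone, i.e. $\design{S} \subset \design{R}$ implies $\design{S}^\perp \subset \design{R}^\perp$. This latter fact is the inclusion case of the characterization of $\preccurlyeq$ recalled just above the statement ($\design{S} \subset \design{R}$ gives $\design{S} \preccurlyeq \design{R}$, and $\design{S} \preccurlyeq \design{R}$ iff $\design{S}^\perp \subset \design{R}^\perp$), and it extends to nets as remarked after Proposition~\ref{prop:completed_subdesign}; concretely, a convergent normalization of a design against a subnet $\design{S}$ visits only chronicles that already lie in $\design{R}$, so by determinism the same sequence of actions is a convergent normalization against $\design{R}$.

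For the inclusion $\designset{E}^{\perp\perp} \subset |\designset{E}^{\perp}|^\perp$, I would simply use that $|\designset{E}^{\perp}| \subset \designset{E}^{\perp}$ (every incarnated net is in particular a net of $\designset{E}^\perp$, as noted right after the definition of net-incarnation). Since set-orthogonality is antitone, this gives $\designset{E}^{\perp\perp} = (\designset{E}^\perp)^\perp \subset (|\designset{E}^{\perp}|)^\perp = |\designset{E}^{\perp}|^\perp$.

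For the converse inclusion $|\designset{E}^{\perp}|^\perp \subset \designset{E}^{\perp\perp}$, I would take an arbitrary design $\design{D} \in |\designset{E}^{\perp}|^\perp$ and show $\design{D} \perp \design{R}$ for every net $\design{R} \in \designset{E}^\perp$, which is exactly the assertion $\design{D} \in \designset{E}^{\perp\perp}$. Fixing such an $\design{R}$, its incarnation $\Dincarnation{R}{\designset{E}^{\perp}}$ belongs to $|\designset{E}^{\perp}|$ by definition of the latter, so the hypothesis on $\design{D}$ yields $\design{D} \perp \Dincarnation{R}{\designset{E}^{\perp}}$. But $\Dincarnation{R}{\designset{E}^{\perp}}$ is a subnet of $\design{R}$: each of its components $\bigcup_{\design{D}' \in \designset{E}} \design{F}_i^{\design{D}'}$ is included in the corresponding component $\design{E}_i$ of $\design{R}$, as already observed. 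Subdesign-monotonicity of orthogonality then turns $\design{D} \perp \Dincarnation{R}{\designset{E}^{\perp}}$ into $\design{D} \perp \design{R}$, and since $\design{R}$ was arbitrary we conclude $\design{D} \in \designset{E}^{\perp\perp}$.

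The only delicate point is this last subdesign-monotonicity step $\design{D} \perp \Dincarnation{R}{\designset{E}^{\perp}} \Rightarrow \design{D} \perp \design{R}$; I expect it to be the main thing to get right, and I would justify it via determinism of normalization (the daimon-reaching play against the incarnated subnet uses only actions present in the full net $\design{R}$) rather than re-deriving it, invoking that the $\subset\,\Rightarrow\,\preccurlyeq$ half of the separation theorem transfers from designs to nets. Everything else is purely formal manipulation of the definitions of $(\cdot)^\perp$ and of incarnation.
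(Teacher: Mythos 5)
Your proof is correct and follows essentially the same route as the paper: the easy inclusion $\designset{E}^{\perp\perp} \subset |\designset{E}^{\perp}|^\perp$ from $|\designset{E}^{\perp}| \subset \designset{E}^{\perp}$ and antitonicity, and the converse by testing $\design{D} \in |\designset{E}^{\perp}|^\perp$ against $\Dincarnation{R}{\designset{E}^{\perp}}$ for each $\design{R} \in \designset{E}^\perp$ and transferring the convergent normalization from the incarnated subnet to $\design{R}$ itself. Your ``subdesign-monotonicity'' step, justified by determinism of normalization, is exactly the paper's observation that the normalization path in $\Dincarnation{R}{\designset{E}^{\perp}}$ is also a path in $\design{R}$.
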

\begin{proof}
We already noticed that $|\designset{E}^{\perp}| \subset \designset{E}^{\perp}$. Hence $\designset{E}^{\perp\perp} \subset |\designset{E}^{\perp}|^\perp$.
Let $\design{D} \in |\designset{E}^{\perp}|^\perp$, for all $\design{R} \in \designset{E}^{\perp}$, $\design{D} \perp \Dincarnation{R}{\designset{E}^{\perp}}$ hence $\normalisationSeq{\Dincarnation{R}{\designset{E}^{\perp}}}{\design{D}}$ is a normalization path in $\Dincarnation{R}{\designset{E}^{\perp}}$, hence also in $\design{R}$. Thus $\design{D} \perp \design{R}$. 
\end{proof}

\subsection{Direct computation of the incarnation of the dual of a set of designs}
 
Let us define
$\dual{V_\designset{E}} = \{\dual{\pathLL{p}} ; \pathLL{p} \in V_\designset{E}\}$. It is obvious that if $\pathLL{p}$ is visitable in $\designset{E}$ then $\dual{\pathLL{p}}$ is visitable in $\designset{E}^\perp$. In other words, $\dual{V_\designset{E}} \subset V_{\designset{E}^\perp}$. 
Recall that a design in $|\designset{E}^\perp|$ should be the set of views of coherent paths of $\dual{V_\designset{E}}$. Moreover, such a design should be orthogonal to each design in $\designset{E}$, hence
maximal cliques of $\dual{V_\designset{E}}$ are natural candidates for defining designs of $|\designset{E}^\perp|$. 
This is not sufficient. Indeed we get a counter-example by transposing in Ludics an example used by Ehrhard for the study of hypercoherences~\cite{DBLP:journals/tcs/Ehrhard00}. In that counter-example~\ref{exa:anticlique}, we notice that the infinite sequence resulting   of an increasing sequence of visitable paths has to belong to some design of $\designset{E}$.
With that constraint we are able to characterize the incarnation of the dual (propositions~\ref{prop:clique_dual} and~\ref{prop:cs_clique_dual}).
First of all, we remark with the following proposition that the dual of a clique of visitable paths is an anticlique of paths, this property is necessary for proving materiality.

\begin{prop}
Let $\designset{E}$ be a set of designs based on $\beta$, $\pathLL{p}$ and $\pathLL{q}$ be two distinct paths of $V_\designset{E}$,
\begin{itemize}
\item if $\pathLL{p} \coh \pathLL{q}$ then $\dual{\pathLL{p}} \not\coh \dual{\pathLL{q}}$,
\item if $C$ is a clique of paths based on $\beta$ visitable in $\designset{E}$ then $\dual{C}$ is an anticlique of paths based on $\beta^\perp$.
\end{itemize}
\end{prop}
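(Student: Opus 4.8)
The plan is to derive both items from the comparability property of coherent paths together with the positive-determinism clause of Definition~\ref{defi:coherence_path}. The second item is in fact an immediate consequence of the first: if $C$ is a clique of visitable paths, then any two distinct $\pathLL{p},\pathLL{q}\in C$ are coherent and visitable, so the first item gives $\dual{\pathLL{p}}\not\coh\dual{\pathLL{q}}$; since $\dual{}$ is injective (because $\dual{\dual{\pathLL{p}}}=\pathLL{p}$) and each $\dual{\pathLL{p}}$ is a path on $\beta^\perp$ by the first bullet of Proposition~\ref{stabilite-neg}, the set $\dual{C}$ consists of pairwise incoherent paths on $\beta^\perp$, i.e. an anticlique. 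So I would concentrate entirely on the first item.

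First I would record the setup. Since $\pathLL{p}$ and $\pathLL{q}$ are visitable in $\designset{E}$ they are positive-ended paths, and by Proposition~\ref{stabilite-neg} their duals $\dual{\pathLL{p}}$ and $\dual{\pathLL{q}}$ are genuine paths on $\beta^\perp$, so the relation $\coh$ between them is meaningful. To establish incoherence it suffices, by the positive clause of Definition~\ref{defi:coherence_path}, to exhibit prefixes $u\kappa_1^+$ of $\dual{\pathLL{p}}$ and $u\kappa_2^+$ of $\dual{\pathLL{q}}$ with the same view $\view{u}$ but $\kappa_1^+\neq\kappa_2^+$.

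Next comes the case analysis, furnished by the comparability property recalled just after Definition~\ref{defi:coherence_path}: since $\pathLL{p}\coh\pathLL{q}$ and $\pathLL{p}\neq\pathLL{q}$, either they first differ on negative actions or one is a proper prefix of the other. In the first case write $\pathLL{p}=w\kappa_1^-\dots$ and $\pathLL{q}=w\kappa_2^-\dots$ with $\kappa_1^-\neq\kappa_2^-$ two proper negative actions; dualizing turns the common prefix into $\Overline{w}$ and the differing actions into the distinct \emph{positive} actions $\Overline{\kappa_1^-}\neq\Overline{\kappa_2^-}$, so taking $u=\Overline{w}$ (for which $\view{u}=\view{u}$ trivially) violates positive determinism. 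In the prefix case, say $\pathLL{p}$ is a proper prefix of $\pathLL{q}$; then $\pathLL{p}$ cannot end with a daimon (a daimon ends a path), so it ends with a proper positive action and $\dual{\pathLL{p}}=\Overline{\pathLL{p}}\,\daimon$, whereas $\pathLL{q}=\pathLL{p}\kappa'^-\dots$ by alternation, so $\dual{\pathLL{q}}$ has $\Overline{\pathLL{p}}\,\Overline{\kappa'^-}$ as a prefix (regardless of whether $\pathLL{q}$ ends with a daimon). Taking $u=\Overline{\pathLL{p}}$ then confronts the two distinct positive actions $\daimon$ and $\Overline{\kappa'^-}$, again contradicting positive determinism. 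In both cases $\dual{\pathLL{p}}\not\coh\dual{\pathLL{q}}$, which proves the first item.

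The bookkeeping around the daimon is where I expect the only real friction: I must check that the point where the two duals diverge is genuinely a positive position, that in the prefix case the inserted daimon lands exactly at the end of the shorter dual while the longer dual continues with a proper positive action there, and that the two compared actions are positive so that it is the positive clause (not the negative-propagation clause) of coherence that is being violated. Once these polarity and daimon placements are verified, the argument is purely combinatorial and needs no computation.
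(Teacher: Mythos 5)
Your proof is correct and takes essentially the same route as the paper's: both argue via comparability that two distinct coherent positive-ended paths either first differ on negative actions or one properly extends the other, and in each case observe that the duals then first differ on positive actions (proper duals in the first case, a proper action against the appended $\daimon$ in the prefix case), violating the positive clause of coherence. Your extra bookkeeping on daimon placement and on deducing the second item from the first via injectivity of dualization is exactly what the paper leaves implicit.
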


\begin{proof}
Let $\pathLL{p}, \pathLL{q}$ be two distinct paths of $V_\designset{E}$, and suppose that $\pathLL{p} \coh \pathLL{q}$:
\begin{itemize}
\item Either one sequence strictly extends the other: w.l.o.g. $\pathLL{p} = \pathLL{q}\kappa^-w$. 
Then $\dual{\pathLL{q}} = \overline{\pathLL{q}}\daimon$ and $\dual{\pathLL{p}} = \overline{\pathLL{q}}\overline{\kappa^-}\dual{w}$. The paths $\dual{\pathLL{p}}$ and $\dual{\pathLL{q}}$ are strictly incoherent.
\item Or there exists two distinct negative actions $\kappa_1^-$ and $\kappa_2^-$ such that $\pathLL{p} = w\kappa_1^-w_1$ and $\pathLL{q} = w\kappa_2^-w_2$. 
Then $\dual{\pathLL{p}} = \overline{w}\overline{\kappa_1^-}\dual{w_1}$ and $\dual{\pathLL{q}} = \overline{w}\overline{\kappa_2^-}\dual{w_2}$  which are strictly incoherent. 
\end{itemize}
It follows that if $C$ is a clique of $V_\designset{E}$ then $\dual{C}$ is an anticlique of paths.
\end{proof}

Let $C$ be a set of visitable paths. The fact that $\dual{C}$ is an anticlique of paths (a set of pairwise incoherent paths) does not ensure that $C$ is a clique, as the following example shows.
\begin{exa}\label{exa:anticlique}
Let $\designset{E}=\{\design{E},\design{F}\}$ where:\\
\begin{center}
$\design{E}=\infer{\vdash \xi}{
	\infer{\xi1 \vdash}{
		\infer{\vdash \xi11}{
			\xi110 \vdash
		}
	}
	&
	\infer{\xi2 \vdash}{
		\infer{\vdash \xi21}{
			\xi210 \vdash
		}
	}
}$
~~~and~~~
$\design{F}=\infer{\vdash \xi}{
	\infer{\xi1 \vdash}{
		\infer{\vdash \xi11}{
			\xi110 \vdash
		}
	}
	&
	\infer{\xi2 \vdash}{
		\infer{\vdash \xi21}{
			\xi211 \vdash
		}
	}
}$
\end{center}
Paths $\pathLL{p}$ 
and $\pathLL{q}$ defined below are incoherent positive visitable paths in $\designset{E}$ and $\dual{\pathLL{p}}$ and $\dual{\pathLL{q}}$ are also incoherent:\\
$\pathLL{p}=(+,\xi,\{1,2\})(-,\xi1,\{1\})(+,\xi11,\{0\})(-,\xi2,\{1\})(+,\xi21,\{0\})$\\
$\pathLL{q}=(+,\xi,\{1,2\})(-,\xi2,\{1\})(+,\xi21,\{1\})$.
\end{exa}

We give below a direct characterization of the incarnation of the dual of a set of designs $\designset{E}$.
Though simple, it has the disadvantage of requiring a test toward each design of the set $\designset{E}$. We develop in propositions~\ref{prop:clique_dual} and~\ref{prop:cs_clique_dual} a more complex characterization that considers only tests toward the set of visitable paths of $\designset{E}$.

\begin{prop}
Let $\designset{E}$ be a set of designs based on $\beta$. Let $C \subset V_\designset{E}$ such that for all design $\design{D} \in \designset{E}$, $C \cap \PoD{\design{D}} \neq \emptyset$ and $\dual{C}$ is a clique of $\dual{V_\designset{E}}$, then $\fullview{\dual{C}}$ belongs to $|\designset{E^\perp}|$.
\end{prop}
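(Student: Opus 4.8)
The plan is to establish two facts about the net $\design{R} := \fullview{\dual{C}}$: that $\design{R}$ is a net of designs lying in $\designset{E}^\perp$, and that $\design{R}$ is \emph{material}, i.e. $\design{R} = \Dincarnation{R}{\designset{E}^{\perp}}$; together these give exactly $\design{R} \in |\designset{E}^\perp|$.

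First I would verify that $\design{R} = \fullview{\dual{C}}$ really is a net of designs based on $\beta^\perp$. Every $\pathLL{p} \in C \subset V_\designset{E}$ is positive, so its dual $\dual{\pathLL{p}}$ is again a positive path: either $\dual{\pathLL{p}} = \Overline{\pathLL{p}}\daimon$ ends with the (positive) daimon, or, when $\pathLL{p}$ already ends with a daimon, deleting it leaves a positive last action. Hence $\dual{C}$ is a set of pairwise coherent paths whose maximal elements are positive, and Proposition~\ref{prop:pathsTOnet} produces the net $\design{R} = \fullview{\dual{C}}$.

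Next I would prove orthogonality and, simultaneously, pin down the normalization paths. Fix $\design{D} \in \designset{E}$. By hypothesis there is $\pathLL{p} \in C \cap \PoD{\design{D}}$; then $\dual{\pathLL{p}} \in \dual{C}$, so $\fullview{\dual{\pathLL{p}}} \subset \design{R}$ and therefore $\dual{\pathLL{p}} \in \PoD{\design{R}}$. By the characterization of orthogonality through dual paths stated after Proposition~\ref{prop:NormPath} (together with the uniqueness of the witnessing path), this yields $\design{D} \perp \design{R}$ and $\pathLL{p} = \normalisationSeq{\design{D}}{\design{R}}$; in particular $\design{R} \in \designset{E}^\perp$, so its incarnation $\Dincarnation{R}{\designset{E}^{\perp}}$ is well defined. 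The same uniqueness forces $C \cap \PoD{\design{D}}$ to be the singleton $\{\normalisationSeq{\design{D}}{\design{R}}\}$: any of its elements has its dual in $\PoD{\design{R}}$, and there is exactly one such path in $\PoD{\design{D}}$.

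Finally I would settle materiality. The inclusion $\Dincarnation{R}{\designset{E}^{\perp}} \subset \design{R}$ holds by construction of the incarnation of a net. For the converse, every $\pathLL{p} \in C$ is visitable, hence $\pathLL{p} \in \PoD{\design{D}}$ for some $\design{D} \in \designset{E}$; the singleton property then gives $\pathLL{p} = \normalisationSeq{\design{D}}{\design{R}}$, so $C = \{\normalisationSeq{\design{D}}{\design{R}} \;;\; \design{D} \in \designset{E}\}$. Using $\normalisationSeq{\design{R}}{\design{D}} = \dual{\normalisationSeq{\design{D}}{\design{R}}}$ (linearity), the (abusive) formula for the incarnation of a net gives
\[
\Dincarnation{R}{\designset{E}^{\perp}} = \bigcup_{\design{D} \in \designset{E}} \fullview{\normalisationSeq{\design{R}}{\design{D}}} = \bigcup_{\design{D} \in \designset{E}} \fullview{\dual{\normalisationSeq{\design{D}}{\design{R}}}} = \bigcup_{\pathLL{p} \in C} \fullview{\dual{\pathLL{p}}} = \fullview{\dual{C}} = \design{R}.
\]
Thus $\design{R}$ is material and $\fullview{\dual{C}} = \design{R} \in |\designset{E}^\perp|$. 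I expect the main obstacle to be the third paragraph: correctly invoking Proposition~\ref{prop:pathsTOnet} so that the clique $\dual{C}$ yields an honest net of designs, and then exploiting determinism/uniqueness of normalization to show that $C$ is \emph{precisely} the family of normalization paths $\normalisationSeq{\design{D}}{\design{R}}$, since this identification is exactly the bridge from the path-level data to the design-level incarnation.
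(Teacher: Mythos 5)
Your proof is correct, and its first two steps coincide with the paper's: you build the net $\fullview{\dual{C}}$ from the clique $\dual{C}$ via Proposition~\ref{prop:pathsTOnet}, and you obtain $\design{D} \perp \fullview{\dual{C}}$ for every $\design{D}\in\designset{E}$ from the dual-path characterization of orthogonality stated after Proposition~\ref{prop:NormPath} (the paper compresses this into ``normalisation is deterministic''). Where you genuinely diverge is materiality. The paper argues by contradiction on minimality: if a strict subnet $\design{R}'$ of $\fullview{\dual{C}}$ were in $\designset{E}^\perp$, some $\dual{\pathLL{p}}\in\dual{C}$ would be missing from $\PoD{\design{R}'}$, and normalization of $\design{R}'$ against a design $\design{D}\in\designset{E}$ carrying $\pathLL{p}$ would diverge; this yields membership in $|\designset{E}^{\perp}|$ because the incarnation of a net is itself a subnet lying in $\designset{E}^\perp$, a link the paper leaves implicit. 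You instead compute the incarnation outright: uniqueness of the path witnessing orthogonality forces $C\cap\PoD{\design{D}}$ to be the singleton $\{\normalisationSeq{\design{D}}{\design{R}}\}$, hence $C=\{\normalisationSeq{\design{D}}{\design{R}}\;;\;\design{D}\in\designset{E}\}$, and the defining formula of $\Dincarnation{R}{\designset{E}^{\perp}}$ together with the linearity fact $\normalisationSeq{\design{R}}{\design{D}}=\dual{\normalisationSeq{\design{D}}{\design{R}}}$ collapses the union of full views exactly to $\fullview{\dual{C}}$. Your route costs a bit of bookkeeping but buys an exact identification of $C$ with the family of normalization paths of $\design{R}$ against $\designset{E}$ --- which is precisely the set the paper constructs when proving the converse proposition --- and it replaces the implicit ``minimal implies material'' step by a direct application of the definition of incarnation; the paper's route is shorter but relies on that implicit step and on a divergence argument rather than an explicit computation.
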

\begin{proof}
-- $\dual{C}$ is a clique then $\fullview{\dual{C}}$ is a design.\\
-- As normalisation is deterministic and by construction of $\fullview{\dual{C}}$, we have that for all design $\design{D} \in \designset{E}$, $\design{D} \perp \fullview{\dual{C}}$, \ie, $\fullview{\dual{C}} \in \designset{E}^\perp$.\\
-- Suppose there exists a strict subnet $\design{R}$ of $\fullview{\dual{C}}$ that is in $\designset{E}^\perp$. Hence there exists a path $\dual{\pathLL{p}} \in \dual{C}$ and $\dual{\pathLL{p}} \not\in \PoD{\design{R}}$. By construction of $\fullview{\dual{C}}$, there exists a design $\design{D} \in \designset{E}$ such that $\pathLL{p} \in \PoD{\design{D}}$. Hence normalisation between $\design{R}$ and $\design{D}$ fails : $\design{R} \not\in \designset{E}^\perp$. Contradiction. Then $\fullview{\dual{C}} \in |\designset{E^\perp}|$.
\end{proof}

\begin{prop}
Let $\designset{E}$ be a set of designs based on $\beta$. 
Let $\design{R} \in |\designset{E}^\perp|$ then there exists $C \subset V_\designset{E}$ such that for all design $\design{D} \in \designset{E}$, $C \cap \PoD{\design{D}} \neq \emptyset$, $\dual{C}$ is a clique of $\dual{V_\designset{E}}$ and $\fullview{\dual{C}} = \design{R}$.
\end{prop}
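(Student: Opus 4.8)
The statement is the converse of the previous proposition: given $\design{R}\in|\designset{E}^\perp|$ I must exhibit a witness $C$. The plan is to read $C$ off the normalizations of $\design{R}$ against the designs of $\designset{E}$. Concretely I set
\[
C=\{\normalisationSeq{\design{D}}{\design{R}}\;;\;\design{D}\in\designset{E}\},
\]
so that, by linearity ($\dual{\normalisationSeq{\design{D}}{\design{R}}}=\normalisationSeq{\design{R}}{\design{D}}$), $\dual{C}=\{\normalisationSeq{\design{R}}{\design{D}}\;;\;\design{D}\in\designset{E}\}$ is a family of paths all read inside the \emph{single} net $\design{R}$. This choice is essentially forced: since $\design{R}\in|\designset{E}^\perp|\subset\designset{E}^\perp$, the net $\design{R}$ is material, i.e.\ $\design{R}=\bigcup_{\design{D}\in\designset{E}}\normalisationDes{R}{D}$. (If $\design{R}=\Dincarnation{S}{E^\perp}$ for some $\design{S}\in\designset{E}^\perp$, then each $\fullview{\normalisationSeq{\design{S}}{\design{D}}}$ is included in $\design{R}$ and, normalization being deterministic, $\normalisationSeq{\design{R}}{\design{D}}=\normalisationSeq{\design{S}}{\design{D}}$, whence materiality.)

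With this $C$ the three bookkeeping conditions are immediate. First, each $\normalisationSeq{\design{D}}{\design{R}}$ with $\design{D}\in\designset{E}$ and $\design{R}\in\designset{E}^\perp$ is visitable by definition, so $C\subset V_{\designset{E}}$ and hence $\dual{C}\subset\dual{V_{\designset{E}}}$. Second, for a fixed $\design{D}\in\designset{E}$ Proposition~\ref{prop:NormPath} gives $\normalisationSeq{\design{D}}{\design{R}}\in\PoD{\design{D}}$, so $C\cap\PoD{\design{D}}\neq\emptyset$. Third, distributing $\fullview{\cdot}$ over the union of paths, $\fullview{\dual{C}}=\bigcup_{\design{D}\in\designset{E}}\fullview{\normalisationSeq{\design{R}}{\design{D}}}=\bigcup_{\design{D}\in\designset{E}}\normalisationDes{R}{D}=\design{R}$ by materiality.

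The only real content is that $\dual{C}$ is a clique, and I would deduce it from the general fact that, for \emph{any} net $\design{R}$, the whole set $\PoD{\design{R}}$ is a clique of paths; since each $\normalisationSeq{\design{R}}{\design{D}}$ lies in $\PoD{\design{R}}$ (Proposition~\ref{prop:NormPath}), this suffices. To establish that fact I would verify the three clauses of Definition~\ref{defi:coherence_path} for arbitrary $\pathLL{p},\pathLL{q}\in\PoD{\design{R}}$. The polarity clause follows from the base of $\design{R}$. For the positive clause, if $w_1\kappa_1^+$ and $w_2\kappa_2^+$ are prefixes with $\view{w_1}=\view{w_2}$, then $\view{w_1}\kappa_1^+$ and $\view{w_2}\kappa_2^+$ are chronicles of $\design{R}$ (Propositions~\ref{chroniclebase} and~\ref{prop:PathsNet}) extending the same negative-ended chronicle; as a clique of chronicles admits at most one positive extension of a given chronicle (comparability), $\kappa_1^+=\kappa_2^+$. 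The propagation clause is the delicate one and is where I expect the main obstacle: given negative $\kappa_1^-$ in $\pathLL{p}$, $\kappa_2^-$ in $\pathLL{q}$ with equal view-up-to-justifier $\chronicle{c}$ and distinct foci, one has $\view{w_1\kappa_1^-}=\chronicle{c}\kappa_1^-$ and $\view{w_2\kappa_2^-}=\chronicle{c}\kappa_2^-$; every ulterior action hereditarily justified by $\kappa_i^-$ has a view that is a chronicle of $\design{R}$ extending $\chronicle{c}\kappa_i^-$, so two such views first differ on the negative actions $\kappa_1^-\neq\kappa_2^-$, and the Propagation condition of coherence on chronicles inside $\design{R}$ then forces all subsequent foci — in particular those of the two ulterior actions — to be pairwise distinct. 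This transport of the locative clique structure of $\design{R}$ from its chronicles to its full paths is the crux; the rest of the argument is direct bookkeeping.
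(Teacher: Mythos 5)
Your proof is correct and follows essentially the same route as the paper's: you take the same witness, $\dual{C}=\{\normalisationSeq{\design{R}}{\design{D}} \;;\; \design{D}\in\designset{E}\}$, and you obtain the clique property from the same key fact the paper invokes, namely that $\dual{C}\subset\PoD{\design{R}}$ and paths of a single net are pairwise coherent, with $\fullview{\dual{C}}=\design{R}$ coming from materiality of $\design{R}$ in $|\designset{E}^\perp|$. The only difference is one of detail: the paper simply asserts these two facts, whereas you additionally spell out the determinism argument behind $\normalisationSeq{\design{R}}{\design{D}}=\normalisationSeq{\design{S}}{\design{D}}$ and the chronicle-level verification (via Comparability and Propagation in the clique of chronicles $\design{R}$) that $\PoD{\design{R}}$ is a clique of paths.
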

\begin{proof}
As $\design{R} \in |\designset{E}^\perp|$, we know that $\design{R} = \fullview{\{\normalisationSeq{\design{R}}{\design{D}} ; \design{D} \in \designset{E}\}}$. Let us consider $C=\dual{\dual{C}}$ where $\dual{C} = \{\normalisationSeq{\design{R}}{\design{C}} ; \design{C} \in \designset{E}\}$.
\begin{itemize}
\item By construction, $\dual{C}$ is a subset of $\dual{V_\designset{E}}$ and for all design $\design{D} \in \designset{E}$, $C \cap \PoD{\design{D}} \neq \emptyset$. 
\item As $\design{R}$ is a design and $\dual{C} \subset\PoD{\design{R}}$, $\dual{C}$ is a clique.\qedhere
\end{itemize}
\end{proof}

Let us consider now what constraints should be required on such sets $C$ referring only to visitable paths for $C$ to give rise to a design of the incarnation of the dual.

To be a maximal clique of $\dual{V_{\designset{E}}}$ is not sufficient for characterizing the material designs of $\designset{E}^\perp$, when $\designset{E}$ is a set of designs. 
Two additional constraints are required.
First, such a set of paths $C$ should be {\em saturated} with respect to paths of $V_{\designset{E}}$: if a visitable path can be followed in a design $\design{D}$ of $\designset{E}$ and dually in $\dual{C}$ then the next positive action in $\design{D}$ should be present dually in the path of $\dual{C}$, except when this positive action is a daimon.
Second, the intersection of the anticlique made of duals with designs of $\designset{E}$ should be, in some way, {\em finite}.
Indeed, the same phenomenon also happens with serial-parallel coherence spaces (the ones obtained from $1= \{*\}$ with the connectives $\with$ and $\oplus$): being given a family of maximal cliques that covers a coherent space, maximal anticliques cut each such maximal clique if the space is finite, this is not always the case if the space is infinite. Example~\ref{example-infini} given below is a transposition in Ludics of the one given by T. Ehrhard in his study of hypercoherences~\cite{DBLP:journals/tcs/Ehrhard00}.

\begin{exa}\label{example-infini}


Let us consider designs generated by the following grammar ($n \geq 0$):\\
\begin{center}
$\begin{array}{ccccccc}
\design{D}^0_\xi := 
\raisebox{-2.2mm}{
\infer{\xi \vdash}{
	\infer{\vdash \xi0}{\xi00 \vdash}
}
}
& \hspace{1cm} &
\design{D}^1_\xi := 
\raisebox{-2.2mm}{
\infer{\xi \vdash}{
	\infer{\vdash \xi1}{\xi10 \vdash}
}
}
& \hspace{1cm} &
\design{D}^{2n+2}_\xi := 
\raisebox{-2.2mm}{
\infer{\xi \vdash}{
	\infer{\vdash \xi1}{
		\infer*[\design{D}^{2n}_{\xi11}]{\xi11\vdash}{}
	}
}
}
& \hspace{1cm} &
\design{D}^{2n+3}_\xi := 
\raisebox{-2.2mm}{
\infer{\xi \vdash}{
	\infer{\vdash \xi1}{
		\infer*[\design{D}^{2n+1}_{\xi11}]{\xi11\vdash}{}
	}
}
}
\end{array}
$
\end{center}
\vspace{0,5em}

\noindent For all $n \geq 0$, we note $\underline{n}$ the maximal chronicle of the design $\design{D}_\xi^n$. We note $\underline{\dual{n}}$ the dual of $\underline{n}$. 
Remark that, for all $n,p,k \geq 0$, 
\[
\underline{2n} \coh \underline{2p}, \quad\underline{2n} \coh \underline{2(n+k)+1}\mbox{ and, if }n \neq p, \quad\underline{2n+1} \not\coh \underline{2p+1}.
\]
Hence for all $n \geq 0$, 
\[
\{\underline{0},\underline{2},\dots,\underline{2n},\underline{2n+1}\}^*\mbox{ is a design as well as }\{\underline{0},\underline{2},\underline{4},\dots,\underline{2n},\dots\}^*.
\]
Furthermore for all $n \geq 0$, $\{\underline{\dual{1}},\underline{\dual{3}},\dots,\underline{\dual{2n+1}},\underline{\dual{2n+2}}\}^*$ is a design as well as $\{\underline{\dual{0}}\}^*$ and $\{\underline{\dual{1}},\underline{\dual{3}},\dots,\underline{\dual{2n+1}},\dots\}^*$.\\
\vspace{1em}%
For example, 
$
\{\underline{0},\underline{1}\}^* = 
\raisebox{-2.2mm}{
\infer{
	{\xi \vdash}
}{
	\infer{
		{\vdash \xi 0}
	}{
		{\xi00 \vdash }
	}
	&
	\infer{
		{\vdash \xi1}
	}{
		{\xi10 \vdash }
	}
}
}
$
, 
$
\{\underline{0},\underline{2},\underline{3}\}^* = 
\raisebox{-2.2mm}{
\infer{
	{\xi \vdash}
}{
	\infer{
		{\vdash \xi0}
	}{
		{\xi00 \vdash }
	}
	&
	\infer{
		{\vdash \xi1}
	}{
		\infer {
			{\xi11 \vdash }
		}{
			\infer{
				{\vdash \xi110}
			}{
				{\xi1100 \vdash }
			}
			&
			\infer{
				{\vdash \xi111}
			}{
				{\xi1110 \vdash}
			}
		}
	}
}
}
$
and
\begin{flushright}
$
\{\underline{\dual{0}}\}^* = 
\raisebox{-2.2mm}{
\infer{
	{\vdash \xi}
}{
	\infer{
		{\xi0 \vdash }
	}{
		\infer[\daimon]
			{\vdash \xi00}
			{}
	}
}
}
$
,
$
\{\underline{\dual{1}},\underline{\dual{3}},\underline{\dual{4}}\}^* = 
\raisebox{-2.2mm}{
\infer{
	{\vdash \xi}
}{
	\infer{
		{\xi1 \vdash}
	}{
		\infer[\daimon]{\vdash \xi10}{}
		&
		\infer {
			{\vdash \xi11}
		}{
			\infer{
				{\xi111 \vdash}
			}{
				\infer[\daimon]{\vdash \xi1110}{}
				&
				\infer{
					{\vdash \xi1111}
				}{
					\infer{
						\xi11110 \vdash 
						}{
						\infer[\daimon]{\vdash \xi111100}{}
						}
				}
			}
		}
	}
}
}
$
\end{flushright}
Consider the two following designs:
\[
\design{E} = \{\underline{0},\underline{2},\underline{4},\dots,\underline{2n},\dots\}^*, 
\quad \design{F} = \{\underline{\dual{1}},\underline{\dual{3}},\dots,\underline{\dual{2n+1}},\dots\}^*,
\]
and the two following sets of designs:
\[
 \designset{E} = \{\{\underline{0},\underline{2},\dots,\underline{2n},\underline{2n+1}\}^* ; \forall n \geq 0\}, \quad\designset{F} = \{\{\underline{\dual{0}}\}^*\} \cup \{\{\underline{\dual{1}},\underline{\dual{3}},\dots,\underline{\dual{2n+1}},\underline{\dual{2n+2}}\}^* ; \forall n \geq 0\}.
\]
Observe that $V_\designset{E} = V_{\{\design{E}\} \cup \designset{E}}$ is the set of positive prefixes of paths $\underline{n}$ for all $n \geq 0$.
We have that $\{\design{F}\} \cup \designset{F} \subset |\designset{E}^\perp|$ and $\designset{F} \subset |(\{\design{E}\} \cup \designset{E})^\perp|$. 
The remainders of $|\designset{E}^\perp|$ and $|(\{\design{E}\} \cup \designset{E})^\perp|$ are obtained from designs of $\designset{F}$ by replacing proper positive chronicles $\chronicle{c}$ and their extensions by chronicles $\dual{\chronicle{c}}$. 
The sets of views of maximal cliques of $\dual{V_\designset{E}}$ are then exactly material designs of $\designset{E}$.
We note that $\design{F} \not\in |(\{\design{E}\} \cup \designset{E})^\perp|$.
Indeed, $\design{E}$ is not orthogonal to $\design{F}$ since their interaction does not end.
In other words, this is so because $(+,\xi,\{1\}), (-,\xi.1,\{1\}), \dots$ is an infinite sequence of actions in $\design{F}$ whose dual is included in $\design{E}$.
Excepting this case, the other sets of views of maximal cliques of $\dual{V_{\{\design{E}\} \cup \designset{E}}}$ are exactly material designs of $\{\design{E}\} \cup \designset{E}$.
\end{exa}

\begin{defi}
Let $\designset{E}$ be a set of designs based on $\beta$. Let $C$ be a
set of paths of designs of $\designset{E}$.
\begin{itemize}[label=$-$]
\item The set $C$ is {\bf finite-stable} when 
for all strictly increasing sequence $(\pathLL{p}_n)$ of elements of $C$, 
if $\bigcup \fullview{\pathLL{p}_n}$ is included in a design of $\designset{E}$ 
then the sequence $(\pathLL{p}_n)$ is finite.
\item The set $C$ is {\bf saturated} when
	 for all $\pathLL{q}$ prefix of an element of $C$ and
         $\pathLL{q}\kappa^+ \in V_\designset{E}$ with $\kappa^+ \neq
         \daimon$, we have that $\pathLL{q}\kappa^+$ is a prefix of an
         element of $C$.
\end{itemize}
\end{defi}

\begin{prop}\label{prop:clique_dual}
Let $\designset{E}$ be a set of designs based on $\beta$. Let $C \subset V_\designset{E}$ such that $C$ is finite-stable, saturated and $\dual{C}$ is a maximal clique of $\dual{V_\designset{E}}$. Then $\fullview{\dual{C}}$, the set of views of paths in $\dual{C}$, is a design that belongs to $|\designset{E^\perp}|$.
\end{prop}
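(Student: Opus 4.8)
The plan is to verify in turn the three things packaged into the conclusion: that $\fullview{\dual{C}}$ is a net of designs, that it is orthogonal to every design of $\designset{E}$ (hence lies in $\designset{E}^\perp$), and that it is material, so that it belongs to $|\designset{E^\perp}|$. The first point is immediate. Since $\dual{C}$ is a (maximal) clique of $\dual{V_\designset{E}}$, and since the dual of a positive visitable path is again positive-ended, every maximal path of $\dual{C}$ is positive; Proposition~\ref{prop:pathsTOnet} then applies verbatim and gives that $\fullview{\dual{C}}$ is a net of designs based on $\beta^\perp$.

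The heart of the argument is orthogonality. I would fix $\design{D} \in \designset{E}$ and trace the interaction of $\design{D}$ against $\fullview{\dual{C}}$, following $\pathLL{p} = \normalisationSeq{\design{D}}{\fullview{\dual{C}}}$ on $\design{D}$'s side and, dually, a path of $\fullview{\dual{C}}$ on the other side (a genuine path of the net by Proposition~\ref{prop:NormPath}). The aim is to show the play converges to $\daimon$, which requires excluding both failure and divergence. Failure is excluded using \emph{saturation} together with the clique structure: whenever $\design{D}$ is about to play a proper positive action $\kappa^+ \neq \daimon$ extending a positive-ended prefix $\pathLL{q}$ that is a prefix of an element of $C$ and for which $\pathLL{q}\kappa^+$ is still visitable in $\designset{E}$, saturation forces $\pathLL{q}\kappa^+$ to be a prefix of an element of $C$, so $\dual{C}$ supplies the matching action $\Overline{\kappa^+}$ in $\fullview{\dual{C}}$ and normalisation proceeds; maximality of the clique is what guarantees that the additive (negative) branching of $\fullview{\dual{C}}$ contains the responses needed for \emph{every} $\design{D}\in\designset{E}$, not just one. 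Divergence is excluded by \emph{finite-stability}: an infinite play would furnish a strictly increasing sequence $(\pathLL{p}_n)$ of positive-ended prefixes, elements of $C$, whose fullviews are all contained in the single design $\design{D}\in\designset{E}$, contradicting the finite-stability of $C$. Hence the interaction terminates, and since $\design{D}$ continues to answer all proper actions, termination can only be caused by a $\daimon$ played on the side of $\fullview{\dual{C}}$; this yields $\design{D}\perp\fullview{\dual{C}}$, and as $\design{D}$ is arbitrary, $\fullview{\dual{C}}\in\designset{E}^\perp$.

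Finally, for materiality I would show that every generating path $\dual{\pathLL{p}}\in\dual{C}$ is actually traversed by an interaction with some design of $\designset{E}$. Indeed $\pathLL{p}\in C\subseteq V_\designset{E}$, so $\pathLL{p}$ is visitable, hence a path of some $\design{D}\in\designset{E}$; since $\pathLL{p}\in\PoD{\design{D}}$, $\dual{\pathLL{p}}\in\dual{C}\subseteq\PoD{\fullview{\dual{C}}}$, and $\design{D}\perp\fullview{\dual{C}}$ by the previous step, the uniqueness of the normalisation path gives $\pathLL{p}=\normalisationSeq{\design{D}}{\fullview{\dual{C}}}$, so $\dual{\pathLL{p}}$ is visited. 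Consequently no chronicle of $\fullview{\dual{C}}$ can be erased while remaining orthogonal to all of $\designset{E}$, so $\fullview{\dual{C}}$ equals its own incarnation in $\designset{E}^\perp$ and belongs to $|\designset{E^\perp}|$. I expect the genuine difficulty to lie entirely in the orthogonality step: one must handle the turn-taking of the interaction carefully, maintain the invariant that the traced play stays compatible with $C$ while matching $\design{D}$ move by move, and correctly combine saturation (to follow visitable positive extensions), maximality of $\dual{C}$ (to guarantee the required negative responses for every design at once), and finite-stability (to force termination to a $\daimon$ rather than a non-terminating or failing play) — it is precisely the interplay of these three hypotheses that makes the conclusion hold.
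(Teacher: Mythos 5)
Your decomposition (net of designs; orthogonality to every design of $\designset{E}$; materiality) is the same as the paper's, and your materiality argument is in fact a slightly more direct route than the paper's: the paper argues by contradiction with a strict subnet, using that $\PoD{\design{E}}\cap C$ is a singleton, whereas you use determinism of normalization to conclude that every $\dual{\pathLL{p}}\in\dual{C}$ equals $\normalisationSeq{\fullview{\dual{C}}}{\design{D}}$ for a suitable $\design{D}\in\designset{E}$, so that $\fullview{\dual{C}}$ coincides with its own incarnation; given orthogonality, that works. The problem is the orthogonality step, which you yourself identify as carrying all the weight: it contains a genuine gap.

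Your exclusion of failure is conditioned on the clause ``and for which $\pathLL{q}\kappa^+$ is still visitable in $\designset{E}$'': you only invoke saturation for positive extensions already known to be visitable. Nothing in the hypotheses guarantees this visitability, and establishing it is precisely the technical core of the paper's proof. There, after using finite-stability to isolate a \emph{maximal} positive-ended prefix $\pathLL{q}\in\PoD{\design{E}}$ of an element of $C$, with $\pathLL{q}\kappa^-w\in C$ and, by Proposition~\ref{stabilite-neg}, $\pathLL{q}\kappa^-\in\PoD{\design{E}}$, one must show that the answer $\kappa^+$ of $\design{E}$ gives a path $\pathLL{q}\kappa^-\kappa^+\in V_\designset{E}$. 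This is done by explicitly constructing an auxiliary ``forcing'' net $\design{R}=\fullview{P}$, where $P$ consists of $\Overline{\pathLL{q}\kappa^-}$ together with every possible one-step negative deviation capped by $\daimon$; one then checks $\design{R}\in\designset{E}^\perp$ and $\normalisationSeq{\design{E}}{\design{R}}=\pathLL{q}\kappa^-\kappa^+$, so the extension is visitable, and only then does saturation yield the contradiction with maximality of $\pathLL{q}$ (or, if $\kappa^+=\daimon$, orthogonality directly). Without this construction your invariant ``the traced play stays compatible with $C$'' cannot be propagated: if the extension fails to be visitable, saturation is vacuous and the interaction may deadlock. Two further omissions in the same step: when $\fullview{\dual{C}}$ plays a positive proper action, the fact that $\design{D}$ possesses the dual negative action is exactly Proposition~\ref{stabilite-neg} (negative stability of visitable paths), not a consequence of maximality of the clique as you suggest; and you never treat the base case that the interaction meets $C$ at all --- the paper proves that $\PoD{\design{E}}$ contains a nonempty positive-ended prefix of an element of $C$ by using maximality of $\dual{C}$ (otherwise a path such as $\Overline{\kappa^+}\daimon$ could be adjoined to $\dual{C}$, contradicting maximality).
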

\begin{proof}
Suppose that $V_\designset{E} = \{\epsilon\}$, then $\designset{E}^\perp = \{\dai\}$ and $\dual{V_\designset{E}} = \{\daimon\}$ and the result follows.
In the following we consider that there are some non empty paths visitable in $\designset{E}$.
As $\dual{C}$ is a clique of $\dual{V_\designset{E}}$, $\fullview{\dual{C}}$ is a net of designs.
\begin{itemize}
\item The net $\fullview{\dual{C}}$ is in $\designset{E}^\perp$:\\
Let $\design{E}\in\designset{E}$. We prove that there is a path $\pathLL{q}\in \PoD{\design{E}} \cap C$. This way we prove that  $\fullview{\dual{C}}\perp\design{E}$: either $\pathLL{q}$ or $\dual{\pathLL{q}}$ ends with a daimon hence the interaction finishes well. We note $V_\design{E}^C$ the set of paths in $\PoD{\design{E}}$ that end with a positive action and  are prefixes of paths in $C$ (hence visitable in $\designset{E}$).
\begin{itemize}
\item Suppose that $V_\design{E}^C$ is not empty. Because $C$ is finite-stable and $\design{E} \in \designset{E}$, sequences of elements of $V_\design{E}^C$ cannot be infinite strictly increasing. Hence we may consider a path $\pathLL{q}$ maximal in $V_\design{E}^C$, \ie, maximal such that $\pathLL{q} \in \PoD{\design{E}}$ ends with a positive action and is a prefix of an element of $C$. Note that if $\pathLL{q}$ ends with a daimon then $\pathLL{q} \in C$.
Either $\pathLL{q}\in C$, hence $\pathLL{q} = \,<\!\!{\design{E}}\leftarrow{\fullview{\dual{C}}}\!\!>$, thus $\design{E}\perp \fullview{\dual{C}}$. 
Or there are a negative action $\kappa^-$ and a sequence $w$ such that $\pathLL{q}\kappa^-w\in C$. Hence, from proposition~\ref{stabilite-neg}, $\pathLL{q}\kappa^- \in \PoD{\design{E}}$.
Moreover $\pathLL{q}\kappa^-$ has an extension $\pathLL{q}\kappa^-\kappa^+$ in $\PoD{\design{E}}$. 
We construct now a net in $\designset{E^\perp}$ that `forces' the interaction with $\pathLL{q}\kappa^-$:

\begin{tabular}{lcl}
let $P$ & $=$ & $\{\pathLL{p}\kappa'^-\daimon ~; \pathLL{p}$ is a prefix of $\overline{\pathLL{q}}$ and $\pathLL{p}\kappa'^-$ is not a prefix of $\overline{\pathLL{q}}\}$ \\
	&$\cup$& $\{\overline{\pathLL{q}\kappa^-}\kappa'^-\daimon ~; \pathLL{q}\kappa^-\overline{\kappa'^-}$ is a path of a design in $\designset{E}\}$
\end{tabular}

 and consider the net $\design{R}=\fullview{P}$.
We have that $\design{R}$ is in $\designset{E^\perp}$: $\overline{\pathLL{q}\kappa^-} \in \dual{V_{\designset{E}}}$, hence a normalisation path between a design of $\designset{E}$ and $\design{R}$ either follows $\pathLL{q}\kappa^-$ or diverges from it on a negative action, in the two cases interaction finishes well as $\design{R}$ is ``completed'' by means of all possible negative actions, each of them followed by a daimon. 
Then $\normalisationSeq{\design{E}}{\design{R}}=\pathLL{q}\kappa^-\kappa^+$, hence $\pathLL{q}\kappa^-\kappa^+ \in V_\designset{E}$.\\
If $\kappa^+ = \daimon$ then $\pathLL{q}\kappa^-\daimon \in \PoD{\design{E}}$ and $\overline{\pathLL{q}}\overline{\kappa^-} \in \PoD{\fullview{\dual{C}}}$, thus $\design{E}\perp \fullview{\dual{C}}$. 
Otherwise, as $C$ is saturated, $\pathLL{q}\kappa^-\kappa^+$ is a prefix of an element of $C$  then we have a contradiction with the fact that $\pathLL{q}$ is maximal in $V_\design{E}^C$.

\item We prove now that $V_\design{E}^C$ is not empty.\\
Suppose that the base of $\designset{E}$ is positive. Let $\kappa^+$ be the first action of $\design{E}$, the path $\kappa^+$ is visitable in $\designset{E}$.
If there is no path in $\dual{C}$ beginning with this action $\overline{\kappa^+}$, the path $\overline{\kappa^+}\daimon$ is coherent with all the paths of $\dual{C}$ against the maximality of $\dual{C}$. 
Hence either $\overline{\kappa^+}\daimon \in \dual{C}$ or  there is a  path $\overline{\kappa^+\kappa^-}w$ in $\dual{C}$ extending $\overline{\kappa^+}$ and $\kappa^+ \in V_\design{E}^C$. \\
 Suppose now that the base of $\designset{E}$ is negative.
 Either $\dual{C} = \{\daimon\}$ and the result follows, or all paths in $\dual{C}$ begin with the same proper positive action $\overline{\kappa^-}$. 
Hence there are sequences $w$ such that $\kappa^-w\in V_\designset{E}$ and then, for all $\design{D}\in\designset{E}$, $\kappa^-\in\PoD{\design{D}}$, in particular $\kappa^-\in\PoD{\design{E}}$. 
If $\kappa^-\daimon \in \PoD{\design{E}}$, then $\design{E}\perp \fullview{\dual{C}}$. 
Otherwise there is a unique action $\kappa^+$ such that $\kappa^-\kappa^+ \in \PoD{\design{E}}$. Furthermore, $\kappa^-\kappa^+ \in V_\designset{E}$: since $\overline{\kappa^-} \in \overline{V_\designset{E}}$ there is at least a net in $\designset{E}^\perp$ containing $\overline{\kappa^-}$ as unique first action and then, since this net is orthogonal to $\design{E}$ it has to contain a path extending $\overline{\kappa^-\kappa^+}$. 
If there is no path in $\dual{C}$  extending  $\overline{\kappa^-}$ with the action ${\overline{\kappa^+}}$, the path $\overline{\kappa^-\kappa^+}\daimon$ is coherent with all the paths of $\dual{C}$ against the maximality of $\dual{C}$. Or there is a (proper) path $\overline{\kappa^-\kappa^+}w$ in $\dual{C}$ extending $\overline{\kappa^-\kappa^+}$ and $\kappa^-\kappa^+ \in V_\design{E}^C$. 
\end{itemize}

\item The net $\fullview{\dual{C}}$ is material in $\designset{E}^\perp$.\\
Suppose there exists a strict subnet $\design{R}$ of $\fullview{\dual{C}}$ that is in $\designset{E}^\perp$. Hence there exists a path $\dual{\pathLL{p}} \in \dual{C}$ and $\dual{\pathLL{p}} \not\in \PoD{\design{R}}$. As $\pathLL{p} \in V_\designset{E}$, there exists a design $\design{E} \in \designset{E}$ and $\pathLL{p} \in \PoD{\design{E}}$. Note that there exists a unique path in $\PoD{\design{E}} \cap C$ as $C$ is an anticlique and $\design{E}$ is a design. As $\design{R} \in \designset{E}^\perp$, $\pathLL{q} = \normalisationSeq{\design{E}}{\design{R}} \in V_\designset{E}$.
Hence $\pathLL{q} \in \PoD{\design{E}}$. Moreover, $\dual{\pathLL{q}} \in \PoD{\design{R}}$, thus $\fullview{\,\dual{\pathLL{q}}\,} \subset \fullview{\dual{C}}$, hence coherent with paths in $\dual{C}$, hence $\dual{\pathLL{q}} \in \dual{C}$ as $\dual{C}$ is a maximal clique. But $\pathLL{p} \neq \pathLL{q}$ and the two paths belong to $\PoD{\design{E}} \cap C$. Contradiction.\qedhere
\end{itemize}
\end{proof}

The next proposition shows that one exactly recovers $|\designset{E}^\perp|$, the incarnation of $\designset{E}^\perp$, from the set of maximal cliques of $\dual{V_\designset{E}}$ which are dual to finite-stable subsets of $V_\designset{E}$.

\begin{prop}\label{prop:cs_clique_dual}
Let $\designset{E}$ be a set of designs of the same base. 
Let $\design{R} \in |\designset{E}^\perp|$ then there exists $C \subset V_\designset{E}$ such that $C$ is finite-stable and saturated and $\dual{C}$ is a maximal clique of $\dual{V_\designset{E}}$ and $\fullview{\dual{C}} = \design{R}$.
\end{prop}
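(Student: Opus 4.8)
The plan is to exhibit the canonical candidate directly and check that it already enjoys all four properties. Concretely, I would set $\dual{C} := \dual{V_\designset{E}} \cap \PoD{\design{R}}$, equivalently $C = \{\pathLL{p} \in V_\designset{E} ; \dual{\pathLL{p}} \in \PoD{\design{R}}\}$. Since $\design{R} \in \designset{E}^\perp$, for every $\design{D} \in \designset{E}$ one has $\design{D} \perp \design{R}$, so $\normalisationSeq{\design{D}}{\design{R}}$ is a visitable path of $\design{D}$ whose dual $\normalisationSeq{\design{R}}{\design{D}}$ is a path of $\design{R}$; hence $\normalisationSeq{\design{R}}{\design{D}} \in \dual{C}$. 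Invoking the uniqueness of the normalization path recorded after Proposition~\ref{prop:NormPath} — the only $\pathLL{p} \in \PoD{\design{D}}$ with $\dual{\pathLL{p}} \in \PoD{\design{R}}$ is $\normalisationSeq{\design{D}}{\design{R}}$ — I would first observe that in fact $\dual{C} = \{\normalisationSeq{\design{R}}{\design{D}} ; \design{D} \in \designset{E}\}$. As $\dual{C} \subset \PoD{\design{R}}$ and $\design{R}$ is a design, $\dual{C}$ is a clique of $\dual{V_\designset{E}}$, and since $\design{R}$ is material, $\fullview{\dual{C}} = \bigcup_{\design{D} \in \designset{E}} \normalisationDes{R}{D} = \design{R}$, which settles the condition $\fullview{\dual{C}} = \design{R}$.

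\emph{Finite-stability} is then immediate from the same uniqueness: if $(\pathLL{p}_n)$ is a strictly increasing sequence in $C$ with $\bigcup \fullview{\pathLL{p}_n}$ contained in a single $\design{D} \in \designset{E}$, then each $\pathLL{p}_n$ lies in $\PoD{\design{D}}$ and has its dual in $\PoD{\design{R}}$, so every $\pathLL{p}_n$ equals $\normalisationSeq{\design{D}}{\design{R}}$; the $\pathLL{p}_n$ cannot be pairwise distinct, hence the sequence is finite.

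\emph{Maximality} is the cleanest step, and I would prove it by the anticlique property recorded in the proposition just preceding Example~\ref{exa:anticlique} (duals of distinct coherent visitable paths are incoherent). Given $\dual{\pathLL{q}} \in \dual{V_\designset{E}}$ coherent with every element of $\dual{C}$, I pick $\design{D} \in \designset{E}$ with $\pathLL{q} \in \PoD{\design{D}}$ and set $\pathLL{r} = \normalisationSeq{\design{D}}{\design{R}}$, so $\dual{\pathLL{r}} \in \dual{C}$. Since $\pathLL{q}$ and $\pathLL{r}$ are both paths of the design $\design{D}$ they are coherent; were they distinct, the cited proposition would give $\dual{\pathLL{q}} \not\coh \dual{\pathLL{r}}$, contradicting coherence of $\dual{\pathLL{q}}$ with $\dual{\pathLL{r}} \in \dual{C}$. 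Hence $\pathLL{q} = \pathLL{r}$ and $\dual{\pathLL{q}} \in \dual{C}$, so $\dual{C}$ is a maximal clique.

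\emph{Saturation} is what I expect to be the main obstacle, as it is the only place forcing a passage between different designs of $\designset{E}$ and different orthogonal nets. Let $\pathLL{q}$ be a prefix of some $\pathLL{p} = \normalisationSeq{\design{D}}{\design{R}} \in C$ and suppose $\pathLL{q}\kappa^+ \in V_\designset{E}$ with $\kappa^+ \neq \daimon$, witnessed by $\design{D}'' \in \designset{E}$ and a net $\design{R}''$ with $\pathLL{q}\kappa^+ = \normalisationSeq{\design{D}''}{\design{R}''}$. The key observation is that the Proponent answer $\kappa^+$ after $\pathLL{q}$ is determined by $\design{D}''$ alone: by prefix-closedness (Proposition~\ref{prop:PathsNet}) $\pathLL{q} \in \PoD{\design{D}''}$, and since $\design{D}'' \perp \design{R}$ the normalization $\normalisationSeq{\design{D}''}{\design{R}}$ reproduces $\pathLL{q}$ step by step — the Opponent moves being supplied by $\design{R}$ exactly as along $\normalisationSeq{\design{D}}{\design{R}}$, of which $\pathLL{q}$ is a prefix — and then $\design{D}''$ necessarily plays $\kappa^+$. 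Thus $\pathLL{q}\kappa^+$ is a prefix of $\normalisationSeq{\design{D}''}{\design{R}} \in C$. The delicate bookkeeping here is to check that the interaction indeed reaches the end of $\pathLL{q}$ without either side giving up; this holds because $\pathLL{q}$ ends on an Opponent action just played by $\design{R}$, to which $\design{D}''$ responds with the proper action $\kappa^+ \neq \daimon$. Collecting the four verifications yields the required $C$.
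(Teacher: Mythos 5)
Your construction is exactly the paper's: your $\dual{C} = \dual{V_\designset{E}} \cap \PoD{\design{R}}$ collapses, by the uniqueness remark following Proposition~\ref{prop:NormPath}, to $\{\normalisationSeq{\design{R}}{\design{D}} ; \design{D} \in \designset{E}\}$, which is the set the paper starts from; your clique argument, your maximality argument (the anticlique proposition combined with pairwise coherence of paths lying in a single design) and your saturation argument (determinism of the normalization of the witnessing design against $\design{R}$, using $\overline{\pathLL{q}} \in \PoD{\design{R}}$) are the paper's arguments almost word for word, and $\fullview{\dual{C}} = \design{R}$ by materiality is also how the paper opens its proof.

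The genuine gap is in finite-stability. Your uniqueness argument treats only sequences whose terms are themselves \emph{elements} of $C$, i.e.\ the literal wording of the definition; but that reading is vacuous for any anticlique: coherence is reflexive and every pair of prefixes of a path is covered by that reflexivity, so a prefix of a path is always coherent with it, hence two distinct \emph{incoherent} paths are never comparable, and every strictly increasing sequence of elements of an anticlique has at most one term. Your own conclusion that all terms equal $\normalisationSeq{\design{D}}{\design{R}}$ is a symptom of this vacuity. The property the paper's proof actually establishes — and the one needed for the companion Proposition~\ref{prop:clique_dual} to be true — concerns strictly increasing sequences of \emph{prefixes} of paths of $C$. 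Example~\ref{example-infini} is precisely the discriminating case: there the set $C = \{\underline{1},\underline{3},\underline{5},\dots\}$ is an anticlique, hence finite-stable in your vacuous sense, its dual is a maximal clique, yet $\fullview{\dual{C}} = \design{F}$ is excluded from the incarnation of the dual exactly because of an infinite increasing chain of prefixes whose views all fit inside the design $\design{E}$. For such prefixes your argument breaks down: a prefix of $\normalisationSeq{\design{D}_n}{\design{R}}$, with $\design{D}_n$ varying over $\designset{E}$, is in general not itself a normalization path against $\design{R}$, so uniqueness gives nothing. The repair is short and is the paper's: each prefix $\pathLL{p}_n$ satisfies $\overline{\pathLL{p}_n} \in \PoD{\design{R}}$ by prefix-closure and $\pathLL{p}_n \in \PoD{\design{D}}$ by hypothesis, so by determinism $\pathLL{p}_n$ is a prefix of the single path $\normalisationSeq{\design{D}}{\design{R}}$, which is finite since $\design{R} \perp \design{D}$; an infinite strictly increasing chain would make that normalization non-terminating.
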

\begin{proof}
As $\design{R} \in |\designset{E}^\perp|$, we know that $\design{R} = \fullview{\{\normalisationSeq{\design{R}}{\design{D}} ; \design{D} \in \designset{E}\}}$. Let us consider $C=\dual{\dual{C}}$ where $\dual{C} = \{\normalisationSeq{\design{R}}{\design{D}} ; \design{D} \in \designset{E}\}$.
\begin{itemize}
\item By construction $\dual{C}$ is a subset of $\dual{V_\designset{E}}$. 
\item As $\design{R}$ is a design and $\dual{C} \subset\PoD{\design{R}}$, $\dual{C}$ is a clique.
\item Suppose $\dual{C}$ is not a maximal clique of $\dual{V_\designset{E}}$: there exists a path $\pathLL{p} \in \dual{V_\designset{E}}$ such that $\forall \pathLL{q} \in \dual{C}, \pathLL{p} \coh \pathLL{q}$ and $\pathLL{p} \not\in \dual{C}$. The path $\pathLL{p} \in \dual{V_\designset{E}}$ thus there exists $\design{D} \in \designset{E}$ and $\design{R}' \in \designset{E}^\perp$ and $\pathLL{p} = \normalisationSeq{\design{R}'}{\design{D}}$. Moreover there exists $\pathLL{q} \in \dual{C}$ and $\pathLL{q} = \normalisationSeq{\design{R}}{\design{D}}$. As $\pathLL{p} \coh \pathLL{q}$, then $\dual{\pathLL{p}} \not\coh \dual{\pathLL{q}}$. But these two paths $\dual{\pathLL{p}}$ and $\dual{\pathLL{q}}$ belong to the same design $\design{D}$. Contradiction. Then $\dual{C}$ is a maximal clique of $\dual{V_\designset{E}}$.
\item Suppose that there exists a strictly increasing sequence $(\pathLL{p}_n)$ of prefixes of paths of $C$ and $\design{D} \in \designset{E}$ such that $\bigcup_n \fullview{\pathLL{p}_n} \subset \design{D}$. 
Note that $(\overline{\pathLL{p}_n})$ is a sequence of prefixes of paths of $\dual{C}$. If the sequence is infinite strictly increasing, the normalization between $\views{\dual{C}}$ and $\design{D}$ does not finish, contradicting the fact that normalization ends between $\design{R}$ and $\design{D}$.
\item $C$ is saturated: Suppose that $\pathLL{q}$ is a prefix of an element of $C$ and $\pathLL{q}\kappa^+ \in V_\designset{E}$ with $\kappa^+ \neq \daimon$. There exists a design $\design{E} \in \designset{E}$ such that $\pathLL{q}\kappa^+ \in \PoD{\design{E}}$. Furthermore $\overline{\pathLL{q}} \in \PoD{\design{R}}$. As normalisation is deterministic and $\design{R} \perp \design{E}$, there exists a sequence $w$ such that $\normalisationSeq{\design{E}}{\design{R}} = q\kappa^+w$. Hence $\pathLL{q}\kappa^+$ is a prefix of an element of $C$.\qedhere
\end{itemize}
\end{proof}

\begin{prop}\label{prop:incarnation_by_dual}
Let $\designset{E}$ be a set of designs based on $\beta$. The incarnation of the behaviour generated by $\designset{E}$ is determined by applying the following steps:
\begin{enumerate}
\item Consider the set of paths of $\designset{E}$ which may be the visitable paths of ${\designset{E}}$ (by means of proposition \ref{stabilite-neg}).
\item Establish $V_{\designset{E}}$, the set of visitable paths of $\designset{E}$, by keeping from the previous set the paths passing the proposition~\ref{prop:visitable-carac} .
\item Obtain $|\designset{E}^\perp|$ from the set of maximal cliques $\dual{C}$ of $\dual{V_\designset{E}}$ such that $C$ is finite-stable and saturated.
\item Establish $V':=V_{|\designset{E}^\perp|}$  (again by means of propositions \ref{stabilite-neg} and~~\ref{prop:visitable-carac}), the set of visitable paths of $|\designset{E}^\perp|$.
\item Obtain $|\designset{E}^{\perp\perp}|$ from the set of maximal cliques $\dual{C'}$ of $\dual{V'}$ such that $C'$ is finite-stable and saturated.
\end{enumerate}
\end{prop}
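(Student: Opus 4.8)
The plan is to read the statement as a recipe and to verify it by assembling the results established above, the whole argument resting on one algebraic identity together with the observation that steps~1--3 constitute an \emph{engine} which, on input a set $\designset{X}$ of designs (or of nets) of the same base, returns exactly $|\designset{X}^\perp|$. First I would record the identity $|\designset{E}^{\perp\perp}| = |(|\designset{E}^\perp|)^\perp|$: it is immediate from proposition~\ref{prop:charac_incarnation_dualdual}, since $\designset{E}^{\perp\perp} = |\designset{E}^\perp|^\perp$ gives $|\designset{E}^{\perp\perp}| = ||\designset{E}^\perp|^\perp| = |(|\designset{E}^\perp|)^\perp|$. Consequently, once the engine computing $|\designset{X}^\perp|$ is justified, it suffices to run it twice: once on $\designset{E}$ to produce $|\designset{E}^\perp|$, and once on $\designset{X} = |\designset{E}^\perp|$ to produce $|(|\designset{E}^\perp|)^\perp| = |\designset{E}^{\perp\perp}|$, the incarnation of the behaviour generated by $\designset{E}$.

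Then I would justify the engine, \ie\ that steps~1--3 correctly yield $|\designset{E}^\perp|$. Step~1 applies proposition~\ref{stabilite-neg}, whose two conditions (that $\dual{\pathLL{p}}$ be a path, and that negative actions be stable across the designs of $\designset{E}$) are \emph{necessary} for visitability; hence this step safely discards candidates while retaining a superset of $V_{\designset{E}}$. Step~2 applies proposition~\ref{prop:visitable-carac}: a retained positive-ended path $\pathLL{p}$ belongs to $V_{\designset{E}}$ iff $\fullview{\dual{\pathLL{p}}}^c \in \designset{E}^\perp$, a test carried out against the designs of $\designset{E}$; this pins down $V_{\designset{E}}$ exactly. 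Step~3 is the conjunction of propositions~\ref{prop:clique_dual} and~\ref{prop:cs_clique_dual}: the former gives soundness, namely that for any finite-stable, saturated $C \subset V_{\designset{E}}$ with $\dual{C}$ a maximal clique of $\dual{V_{\designset{E}}}$ the net $\fullview{\dual{C}}$ lies in $|\designset{E}^\perp|$, while the latter gives completeness, namely that every $\design{R} \in |\designset{E}^\perp|$ arises as $\fullview{\dual{C}}$ for such a $C$. Thus the family of nets produced at step~3 is precisely $|\designset{E}^\perp|$.

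Finally I would treat the second pass, steps~4--5. Here the input is no longer a set of designs but the set of nets $|\designset{E}^\perp|$, so I would invoke the fact, already noted after proposition~\ref{prop:completed_subdesign} and propagated through the visitability and clique results, that these statements hold verbatim with nets of designs in place of designs. Running steps~1--2 on $|\designset{E}^\perp|$ produces $V' = V_{|\designset{E}^\perp|}$, and running step~3 on it produces, by correctness of the engine applied to $\designset{X} = |\designset{E}^\perp|$, the set $|(|\designset{E}^\perp|)^\perp|$; by the identity of the first paragraph this equals $|\designset{E}^{\perp\perp}|$, which closes the argument.

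The main obstacle, and the only point that is not a straight citation, is this transfer of the engine from sets of designs to sets of nets in the second pass: one must check that the notions of visitable path, of dual, of finite-stability, of saturation, and of maximal clique behave on a net-based base $\beta^\perp$ exactly as on a design-based base, so that propositions~\ref{stabilite-neg}--\ref{prop:cs_clique_dual} remain applicable. Since normalization of a closed cut-net is defined net-wise and the cited propositions were either stated or remarked to hold for nets, I expect this to reduce to a bookkeeping check on bases rather than to a new argument.
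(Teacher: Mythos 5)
Your proposal is correct and follows essentially the same route as the paper: the paper's own proof simply cites propositions~\ref{stabilite-neg}, \ref{prop:visitable-carac}, \ref{prop:clique_dual} and~\ref{prop:cs_clique_dual} as computing the incarnation of the dual, and then invokes proposition~\ref{prop:charac_incarnation_dualdual} to justify running that process twice. Your additional care about transferring the machinery from sets of designs to the set of nets $|\designset{E}^\perp|$ in the second pass is a point the paper leaves implicit (it only remarks that proposition~\ref{prop:completed_subdesign} extends to nets), so flagging it as a bookkeeping check is a fair and slightly more scrupulous rendering of the same argument.
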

\begin{proof}
Propositions~\ref{stabilite-neg},~\ref{prop:visitable-carac},~\ref{prop:clique_dual} and \ref{prop:cs_clique_dual} allow for computing the incarnation of the {\em dual} of a set of designs. Hence the incarnation of the behaviour generated by $\designset{E}$ is established by using twice this process (proposition~\ref{prop:charac_incarnation_dualdual}).
\end{proof}

 \rem Our construction freely captures the part of incarnation corresponding to a ``daimon closure'': 
$\dual{V_\designset{E}}$ is already closed under daimon as $V_\designset{E}$  is positive-prefix closed.


\section{Conclusion}

Incarnation is an original concept introduced in Ludics by Girard. Indeed the denotation of a formula is a behaviour, a set of designs. The incarnation of a behaviour is the subset of designs that are fully used by interaction. In this paper, we addressed a more general question: is it possible to compute the incarnation of the behaviour generated by a set $\designset{E}$ of designs of the same base, without necessarily computing explicitly this behaviour?
What we could rephrase by saying that we want to isolate, among the information contained in the designs of $\designset{E}$, elementary components that it would have been enough to reorganize to calculate this incarnation. 
Our characterization of what is a ``visitable path in a set of designs'' is a key ingredient for that purpose (propositions~\ref{stabilite-neg} and ~\ref{prop:visitable-carac}). 
From this result, we were able to characterize designs in the incarnation of the orthogonal $\designset{E}^\perp$ as maximal cliques of visitable paths such that their dual are finite-stable and saturated (propositions~\ref{prop:clique_dual} and~\ref{prop:cs_clique_dual}).
We deduced then the incarnation as a twofold process (proposition~\ref{prop:incarnation_by_dual}).

In fact, the notion of incarnation suggests implicitly and indirectly the hope of finding a kind of ``basis'' (as defined in vector spaces): the incarnation of a behaviour contains enough information to find all its elements, it also avoids some redundancy since it contains only the minimal designs with respect to inclusion. Can we go further? A first step is easy to cross: we could consider a concept a little ``finer'' than that of incarnation by taking only designs minimal with respect to the relation $\preccurlyeq$ recalled in section~\ref{sec:Incarnation} and that serves for the separation theorem. In fact the set $\{\design{D} \in \designset{E}^{\perp\perp} ; \forall \design{E} \in \designset{E}^{\perp\perp}$ if $\design{E} \preccurlyeq \design{D}$ then $\design{E} = \design{D}\}$ is a subset of $|\designset{E}^{\perp\perp}|$. 
And it suffices to take its closure by daimon to find the usual incarnation.

\bibliographystyle{plain}
\bibliography{Ludics,GameSemanticsLL,GdI}

\end{document}